\crefname{sub}{subroutine}{subroutines}
\renewenvironment{figure}[1][1]{\trivlist\item}{\endtrivlist}
\newtheoremstyle{my}
{3ex}
{1.5ex}
{\itshape}
{}
{\bfseries}
{}
{1ex}
{\thmname{\textbf{#1}}\thmnumber{ \textbf{#2}}\textbf{:}\thmnote{ (#3)}}%
\newtheoremstyle{my2}
{2ex}
{1.5ex}
{}
{}
{\bfseries}
{}
{1ex}
{\thmname{\textbf{#1}}\thmnumber{ \textbf{#2}}\textbf{:}\thmnote{ (#3)}}%
\theoremstyle{my}
\newtheorem{definition}{Definition}[section]
\crefname{observation}{observation}{observations}
\newtheorem{theorem}{Theorem}[section]
\newtheorem{lemma}{Lemma}[section]
\theoremstyle{my2}
\newtheorem{algorithm}{Algorithm}[section]
\newtheorem{subroutine}{Subroutine}[section]
\newcommand{\I}{\ensuremath{\mathcal{I}}\xspace}
\newcommand{\X}{\ensuremath{\mathcal{X}}\xspace}
\newcommand{\Y}{\ensuremath{\mathcal{Y}}\xspace}
\newcommand{\Z}{\ensuremath{\mathcal{Z}}\xspace}
\renewcommand{\H}{\ensuremath{\mathcal{H}}\xspace}
\renewcommand{\S}{\ensuremath{\mathcal{S}}\xspace}
\renewcommand{\O}{\ensuremath{\mathcal{O}}\xspace}
\newcommand{\U}{\ensuremath{\mathcal{U}}\xspace}
\newcommand{\V}{\ensuremath{\mathcal{V}}\xspace}
\newcommand{\T}{\ensuremath{\mathcal{T}}\xspace}
\newcommand{\CNOT}{\ensuremath{\mathcal{CNOT}}\xspace}
\newcommand{\CZ}{\ensuremath{\mathcal{CZ}}\xspace}
\newcommand{\CV}{\ensuremath{\mathcal{CV}}\xspace}
\newcommand{\SWAP}{\ensuremath{\mathcal{SWAP}}\xspace}
\newcommand{\MAJ}{\ensuremath{\mathcal{MAJ}}\xspace}
\title{\textbf{Quantum Simulation Logic, Oracles, and the Quantum Advantage}}
\author{Niklas Johansson,$^*$ Jan-Åke Larsson$^{\dagger}$\\\small
\textit{Institutionen för Systemteknik, Linköpings Universitet, 581 83
Linköping, SWEDEN}} \date{}
\definecolor{Green}{rgb}{.1,.5,.1}
\newcommand{\ket}[1]{\ensuremath{\left\vert{#1}\right\rangle}}
\newcommand{\bra}[1]{\ensuremath{\left\langle{#1}\right\vert}}
\newcommand{\inner}[2]{\ensuremath{\left\langle{#1}\middle\vert{#2}\right\rangle}}
\newcommand{\average}[2]{\ensuremath{\left\langle{#1}\middle\vert{#2}\middle\vert{#1}\right\rangle}}
\renewcommand{\outer}[2]{\ensuremath{\left\vert{#1}\middle\rangle\middle\langle{#2}\right\vert}}
\DeclareMathOperator{\hypen}{-}
\begin{document}

\maketitle
\begin{quote}
	Query complexity is a common tool for comparing quantum and classical computation, and it has produced many examples of how quantum algorithms differ from classical ones. 
	Here we investigate in detail the role that oracles play for the advantage of quantum algorithms. 
	We do so by using a simulation framework, Quantum Simulation Logic (QSL), to construct oracles and algorithms that solve some problems with the same success probability and number of queries as the quantum algorithms. 
	The framework can be simulated using only classical resources at a constant overhead as compared to the quantum resources used in quantum computation. 
	Our results clarify the assumptions made and the conditions needed when using quantum oracles. 
	Using the same assumptions on oracles within the simulation framework we show that for some specific algorithms, like the \textsc{Deutsch-Jozsa} and \textsc{Simon's} algorithms, there simply is no advantage in terms of query complexity. 
  This does not detract from the fact that quantum query complexity provides examples of how a quantum computer can be expected to behave, which in turn has proved useful for finding new quantum algorithms outside of the oracle paradigm, where the most prominent example is \textsc{Shor's} algorithm for integer factorization.

	\hfill PACS number(s): 03.67.Ac, 03.67.Lx
\end{quote}

\tableofcontents

\vspace{2cm}
\begin{multicols}{2}

\noindent One of the main problems in quantum information theory is to
understand the resources needed for quantum computation. Here, a resource is a
property used to enable the computational speed-up, usually described as
something consumed by the computation, but it could also be a static resource:
a property present in one computational model but absent in another. Some
properties that have been proposed as resources for quantum computation are
interference~\cite{Feynman1982}, entanglement~\cite{Einstein1935},
nonlocality~\cite{Bell1964},
contextuality~\cite{Kochen1967,Kleinmann2011,Howard2014}, and even
coherence~\cite{Baumgratz2014,Hillery2016}. 
It could also be that different algorithms makes use of
different resources, or different combinations of resources, which further
motivates resource studies in the context of specific algorithms.

From our perspective contextuality is clearly a top contender. If contextuality
is the intrinsic property that gives quantum computers their advantage one would
expect to find evidence of this even in simple algorithms. Our first attempt to
study this started with the \textsc{Deutsch-Jozsa} algorithm
\cite{Deutsch1985,Deutsch1992}, that determines whether a given function is
constant or balanced. This was the first proposed quantum algorithm that showed
an exponential speed-up compared to the best-known classical algorithm, although
this speed-up only holds when requiring the solution to be deterministic. There
are existing results on resources for this algorithm, for example,
\textcite{Collins1998} shows that for one and two qubits there is one
implementation where entanglement is not needed. While this does not rule out
implementations where entanglement is present, it does show that entanglement is
not \emph{strictly} needed for the algorithm to work. The question remains if
contextuality is needed. Notable is that for one- and two-qubit input, the
oracles can be implemented with a stabilizer circuit, and while stabilizer
circuits can be efficiently simulated \cite{Gottesman1998}, they can also
produce phenomena like nonlocal and, in particular, contextual correlations. A
similar framework, closely related to the stabilizer subtheory, is
Spekkens' model~\cite{Spekkens2007} which contains similar transformations, but
does not produce nonlocal and contextual correlations. Spekkens' model gives the
same behavior as the stabilizer subtheory of quantum mechanics for one- and
two-qubit input functions.

Our question at this point was simply: what needs to be added to Spekkens' model
to enable three-qubit inputs (or larger). To our surprise, what was needed was
to add a new gate to Spekkens' model \parencite{Johansson2017}, without allowing
for nonlocal correlations and without making it contextual. This addition
enables all balanced functions on an arbitrarily large input, not only a subset
thereof. In our paper, we show that for each problem instance there exists a
realization in this new extended model, so that all balanced and constant
functions at any size of the input can be reproduced. In addition, we show that
the standard Deutsch-Jozsa quantum algorithm, using the gates of the model,
solves the problem with a single query.

Since nonlocal and contextual correlations are manifestly absent from the
framework, the possibility of stating and solving each problem instance within
the model, rules out these properties from being enabling properties. As before,
this does not rule out implementations where entanglement or contextuality is
present, but shows that they are not \emph{strictly} needed.

Further, the model can be efficiently simulated on a classical probabilistic
Turing machine, i.e., the algorithm is efficiently simulatable in a classical
probabilistic Turing machine --- with constant overhead --- meaning that there
is no speed-up when comparing to the algorithm run on a quantum Turing machine.

Having established this framework, we turned to a second algorithm,
\textsc{Simon's} algorithm \cite{Simon1994,Simon1997} that finds the generator
of a hidden order-two subgroup. This algorithm is usually portrayed as the
poster-child for quantum \emph{exponential} speed-up, and is regarded as
stronger evidence since the speed-up remains even if the solution is accepted
with a bounded error-probability. Also here our framework reproduces the
behavior of \textsc{Simon's} quantum algorithm  \cite{Johansson2017}. Thus,
contextuality is not needed, and the algorithm is efficiently simulatable in a
classical probabilistic Turing machine.

Both algorithms assume
access to the function through a black-box oracle, meaning that the solver only
has access to the function's input and output, not the internal structure of the
oracle.

The framework used in \cite{Johansson2017} is limited and puts hard structural
constraints on the setup. In this paper, we extend the framework, allowing for a
less strict setup resulting in an approximation that gives predictions closer to
those of quantum theory. This builds further on the main idea of
\cite{Johansson2017} that the resources	needed is the ability \emph{to choose}
to store, process, and retrieve information from an additional
information-carrying degree-of-freedom of the physical system.

We begin with some preliminaries (\cref{sec:Preliminaries}) and a thorough
introduction to our framework (\cref{sec:QSL}). We then start with a simple
oracle problem called the \textsc{Bernstein-Vazirani} problem (\cref{sec:Bernstein-Vazirani}), for which our model completely reproduces the results of the
quantum algorithm. We also show that our approach works even when
\textsc{Deutsch-Jozsa} is considered both as a promise problem and as a decision problem  (see
\cref{sec:Deutsch-Jozsa}). For \textsc{Grover Search}
there is a speed-up, but not as much as in the quantum case
(\cref{sec:Grover's}), and also for \textsc{Simon's} problem our model completely reproduces the results of the
quantum algorithm (\cref{sec:Simon's}).

A more pressing question that these results point to, is whether the oracle model really can produce conclusive evidence for a separation between quantum and classical computation. 
We tend to believe it cannot, and definitely not for \textsc{Deutsch-Jozsa} and
\textsc{Simon's} problems, see the discussion in \cref{sec:Oracles}.	
However, quantum algorithms that produce bounds for quantum query complexity do provide us with examples of how a quantum computer could be expected to behave, and inspire us to invent new approaches to non-relativized computational problems. \textsc{Shor's} algorithm\cite {Shor1994,Shor1999} is one such example.

We do point out that there is no principal reason prohibiting our model (or a similar one) to have an effect outside of the oracle paradigm, and that our model is best compared with a construction suffering from systematic errors. In this spirit, we compare our framework with a current state-of-the-art implementation of \textsc{Shor's} algorithm (\cref{sec:Shor's}). 

\section{Preliminaries}\label{sec:Preliminaries}

In this section we go through some relevant theory helpful to understand the content of this paper, namely the concepts of Turing machines and oracles, and also a brief introduction to quantum computation. 

\subsection{Turing Machines}

The concepts of Probabilistic Turing Machines (PTMs) and
Quantum Turing Machines (QTMs) are essential here. While it is possible to use 
the formal notion of a Turing Machine that manipulates classical or quantum 
symbols on a strip of tape according to a table of rules, the comparison is 
made simpler by using the presentation of \textcite{Bernstein1993,Bernstein1997}.

In their physics-like view of a PTM, they describe its time evolution as a
sequence of probability distributions. At each time step, the distribution
describes the likelihood of the machine's configuration, and the probability
assigned to any specific configuration is a real number in $[0,1]$, to within a
precision of $2^{-m}$ for some integer $m$. With a machine able to do coin-flips, these numbers are reachable with
a deterministic algorithm in time polynomial in $m$. The probability
distribution over all configurations at time $t$ is represented by a vector
$\vec{v}_t$. The transition function taking the machine into the next
configuration can be thought of as a stochastic matrix $M$ whose rows and
columns index configurations, entries are probabilities, and the columns sum
to one. 

\citeauthor{Bernstein1997} point out that stochastic matrices obtained from probabilistic TM are finitely specified and map each configuration by making only local changes to it, but even so, the support of the probability distribution can be exponential in the running time of the machine.
Observing the PTM (or part of it) at a time step will yield a
configuration (or partial configuration) sampled from the probability
distribution, and the distribution is updated conditioned on the value
observed, i.e., sampling shrinks the support of the distribution to only cover
the observed configuration. This update does not change the future behavior of the machine,
because the involved probabilities are additive. It is therefore only necessary to
keep track of which configuration the machine is in, at each time step. So,
even though the distribution may have a support that grows exponentially with
running time, it is only necessary to keep track of a constant amount of
information to trace the behavior of the machine as time progresses.

A similar physics-like view of a QTM assigns complex numbers called amplitudes, instead of probabilities, 
to the possible configurations. The resulting vector
of complex amplitudes describes the QTM's quantum state at each time step, as a
linear combination of configurations known as a superposition, and the
probability of observing a specific configuration is given by the absolute
square of its amplitude~\cite{Born1954}. Further, observation of a measurement
outcome enforces the state of the machine to be updated to be consistent with
the outcome according to \emph{Lüders rule}~\cite{Luders2006}. The quantum time
evolution needs to keep track of the amplitude and phase of all the
configurations because the involved quantities are not only added to each
other; the components may cancel each other because of the phase. This
phenomenon is known as interference, and is needed to reproduce the quantum
behavior. Observation would restrict the configuration so that the interference
is prohibited and therefore, in a QTM, observation may disturb its state and its
later behavior. The result is that the possible exponential growth of the
superposition support needs to be retained to reproduce the complete quantum
behavior. In other words, it may be necessary to keep track of an exponentially
increasing amount of information to trace the behavior of the machine as time progresses.

More details and the correspondence with the formal Turing Machine model can
be found in \textcite{Bernstein1993,Bernstein1997}. In what follows the time
evolution will be divided into chunks known as gates as we will be using the
circuit model of quantum computation. This has been shown to be polynomially
equivalent to the QTM model by \textcite{Yao1993}.

\subsection{Oracle Turing Machines and oracle notions}
\label{sec:OracleDefinition}

A brief deviation into the standard definition of an oracle TM is needed, before we translate it into the physics-like description.
A formal definition can be found in \textcite[73]{Arora2009a}, but here we will adopt the more suitable description of \cite{Bennett1997}.
An oracle TM has a special \textit{query tape} and two distinguished internal states: a pre-query state 
and a post-query state. 
A~query is executed whenever the machine enters the pre-query state, and causes the oracle $f$ to evaluate the query string present on the query tape (if no query is present the oracle performs a no-op).
The query string should be written in the form $x||b$, i.e., the input to the oracle $x$ concatenated with a target bit $b$, commonly initialized to 0, onto which the oracle's answer is added modulo two.
(Except for the query tape and change of internal state from pre- to post-query state, other parts of the oracle TM do not change during the query.)
The target bit is strictly speaking not needed for a classical oracle TM, since the answer could be overwritten onto the space used for the query string, but it can be added without loss of generality.
More importantly, the target bit enables reversion of the oracle, sometimes known as ``uncomputation'' \cite{Bennett1973}, since the input equals $x$ and the target bit is already equal to $f(x)$, calling the oracle again will reset the target bit to $0$. 
Extending this to several target bits is simple and can be done at a linear cost corresponding to one evaluation per target bit, most presentations condense this to a single evaluation per target bitstring.

Informally~\cite{Bennett1997} an oracle is a
device that lets the machine evaluate some function at unit cost. Effectively, using
an oracle the question becomes: \emph{if we could compute this function
efficiently, what else could we then compute?} With this in mind, oracles can
be thought of as a tool used to calculate a lower bound on the resource
requirement.

Reformulating this in the already mentioned physics-like description, we describe the evaluation of the oracle as a change in a single time-step, so that the evaluation has unit cost.
We will later implement this physics-like description in terms of quantum or classical reversible  gates that specify the map in each time-step.
In this circuit model, an oracle constitutes a part of the circuit that counts as one single gate when analyzing its resource requirements. 
Normally, a circuit implementation of a function $f$ uses a query register from which the input is read, and an answer register where the result is added, bitwise modulo two (see \cref{fig:reversible_f}). 
The query register remains unchanged by the gate; this enables reversibility of the circuit implementation even for non-bijective functions, similar to the oracle TM model above. 
A simple extension of the argument of \textcite{Yao1993} can be used to show that this model is polynomially equivalent with the oracle TM model, taking into account that each output bit needs to be handled by a separate oracle in the formal model.

\begin{figure}[t]
	\centering
	\includegraphics[scale=.8]{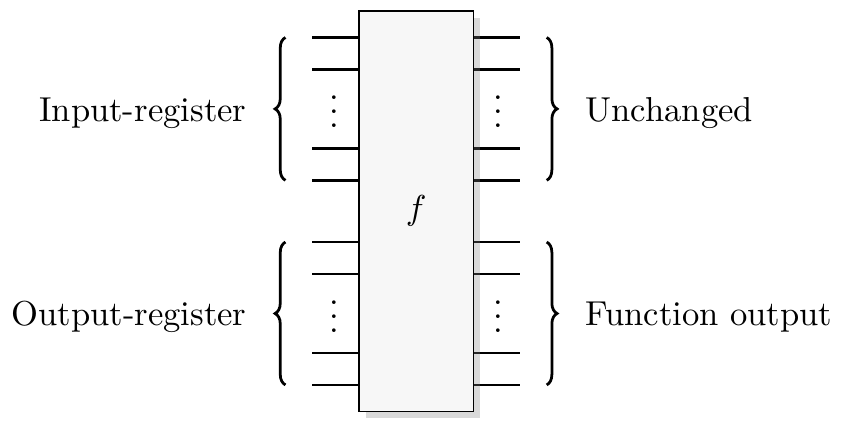}
	\captionof{figure}{A classical reversible logical function construction for 		a function $f$.} \label{fig:reversible_f}
\end{figure}

In \emph{oracle query complexity} the relevant quantity is the number of calls an algorithm
makes to the oracle. Each call is counted as one operation, or one unit of
time, and this is the measure that will be used throughout this paper. As mentioned
above, most of the problems that we will consider are \emph{decision} and \emph{promise problems}. The solver of a promise problem is guaranteed that the function is one of a few possible types,
 and the problem is to determine which it is.

It is conventional in computational complexity to consider the oracle as a
black box. Sometimes however, and especially in the related field of cryptography,
a wider definition is used to better capture the problem. 
In the context of analyzing the security of a cryptographic algorithm, an
attack model or security model needs to be specified, stating
assumptions on the power, knowledge, and access to the system available to an adversary. \textcite{Chow2002} explains that in the black-box model the adversary is
restricted to observe input and output from the cryptographic algorithm, in
contrast to the white-box model where he or she is assumed to have complete
access to both the specific software implementation, and to the running environment of the algorithm. A gray-box model is a mixture between these two extremes,
and applicable whenever the adversary has access to some internal details
of the algorithm, but not all.

As an example, consider that we know (are promised) the mapping of a
cryptographic function and the problem is to determine an input parameter (a
key) based on the output of the function. If access is given to the function in
the form of a black-box, then the problem is usually intractable. However, if the
function runs on some hardware, then a more realistic model is that we can
monitor its power consumption. This additional information may open up for a
\emph{side-channel} attack known as power analysis; information about the key
leaks out through the power consumption, and retrieving the key becomes
tractable (see e.g., \cite{Fransson2015}). 
Side channels naturally appear when the protocol is implemented in a
physical system. This is an example of a gray-box model since it
only requires having an incomplete description of the machinery, we only need to
have knowledge about the side effect resulting from the physical implementation,
and not the full description as in the white-box model.

Another example where these notions fit well is computational modeling and
system identification~\cite[43]{Bosch1994}, where the goal is instead to model
or identify a process. In the white-box setting, we have complete knowledge
about the process, and a model can be built from first principles. In the
black-box setting we have no prior knowledge about the underlying process, and
to our help is only the statistics generated by the input-output process. A
gray-box model corresponds to something in between, when we have partial
knowledge about the process.

In the circuit model the different black- and non-black-box models will translate into the following

\begin{itemize}
	\item The white-box model assumes that we know everything that we can know about implementation of the oracle. That is, we know the specific circuit that performs the function, and even how the computing machinery is built and operates.

	\item The black-box model assumes that we know nothing about the implementation of the oracles. We can only access inputs to and outputs from the circuit.

	\item A gray-box is some specified mixture of black and white, for instance, it could be that we have more knowledge about how the function is implemented, but not complete knowledge.
\end{itemize}		

\subsection{Quantum Computation}

A QTM computer is a machine that operates on quantum bits, or qubits, rather than bits. 
Qubits are the elementary information carriers in quantum theory. 
They are two-level systems that can be parameterized by two complex numbers $a$ and $b$ through the expression
\begin{equation}
	a\ket{0} +b\ket{1},
	\label{eq:qubit}
\end{equation}
where $\ket{0}$ and $\ket{1}$ are the two orthogonal eigenstates to the Pauli-Z operator,
\begin{equation}
	Z\ket{0}=\ket{0}, \qquad Z\ket{1}=-\ket{1}.
\end{equation} 
These states are usually referred to as the computational basis states. 

We will use two kinds of operations on these qubits, \textit{reversible transformations} and \textit{measurements}. 
\textit{Transformations} relate to the classical bit operations and are described by unitary operators
\begin{equation}
	AA^\dagger=A^\dagger A=I,
\end{equation}
where ($^\dagger$) is the Hermitian conjugate, and $I$ the identity operator.
These operators are also referred to as quantum gates.
\textit{Measurements} are used to retrieve information from these systems, and of
particular use are projective measurements. In such a measurement the state is projected onto
an eigenstate of an observable (a Hermitian operator), and the information retrieved by an
observer from such a measurement is the eigenvalue corresponding
to that eigenstate. The probability of obtaining a specific outcome from a
specific observable depends on the state before measurement, and is given by
the \emph{Born rule}~\cite{Born1954}. The probability is given by the absolute square of the component (amplitude) in the direction of the eigenstate
related to the outcome. As an example, the probability of finding the qubit
represented by expression (\ref{eq:qubit}) in state $\ket{0}$, from
measuring the $Z$ observable, is $|a|^2$.	 
Furthermore, the state of the physical system is changed in a projective measurement to the eigenstate in question (or a vector in the eigenspace, in the degenerate case). The state is projected onto the eigenspace, hence the term projective measurement. The requirement that the probabilities of the different outcomes sums to one translates into normalization of the state so that $|\psi|^2=1$.

There exists a third operation: \textit{preparation}, that can be thought of as measurement on an unknown state followed by a unitary transformation that depends on the measurement outcome, such that the output eigenstate is rotated to the desired state. 
For a more thorough explanation of the primitives (and their generalizations that we do not  treat here), see \textcite{Nielsen2010}.

An important concept in quantum information theory is that of \emph{mutually
unbiased bases}. Take two sets of states $\{\ket{e_i}\}$ and $\{\ket{f_j}\}$,
both being bases for the quantum system. These two bases are said to be
mutually unbiased if any pair of two states formed between the sets satisfy
the condition
\begin{equation}
	\big|\inner{e_i}{f_j}\big|^2=\frac{1}{d},
\end{equation} 
where $d$ is the dimension of the system. Importantly, the
left-hand side is constant, independent of bases and states, so that
information retrieved from a projective measurement along one basis is
completely unrelated to the information retrieved from a projective
measurement in the other basis. \cite{Bengtsson2007}

As an example, the bases spanned by the eigenstates of the Pauli operators 
$X,Y$ and $Z$ are 2-dimensional and mutually unbiased.
\begin{equation}
	\{\ket{0},\ket{1}\}_Z, \quad \{\ket{+},\ket{-}\}_X, \quad
	\{\ket{+i},\ket{-i}\}_Y
\end{equation} 

A transition between the computational basis and the \emph{phase basis},
spanned by the eigenstates of $X$, can be performed by the Hadamard gate $H=H^{\dagger}$ by
\begin{equation}
	H\ket0=\ket{+},\quad H\ket1=\ket{-}.
\end{equation}
It also transform the Pauli operators according to
\begin{equation}
	HZH=X,\quad HXH=Z,\quad HYH=-Y.
\end{equation}

Composition of systems of several qubits uses the tensor product, for example
\begin{equation}
\ket{\psi}_{AB}=\ket{\phi}_A \otimes \ket{\varphi}_B=\ket{\phi}_A\ket{\varphi}_B,
\label{eq:separable}
\end{equation} 
where the subscripts $A$ and $B$ indicates that the states describe two subsystems, but these are often dropped from the notation. 
The dimension of the tensor product vector space is the product of the constituent spaces, and therefore, computations involving $n$ qubits will have states described by a $2^n$-dimensional Hilbert space.
States that can be written on the form \eqref{eq:separable} are called product states. 
The tensor product creates a Hilbert space (here, finite-dimensional complex vector space) of all linear combinations of the possible separable states.
Some of these linear combinations cannot be written as a product state as in  \cref{eq:separable}, and these are called entangled states. 

The most well-known two-qubit gate is the quantum \textit{CNOT} gate. 
A \textit{CNOT} is a controlled-X operation; it applies $X$ to the target qubit conditioned on the control qubit. 
The graphical presentation is a dot on the control qubit, connected with a control line to the target gate, here an $X$.
For some examples of gate arrangements, see \cref{fig:cnot_identity}.
With $\ket{c}$ as control and $\ket{t}$ as target it has the effect
\begin{equation}\label{eq:quantum_CNOT}
CNOT\ket{c}\ket{t}=\ket{c}\ket{t\oplus c},
\end{equation}
where ``$\oplus$'' is the exclusive-OR, or addition modulo 2 so that
\begin{equation}
\begin{split}
&\ket{00}\mapsto\ket{00}\\
&\ket{01}\mapsto\ket{01}\\
&\ket{10}\mapsto\ket{11}\\
&\ket{11}\mapsto\ket{10}.
\end{split}
\label{eq:cnotcomputational}
\end{equation}
This is the effect as seen from the computational basis.
From the point of view of the phase basis, transforming the operator via Hadamard transforms, we obtain a different map (see \cref{fig:cnot_identity}A), and it is easy to verify that, for example
\begin{equation}
\begin{split}
\ket{+-}=\tfrac12&\ket{00}+\tfrac12\ket{01}+\tfrac12\ket{10}-\tfrac12\ket{11}\\
\mapsto  \tfrac12&\ket{00}+\tfrac12\ket{01}-\tfrac12\ket{10}+\tfrac12\ket{11}=\ket{--},
\end{split}\end{equation}
in total
\begin{equation}
\begin{split}
&\ket{++}\mapsto\ket{++}\\
&\ket{+-}\mapsto\ket{--}\\
&\ket{-+}\mapsto\ket{-+}\\
&\ket{--}\mapsto\ket{+-},
\end{split}
\label{eq:cnotphase}
\end{equation}
that is,
\begin{equation}
H^{\otimes 2}~CNOT~H^{\otimes 2}\ket{c}\ket{t}=\ket{c\oplus t}\ket{t}.
\end{equation}
This effect in the phase basis is sometimes called phase
kickback~\cite{Cleve1998}. 

A controlled phase flip (Controlled-$Z$ or $CZ$) and a qubit swap (\textit{SWAP}), that corresponds to classically swapping the qubits \cite{Beckman1996}, can also be constructed directly from the \textit{CNOT} (see \cref{fig:cnot_identity}A). The $CZ$ and \textit{SWAP} behaviors correspond to small modifications of \cref{eq:cnotcomputational,eq:cnotphase}.

\begin{figure}
	\centering
	\raisebox{2cm}{\textbf{A.}}\includegraphics[scale=01]{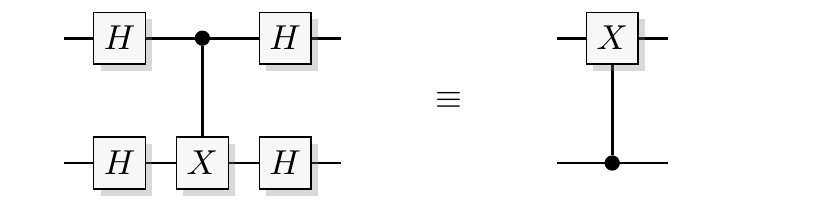}\\\bigskip
	\raisebox{2cm}{\textbf{B.}}\includegraphics[scale=01]{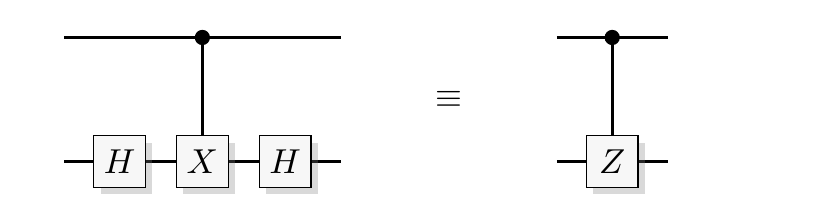}\\\bigskip
	\raisebox{2cm}{\textbf{C.}}\includegraphics[scale=01]{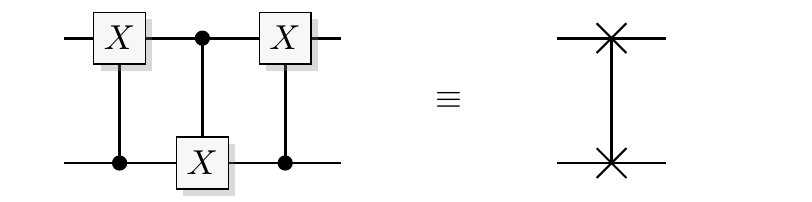}
	\captionof{figure}{Identities for two qubit quantum gates. \textbf{A.} The
effect of a quantum CNOT in the Hadamard basis. \textbf{B.} relation
between CNOT and CZ. \textbf{C.} construction of a SWAP gate from three CNOT
gates.} \label{fig:cnot_identity}
\end{figure}

Toffoli and Fredkin gates are controlled versions of the CNOT and SWAP, respectively.
A Toffoli gives the map
\begin{equation}\label{eq:quantum_Toffoli}
\textit{Toffoli}\ket{c}\ket{c'}\ket{t}=\ket{c}\ket{c'}\ket{t\oplus cc'},
\end{equation}
while a Fredkin gives
\begin{equation}\label{eq:quantum_Toffoli}
\textit{Fredkin}\ket{c}\ket{t}\ket{t'}=\ket{c}\ket{t\oplus c(t\oplus t')}\ket{t' \oplus c (t\oplus t')},
\end{equation}

A Toffoli gate with $ n $ controlling qubits is called an $ n $-Toffoli. Also, inverted (white) controls enables the gate if the control system is in state $ \ket{0} $, and can be implemented by applying an $X$ before and after the control.

Since functions in quantum computation are constructed from unitary transformations, any function implemented needs to be reversible. 
The standard approach is to use the circuit model wherein the computation is described by a reversible circuit, and then replace the classical reversible gates by their quantum equivalents. 
Then, the circuit constitutes a unitary operation with the same
mapping of the computational basis states as the classical analogue of a
reversible circuit operating on bits. In contrast to the classical reversible function, there may be additional information to retrieve from
the output of the query register. There is at least one additional
information carrying degree-of-freedom that we can choose to
retrieve information from, instead of retrieving the result of the computation. Of course, in classical reversible logic, we cannot expect to retrieve any additional information about the computation from the query register after such a transformation,
see \cref{fig:additional_degree}.
\begin{figure}
	\centering
	\includegraphics[scale=.8]{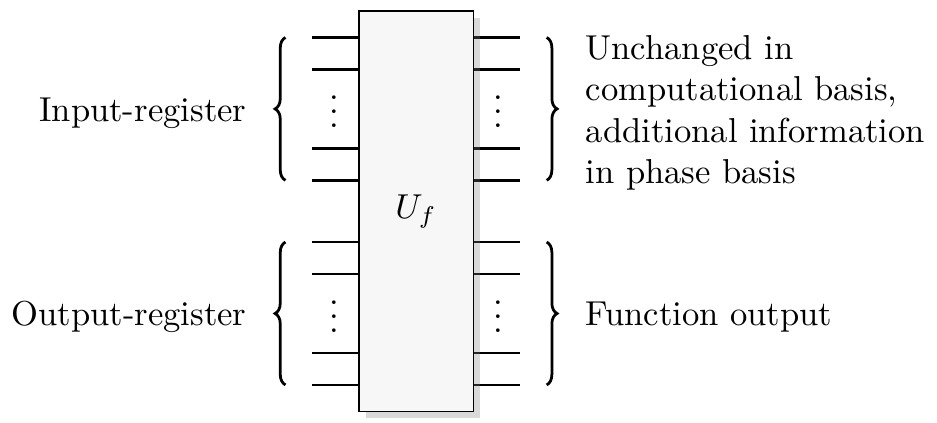}
\captionof{figure}{A reversible logical function.  In the quantum case
we can choose to retrieve function output or some additional information
from the output of the query register.} \label{fig:additional_degree}
\end{figure}

A quantum oracle is an oracle assumed to be a specific unitary
transformation acting on qubits (or other quantum systems) rather than on
classical bits. (For a formal definition in the QTM model, see
\textcite{Bennett1997}.) Such an oracle is a unitary transformation, so it is
necessarily a reversible map from qubits to qubits. A quantum oracle is
usually described by a classical reversible logical function encoded into
the computational basis of a unitary operator. Having access to such an oracle a quantum computer can
exploit the ability to sample from some distributions related to the
function, rather than just having access to the function itself.
This will be further discussed in \cref{sec:Oracles}. 

\section{Quantum Simulation Logic}\label{sec:QSL}

In what follows we will use Quantum Simulation Logic (QSL), which extends an earlier model for simulation of quantum mechanical systems known as Spekkens' toy model~\cite{Spekkens2007}. 
That model views the quantum state as a state of the observer's knowledge, an \emph{epistemic state}, represented as probability distributions over a set of \emph{ontic states}, and the ontic state is hidden away from the observer. 
The word hidden is key here, because the model restricts the knowledge an observer can have about the ontic state of the system; \textit{the knowledge balance principle} states that the amount of knowledge an observer has about the system is at most equally large as the amount of knowledge an observer lacks about the system. 
This implies that there is a restriction to the allowed probability distributions that are used to describe states in the model.

Spekkens' model associates the ontic states of a qubit system with the four points of the corresponding phase space. QSL indexes these four points using two bits of information, so that the ontic state is represented through two classical bits, details will follow below.
Since the main part of the paper is about computation, the words ontic and epistemic will not be used below, but those acquainted with Spekkens' model may find it useful to remember that the basic bit values in the QSL model are associated with Spekkens' ontic states, and a probabilistic combination of different bit-values correspond to Spekkens' epistemic states, or indeed, correspond directly to quantum states of the simulated quantum system.

In this section we will see several examples of how QSL is similar to quantum theory. Some, but not all, of these examples can be stated directly in Spekkens' toy model, and a few are even present in Ref.~\cite{Spekkens2007}.  
We include these to show how easy it is to follow the time evolution of a QSL system, protocol, or algorithm, as compared to the standard representation of Spekkens' model, and to illustrate and get a feeling of how the QSL framework works.
One should be aware that QSL extends beyond Spekkens' toy model (through the Toffoli as described below), so some properties of Spekkens' model do not hold in QSL, but QSL can in turn reproduce phenomena not reproducible in Spekkens' model.

\subsection{Elementary Systems}\label{sec:QSL_Elementary_Systems}

Each elementary QSL system is constructed from two classical information
carriers that each can carry one bit of information. 
Yet, preparation and measurement only allow for storage of one bit of information or retrieval of one bit of information, while the other is destroyed through randomization, just as in Spekkens' model. 
This gives the following description.

\subsubsection{States}

The state of an elementary system is represented by a tuple
\begin{equation}
(x_0,p_0),
\end{equation}
where $x_0$ and $p_0$ represent the bit values of the two classical carriers.

Preparing an elementary system to have a definite value in the $x_0$ bit corresponds to preparing a quantum system in the computational basis state $\ket{x_0}$, an eigenstate of the $Z$ observable. 
For this reason, $x_0$ will be referred to as the \emph{computational bit}.
Since the elementary system only can carry one bit of information, the second bit $p_0$ is in this case necessarily described by an equally weighted random variable $R\in \{0,1\}$.	
In other words, to simulate the eigenstates of the $Z$ observable we have
\begin{equation}
(0,R)\sim\ket{0}, \qquad (1,R)\sim\ket{1}.
\label{eq:QSLZbasis}
\end{equation}
Similarly, preparing an elementary system to have a definite value of the $p_0$ bit corresponds to preparing a quantum system in a phase basis state,
\begin{equation}
(R,0)\sim \ket+ = \frac{\ket{0}+\ket{1}}{\sqrt{2}}, \qquad (R,1)\sim
\ket-=\frac{\ket{0}-\ket{1}}{\sqrt{2}}.
\label{eq:QSLXbasis}
\end{equation}
These give us a simulation of eigenstates of the $X$ observable. For this reason, $p_0$ will be referred to as the \emph{phase bit}.
Also here, the other bit, in this case the computational bit,
must be random. 

Note that the $\ket1$ state is associated with $x_0=1$ and the $-1$ eigenvalue of the $Z$ observable, and similarly, the $\ket-$ state is associated with $p_0=1$ and the $-1$ eigenvalue of the $X$ observable. In what follows, the $X$ observable will be used interchangeably with the projection $(I-X)/2=\ket-\bra-$, differing only in eigenvalues while retaining eigenvectors.

This brings us to the first analogy with quantum theory, namely the
\emph{uncertainty principle}. An observer cannot simultaneously make a value
assignment to the complementary observables $X$ and $Z$, and having perfect knowledge about one
implies having none about the other. This is true in both QSL and in quantum
theory.

The information can also be stored in the correlation between the
computational bit and phase bit. They are then either correlated or
anti-correlated, and this simulates the eigenstates of the $Y$ observable.
\begin{equation}
(R,R)\sim \frac{\ket{0}+i\ket{1}}{\sqrt{2}}, \qquad (R,R\oplus 1)\sim
\frac{\ket{0}-i\ket{1}}{\sqrt{2}}
\label{eq:QSLYbasis}
\end{equation}
Also here, each value of the bit XOR is associated with the eigenspaces of the $(I-Y)/2$ observable. 

These six are all the pure qubit states that we can simulate with a single elementary QSL-system. In
addition, we can represent the maximally mixed state that encodes that we
have no knowledge at all about the system. 
The natural way to do that is
to have both the computational and phase bit represented by two independent
random variables.
\begin{equation}
(R_x,R_p)\sim \frac{I}{2}
\end{equation}
These seven QSL states are equivalent to Spekkens' epistemic states \cite{Spekkens2007}: the six pure states, and the maximally mixed state.

An elementary QSL system has a sample space of four discrete points,
corresponding to the four ontic states in Spekkens' model (the description with the four boxes $\square\square\square\square$ appears in \cite{Spekkens2007}),
\begin{equation}
\begin{split}
(x_0,p_0)=\ &\underset{ 0 \hspace{.9ex} 1\hspace{.9ex} 2\hspace{.9ex} 3}{\square\square\square\square}\sim\ket{\psi}\\
(0,R)=\ &\blacksquare\blacksquare\square\square\sim\ket{0}\\
(1,R)=\ &\square\square\blacksquare\blacksquare\sim\ket{1}\\
(R,0)=\ &\blacksquare\square\blacksquare\square\sim\ket{+}\\
(R,1)=\ &\square\blacksquare\square\blacksquare\sim\ket{-}\\
(R,R\oplus 1)=\ &\square\blacksquare\blacksquare\square\sim\ket{+i}\\
(R,R)=\ &\blacksquare\square\square\blacksquare\sim\ket{-i}\\
(R_x,R_p)=\ &\blacksquare\blacksquare\blacksquare\blacksquare\sim\frac{I}{2},
\end{split}
\label{eq:spekkensstates}
\end{equation}
and with $p_0$ as the least significant bit (LSB) they have the canonical labeling
\begin{equation}
\begin{split}
(x_0,p_0)=\ &\underset{ 0 \hspace{.9ex}1\hspace{.9ex} 2\hspace{.9ex} 3}{\square\square\square\square}\\
(0,0)=\ &0\\
(0,1)=\ &1\\
(1,0)=\ &2\\
(1,1)=\ &3.\\
\end{split}\label{eq:canonical_labeling}
\end{equation}

Perhaps the most used method of calculating the similarity between two
quantum states $\ket{\psi}$ and $\ket{\phi}$ is the absolute square of
their inner product, $|\inner{\psi}{\phi}|^2$. This is called the
\emph{statistical overlap} or \emph{transition
probability}~\cite{Uhlmann1976}, and take on values between 0 and 1. If
the states are parallel, then $|\inner{\psi}{\phi}|^2=1$, and if they are
orthogonal $|\inner{\psi}{\phi}|^2=0$.

In QSL this quantity is the standard statistical overlap, or the complement of the \emph{Kolmogorov distance} 
\begin{equation}
F^2(P,Q)=1-\delta(P,Q) = 1-\frac{1}{2}\sum_{x\in\Omega} |P(x)-Q(x)|,
\end{equation}
where $P$ and $Q$ are distributions describing the states. States with
disjoint support are perfectly distinguishable in the same way as
orthogonal states are perfectly distinguishable, while states with
overlapping support are not.

Another popular measure of similarity is the \emph{fidelity}, which in QSL
is given as the square root of the statistical overlap,
\begin{equation}
F(P,Q)=\sqrt{1-\delta(P,Q)}.
\end{equation}
The quantum fidelity between two mixed states $\rho$ and $\sigma$ is given for general states by 
\begin{equation}
F(\rho,\sigma)=\mathrm{Tr}(\sqrt{\rho^{1/2}\sigma \rho^{1/2}}),
\end{equation} 
where Tr denotes the trace of the operator.
If $\rho$ is pure this simplifies to $\sqrt{\average{\rho}{\sigma}}$, and if
both states are pure to $|\inner{\rho}{\sigma}|$ (see~\parencite{Jozsa1994}). For an
elementary system the fidelity between QSL states is completely equivalent
with that between quantum states. As an example, the fidelity between
$(0,R)$ and $ (0,R') $ is $1$, between $(0,R)$ and $(1,R')$ is $0$, and
between $(0,R)$ and any other state is $1/\sqrt{2}$ (including the
maximally mixed).

Note that the fidelity measure in QSL is not the
Bhattacharyya coefficient
\begin{equation}\label{eq:Bhattacharyya}
B(P,Q)=\sum_{x\in\Omega} \sqrt{P(x)Q(x)}
\end{equation} that is usually taken as the classical analogue to
quantum fidelity~\cite{Fuchs1999}.

The quantum states of a qubit have a geometrical representation called the
\emph{Bloch sphere}. It is a unit ball where all pairs of antipodal points on
the surface correspond to orthogonal pure states, and inside the surface
resides the mixed states.
\begin{figure}
	\centering
	\includegraphics[scale=1]{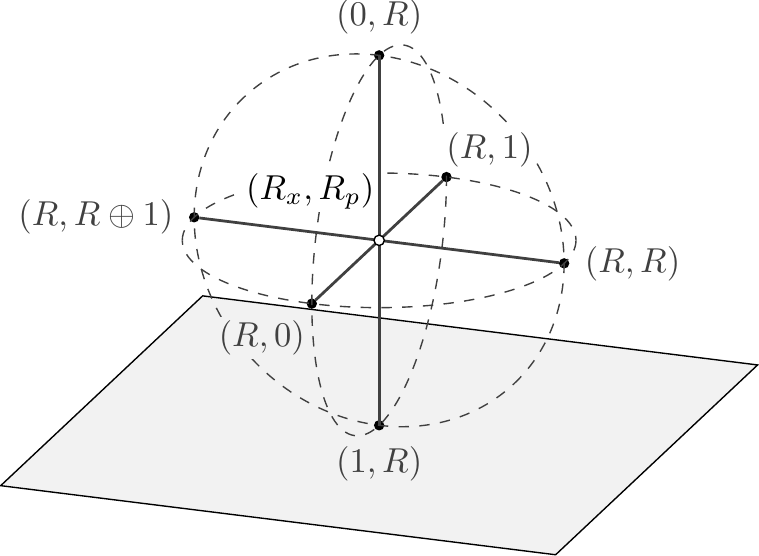}
\captionof{figure}{Geometric representation of the seven states of
an elementary QSL system and their position relative to the Bloch sphere.} 
\label{fig:geometric_rep_QSL}
\end{figure}
The QSL states in relation to the Bloch sphere is shown on
\cref{fig:geometric_rep_QSL}. 
Their positions are motivated by the relations of \cref{eq:QSLXbasis,eq:QSLYbasis,eq:QSLZbasis} to mutually unbiased pairs of orthogonal quantum states. 
The states in each pair have disjoint support, and each state has a statistical overlap $F^2=1/2$ with any state in any other pair; they form mutually unbiased partitions of the state space.

\subsubsection{Transformations}

Unitary transformations are reversible maps on quantum states. It is
therefore natural to simulate these with operations from \emph{classical
reversible logic} on the bits composing a QSL state.

In identifying which transformations that correspond to which quantum gates, we start with the $X$ transformation. 
This operation inverts the states of the computational basis and leaves the phase basis unchanged.
\begin{equation}
X\ket{0} = \ket 1,\quad X\ket{1} = \ket 0,\quad X\ket{\pm} = \ket{\pm}
\end{equation}
Therefore the corresponding QSL gate flips the computational bit and
leaves the phase bit unchanged (see \cref{fig:QSL_Pauli_gates}),
\begin{equation}
\X(x_0,p_0) = (x_0\oplus 1,p_0)
\end{equation}
or as the permutation $(02)(13)$ in the cyclic notation.

In the same way, the $Z$ gate inverts the phase basis and leaves the
computational basis unchanged.
\begin{equation}
Z\ket{0} = \ket 0,\quad Z\ket{1} = \ket 1,\quad Z\ket{\pm} = \ket{\mp}.
\end{equation}
Hence, the QSL Z-gate flips the phase bit instead of the computational bit
\begin{equation}
\Z(x_0,p_0) = (x_0,p_0\oplus 1)
\end{equation}
or $(01)(23)$.

In QSL the states that correspond to eigenstates of $Y$ can be singled out
as those that preserve the parity between the computational and phase bit,
$x_0 \oplus p_0$. The only way a transformation will preserve that while
also inverting the computational and phase basis, is to flip both bits.
\begin{equation}
\Y(x_0,p_0)= (x_0\oplus 1, p_0 \oplus 1)
\end{equation}
This corresponds to the permutation $(03)(12)$. \cref{fig:QSL_Pauli_gates}
shows a graphical representation of these transformations. Blue line
segments represent the computational bit and red represents the phase bit.

As in quantum theory these transformations uphold the identities
\begin{equation}
\X^2=\Y^2=\Z^2=I,
\end{equation}
but composition of these gates also shows a difference between QSL and
quantum theory. The identity 
\begin{equation}
XZ=-iY
\end{equation}
is not upheld, but instead
\begin{equation}
\X\Z=\Y.
\end{equation}

\begin{figure}
	\centering
	\includegraphics[scale=1]{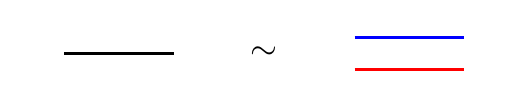}
	\includegraphics[scale=1]{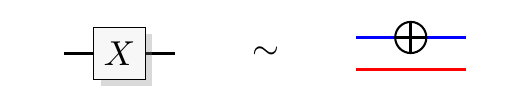}
	\includegraphics[scale=1]{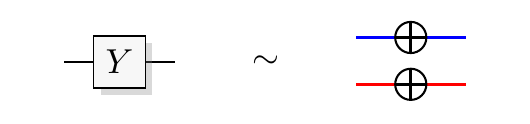}
	\includegraphics[scale=1]{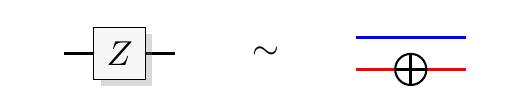}
	\captionof{figure}{Graphical representation of the simulation of the Pauli gates $I,X,Y$ and $Z$. Blue line segments represent the computational bit and red represents the phase bit.} \label{fig:QSL_Pauli_gates}
\end{figure}	

Another useful transformation is the Hadamard gate. 
This quantum gate obeys
\begin{equation}
HZH=X, \quad \text{and} \quad HXH=Z
\end{equation}
while also being an involution, $H^2=I$. In QSL the mapping that simulates
this is the one that swaps the computational and phase bit.
\begin{equation}
\H(x_0,p_0)=(p_0,x_0).
\end{equation}
This permutation, which in cyclic notation is given by $(12)$, is identified as the analogue of the Hadamard
gate in~\cite{Pusey2012}, and indeed,
\begin{equation}
\H^2=\I,\quad \H\Z\H=\X, \quad \text{and} \quad \H\X\H=\Z.
\end{equation}
Note that the analogy is not complete, because in quantum theory we have 
\begin{equation}
HYH=-Y,
\end{equation}
while in QSL
\begin{equation}
\H\Y\H=\Y.
\end{equation}

The last single qubit gate that we describe is the $S$-gate. This is the
square root of the Z-gate,
\begin{equation}
S^2=Z,
\label{eq:S^2}
\end{equation}
which has an order of four. By defining it as 
\begin{equation}
\S(x_0,p_0)=(x_0\oplus 1,p_0\oplus x_0),
\end{equation}
which is equivalent with the permutation $(0213)$, and with
\begin{equation}
\S^{-1}(x_0,p_0)=(x_0\oplus 1,p_0\oplus x_0\oplus 1),
\end{equation} 
we see that \cref{eq:S^2} and the quantum identities
\begin{equation}
SXS^\dagger =Y \quad \text{and} \quad SZS^\dagger=Z
\end{equation} 
are obeyed since
\begin{equation}
\S^2=\Z,\quad \S\X\S^{-1} =\Y, \quad \text{and} \quad \S\Z\S^{-1}=\Z,
\end{equation} 
but again, while we expect 
\begin{equation}
SYS^\dagger=-X
\end{equation}
QSL instead gives
\begin{equation}
\S\Y\S^{-1}=\X.
\end{equation}
\cref{fig:QSL_H_och_S} shows a graphical representation of the simulation
of the Hadamard, $S$, and $S^\dagger$ gates.

\begin{figure}
	\centering
	\includegraphics[scale=1]{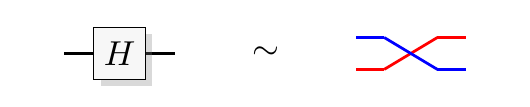}
	\includegraphics[scale=1]{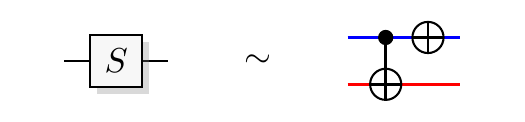}
	\includegraphics[scale=1]{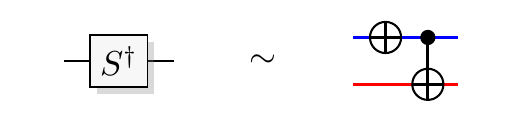}
	\captionof{figure}{Graphical representation for the simulation of Hadamard, $S$, and $S^\dagger$ gates.}
	\label{fig:QSL_H_och_S}
\end{figure}	

All of these transformations are generated by $\H$ and~$\S$ (${\Z=\S}^2$, ${\X=\H\Z\H}$, ${\Y=\X\Z}$, $\S^{-1}={\X\S\X}$), which also generate the group of transformations that Spekkens allows for in his model, the symmetric group on four symbols $S_4$~\cite[68]{Judson2016}. 
This is because $\S$ gives the 4-cycle $(0213)$ and $\H$ gives the 2-cycle $(12)$ of adjacent elements in that 4-cycle.

Another way to view this~\cite{Spekkens2007} is through the Bloch representation. 
\cref{fig:QSL_rotations} shows the effect of \X, \Z, \H, and \S, on the six QSL states.
	\begin{figure}
\centering
\raisebox{3.5cm}{\textbf{A}}\includegraphics[scale=.8]{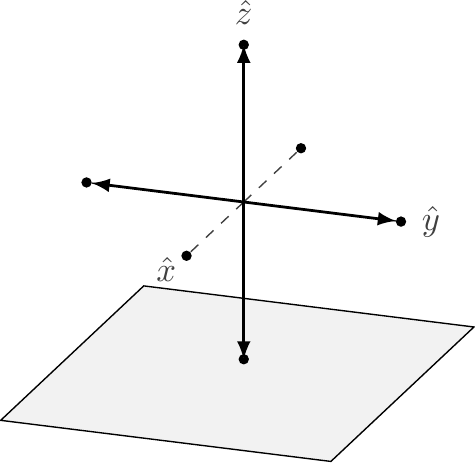}
\raisebox{3.5cm}{\textbf{B}}\includegraphics[scale=.8]{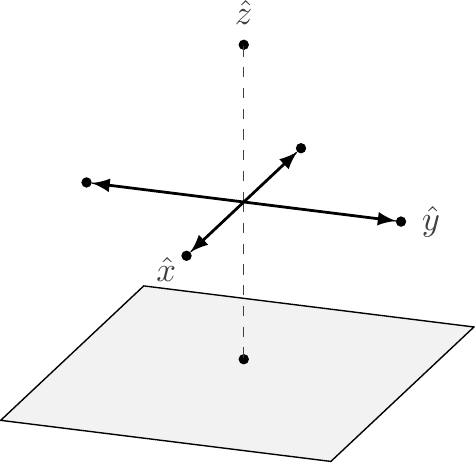}\\[1em]
\raisebox{3.5cm}{\textbf{C}}\includegraphics[scale=.8]{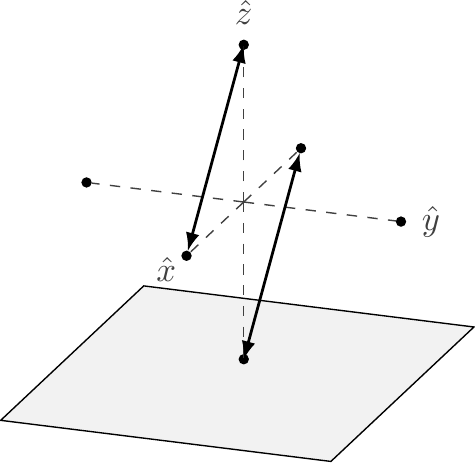}
\raisebox{3.5cm}{\textbf{D}}\includegraphics[scale=.8]{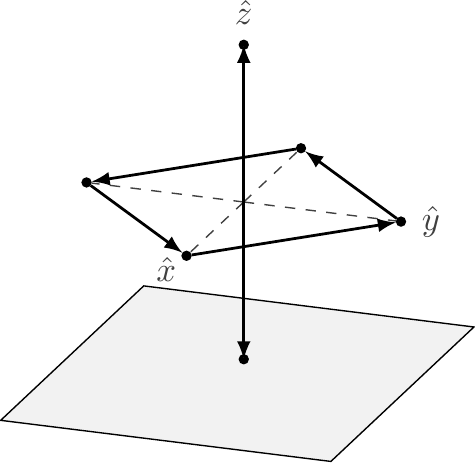}

\captionof{figure}{Effect of the transformations on the six QSL states
in the Bloch representation. \textbf{A-D} shows the \X, \Z, \H, and
\S gates respectively. } \label{fig:QSL_rotations}
\end{figure}
We can see that the \X and \Z transformations have the expected effect,
considering that in the Bloch sphere representation they can be viewed as
$180^\circ$ rotations around the $\hat{x}$ and $\hat{z}$ directions respectively (see
\cref{fig:QSL_rotations}A and \cref{fig:QSL_rotations}B).
However, the expected effect of the Hadamard gate is to also flip the states
along $\hat{y}$, but this is not the case in QSL (see \cref{fig:QSL_rotations}C). 
Finally, the quantum $S$-gate correspond to a $90^\circ$ plane rotation of the Bloch sphere, while the QSL simulation also flips the states along $\hat{z}$ (see \cref{fig:QSL_rotations}C).

Spekkens recognizes this and shows that some transformations, like those we
relate to $H$ and $S$ gates, correspond to antiunitary maps in the Bloch
sphere representation. \textcite{Skotiniotis2008} connected the subgroup of
12 even permutations $A_4$ to unitary maps, while the odd permutations correspond to
antiunitary maps.

It is surprising that these faulty identities, or systematic errors, exist at the level of simulating qubits, but that the simulation can reproduce many phenomena even when the system size grows. There will be many examples of this in the rest of this paper.

\subsubsection{No Universal Spin-1/2 Inverter}

A universal state inverter is a transformation that takes an arbitrary state
$\rho$ and produces an orthogonal state~$\rho^\perp$. Such a device is not
allowed by quantum theory~\cite{Rungta2001}. 
For a pure qubit state the operation inverting the spin is given by a
composition of $X$, $Z$, and complex conjugation
$\ket{\psi^\perp}=XZ\ket{\psi^*}$ \cite{Wootters1998}. Here $\ket{\psi^*}$
represents the complex conjugated state. In the Bloch sphere representation
the spin flip operation is such that
\begin{equation}
\rho=\frac{I+\vec{r}\,\vec{\sigma}}{2} \mapsto \rho^\perp= \frac{I-\vec{r}\,\vec{\sigma}}{2},
\end{equation}
where $\vec{r}$ is the Bloch vector and
$\vec{\sigma}=X\hat{x}+Y\hat{y}+Z\hat{z}$ is the Pauli vector. In general
\begin{equation}
\rho^\perp=XZ\rho^*ZX,
\end{equation}
where $Z$ flips the sign of the $\hat{x}$-component in the Bloch vector, $X$
flips the sign of the $\hat{z}$-component, and conjugation flips the sign of
the $\hat{y}$-component. 
This is an anti-unitary map and cannot be implemented with the dynamics of a quantum system.

In QSL, because of the existence of transformations that correspond to anti-unitary maps in the Bloch sphere representation, one may be tempted to think that a universal spin-1/2 inverter could be possible.
This is not the case, because a device that takes an arbitrary QSL state and produces the disjoint state must obey the following three conditions.
\begin{enumerate}
\item Map the eigenstates of $\X$ to each other. This is done by flipping the phase bit while doing nothing to the computational bit.
\item Map the eigenstates of $\Z$ to each other. This is done by flipping the computational bit while doing nothing to the phase bit.
\item Map the eigenstates of $\Y$ to each other. This requires changing the parity between the computational and phase bit.
\end{enumerate}
The third condition cannot be obeyed without violating the other two. 

\subsubsection{Measurement}

A measurement in QSL consists of information retrieval followed by a state update to ensure that only one bit of information can be known about an elementary system. 
This is completely equivalent to Spekkens' model, and simulates measurement disturbance, or the update of a quantum state after a measurement (sometimes called collapse). 
The specifics are as follows.

A projective measurement of $\Z$ in QSL returns the computational bit
and randomizes the phase bit, while measuring the $\X$ observable instead
reads the phase bit and randomizes the computational bit. Remember that the phase-bit value 0 is associated with a positive relative phase between the computational states, and the phase-bit value 1 is associated with a negative relative phase.
Finally, measuring
$\Y$ returns the parity of the computational and phase bit and then
randomizes both the computational and phase bits while preserving parity,
that is, if measuring $\Y$ yields $x_0\oplus p_0=0$, then the system is updated randomly to either $(0,0)$ or $(1,1)$, and if $x_0\oplus p_0=1$, the system is updated randomly to either $(0,1)$ or $(1,0)$.

This ensures repeatability of measurements, i.e., if a projective measurement is repeated, the second measurement will yield the same outcome as the first. It also ensures measurement disturbance, prohibiting that the bit values not measured can be predicted after the measurement.
The connection with mutual unbiasedness is clear. 
Information retrieved from one of these measurements is unrelated to the information retrieved from the other. 
That is, we can retrieve either the computational bit, the phase bit, or the parity between them, never two of them, or indeed three.

\subsubsection{Preparation}

To prepare a system that simulates $\ket{0}$ or $\ket{1}$, we create a system
with the computational bit $x_0=0$ or $x_0=1$ respectively, and the phase bit
is chosen randomly. In the same way we can prepare QSL states simulating
$\ket{+},\ket{-}, \ket{+i}$ and $\ket{-j}$ by distributing $p_0$ and $x_0$
according to the respective distribution.

Preparation by measurement can also be performed, by taking a maximally mixed state (the state of \emph{no knowledge}, with both computational and the phase bit randomized), measuring some observable to obtain information on the state, and then transforming the now known state into the desired state.

\subsubsection{Non-commutativity of Measurements}

In quantum theory the order of measurements performed is of
importance to their outcomes. Consider for instance that we have prepared a
quantum state $\ket{0}$ and then we measure the observables $X$ and $Z$. If $Z$
is measured first it will produce the outcome 0 with certainty,
and then $X$ will randomly yield the outcome 0 (for $\ket{+}$) or 1 (for $\ket{-}$) with
equal probability. If the order of the measurements is reversed, both measurements of $X$
and $Z$ will instead produce independent random outcomes.

In QSL the outcomes are, for this particular protocol, in complete compliance with quantum theory. 
To simulate the above procedure we start by preparing the QSL state
\begin{equation}
(0,R).
\end{equation}
Measuring $\Z$ will return the computational bit which will always be $0$, and randomize the already random phase bit, so that the state changes from $(0,R)$ to $(0,R')$. Measurement of $\X$ will then return the new random value $R'$ of the phase bit and randomize the computational bit. 
If we instead start by measuring $\X$ the outcome will be the random value $R$ while the computational bit is randomized so that the state changes from $(0,R)$ to $(R'',R)$. A subsequent measurement of $Z$ will retrieve $R''$.

\subsubsection{QKD --- BB84} 

Having built up the model for simulating single qubit systems, let us now consider a cryptographic protocol that only requires a single qubit system.
In 1984, \textcite{Bennett1984} presented a protocol for distributing a cryptographic key, with the property that an eavesdropper can in principle be detected, and then the compromised key can be discarded.

The protocol is shown in \cref{fig:BB84} and proceeds as follows. 
In one round of the protocol Alice wants to send a random bit $b$ to Bob. 
Alice encodes this bit in a one of two bases, \{\ket{0},\ket{1}\} or \{\ket{+},\ket{-}\}. 
The bit-value $b=0$ is encoded in either $\ket{0}$ or $\ket{+}$, and $b=1$ into
$\ket{1}$ or $\ket{-}$. 
Alice chooses the encoding randomly, and Bob chooses to perform a measurement in one of these two bases, also randomly.
If Bob chooses the same encoding as Alice, then he receives the bit $b$, otherwise he retrieves $0$ or $1$ with equal probability.

\begin{figure}
\centering
\includegraphics[scale=1]{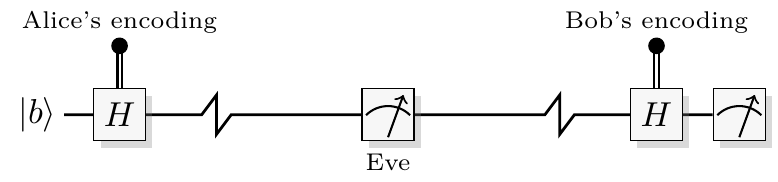}

\captionof{figure}{The \textsc{BB84} protocol with an eavesdropper present.}
\label{fig:BB84}
\end{figure}	

This procedure is now repeated a number of times, and after a sufficiently long sequence they stop and share the encoding they used. 
Both Alice and Bob can now deduce in which rounds of the protocol their encoding were different, and discard data from these rounds. 
This process is known as sifting.
The remaining data of Alice is now equal to that of Bob (in the ideal case); they each have a copy of the same random bit-string.

If someone (usually denoted Eve) is trying to eavesdrop during this process, the measurement disturbance will introduce noise in the key. 
Eve is assumed to have no knowledge about the random encoding used by Alice, so Eve's best strategy is to guess the encoding at random.
If Eve's guess is correct, the quantum state is unaffected by her measurement, 
but if Eve's guess is wrong, she only gets random data and her measurement device outputs a state in the wrong basis.
For example, if Alice sends the state $\ket0$, and Eve picks the phase basis $\{\ket{+},\ket{-}\}$, then Eve's measurement outcome $e$ is random, and her measurement will update the quantum state into
\begin{equation}
\frac{\ket{0}+(-1)^e\ket{1}}{\sqrt{2}}.
\end{equation}
When Bob then measures in the computational basis there is a 50\% chance of
creating a bit-error in the key.

An eavesdropper can now be detected if Alice and Bob sacrifice part of the key and compare it between themselves. 
If the error rate is too high (nonzero in the ideal case), the whole key should be discarded because Eve has been eavesdropping.
If the error rate is sufficiently low (zero in the ideal case), they can be confident that no one have listened in on their communication.
In a real system with noise and lost qubits this is more complicated, and there are also ways to handle limited eavesdropping, but here we will restrict ourselves to the ideal case.

The simulation of the protocol in QSL can be exemplified as follows. 
If Alice chooses the computational basis state, her bit $b$ is encoded into the QSL state
\begin{equation}
(b,R).
\end{equation}
If she instead chooses the phase basis, her bit $b$ is encoded in the QSL state
\begin{equation}
(R,b).
\end{equation}
If Bob chooses to measure in the same basis as Alice has used for encoding, he will retrieve the bit $b$, otherwise he will retrieve the random value $R$, which will then be discarded in the sifting step.

If Eve is eavesdropping, there is a 50\% chance that she chooses a different encoding than Alice, in which case she retrieves the random value $R$, and through the measurement disturbance mechanism, she also randomizes the bit-value $b$. 
In this case Bob will receive a random bit value (independent of $b$), irrespective of in which encoding he measures. 
This reproduces the quantum behavior of BB84.
In other words, the BB84 protocol including the described attack, known as the intercept-resend attack, can be faithfully simulated in QSL.

It is important to note that the security fails if there exists a measurement that retrieves information without disturbing the system. 
In QSL we have included a restriction on the allowed measurements in order to simulate measurement disturbance.
In an actual implementation of QSL, using physical bits, this restriction could be ignored by an adversary, and a measurement that does not disturb the system be performed. 
In quantum theory there is no such measurement~\cite{Busch2009}, but if an actual implementation of BB84 does not use an ideal qubit there may be such a measurement. 
A well-known example is when failing to make a single-photon source so that in each round, the information is encoded onto several photons. 
In this case, an eavesdropper could split off two of the many photons, and measure separately on them, in the two possible encodings.
This would reveal the information present in one of the encodings (Eve will learn which is correct in the sifting phase), while not disturbing the remaining photons at all, so that no noise is present in the generated key. 
This attack, known as the photon-number-splitting attack, would be equivalent to an adversary ignoring the restriction of QSL. 

\subsection{Pairs of Elementary Systems}

In this section we are at most considering pairs of qubits,
\emph{i.e.}, $2\times2$ dimensional systems.

QSL-systems compose under the Cartesian product, and for two elementary systems we write
\begin{equation}
(x_1,p_1)\times(x_0,p_0)=(x_1,p_1)\;(x_0,p_0),
\end{equation}%
the latter notation for brevity, which we could equivalently write
\begin{equation}
(x,p)=(2x_1+x_0,2p_1+p_0)
\end{equation}
so that $x$ and $p$ can take the values $\{0,1,2,3\}$. 
This gives a pair of QSL-systems a sample space of $16$ discrete points. 

States that cannot be described as convex combinations of other states
corresponds to pure states, and the maximally mixed state is still represented
by a uniform distribution over all 16 points.

\subsubsection{Transformations} 

For pairs of elementary systems we also have gates corresponding to two-qubit gates.
The relations in \cref{fig:cnot_identity} motivates the following construction
of a QSL analogue of the quantum CNOT, and one can verify that this construction
is equivalent with the construction in \cite{Spekkens2007}.
\begin{equation}
{\CNOT}(x_1,p_1)\;(x_0,p_0)=(x_1,p_1\oplus p_0)\;(x_0 \oplus
x_1,p_0),
\end{equation} 
see \cref{fig:QSL_2_system_gates}A for a graphical representation.	
\begin{figure}
\centering
\raisebox{2cm}{\textbf{A.}}\includegraphics[scale=1]{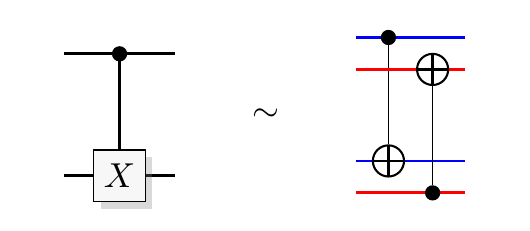}\\\bigskip
\raisebox{2cm}{\textbf{B.}}\includegraphics[scale=1]{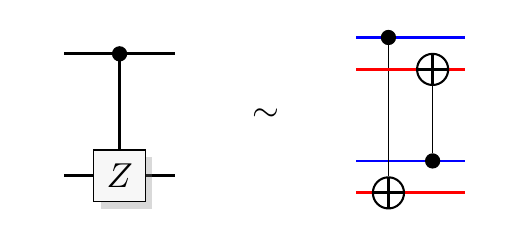}\\\bigskip
\raisebox{2cm}{\textbf{C.}}\includegraphics[scale=1]{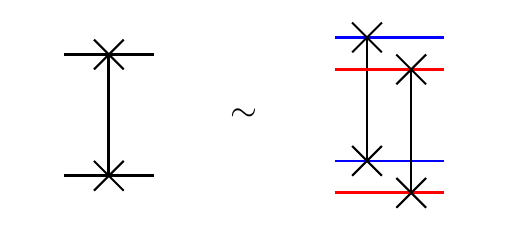}

\captionof{figure}{Graphical representation of the QSL analogue of two qubit
quantum gates. \textbf{A.} \textit{CNOT}, \textbf{B.} controlled-$Z$, and
\textbf{C.} \textit{SWAP}, all constructed to uphold the identities in
\cref{fig:cnot_identity}.} \label{fig:QSL_2_system_gates}
\end{figure}	

Controlled-Z ($CZ$) and \textit{SWAP} follow directly from the identities
\begin{equation}
CZ=(I\otimes H)CNOT(I\otimes H)
\end{equation}
\begin{equation}
SWAP=CNOT(H\otimes H)CNOT(H\otimes H)CNOT
\end{equation}
(see \cref{fig:QSL_2_system_gates}B,C). This implies the QSL maps
\begin{equation}
{\CZ}(x_1,p_1)\;(x_0,p_0)=(x_1,p_1\oplus x_0)\;(x_0,p_0\oplus x_1)
\end{equation}
and
\begin{equation}
{\SWAP}(x_1,p_1)\;(x_0,p_0)=(x_0,p_0)\;(x_1,p_1).
\end{equation}

\subsubsection{Entanglement} 

The QSL analogue to product states are states where the information about one subsystem is independent of the information about the other subsystem. 
In contrast, in entangled states the information is stored in the correlations between the pair of QSL-systems.

In quantum theory, a measure of the amount of entanglement in a bipartite system is the entropy of entanglement.
This is obtained by calculating the von Neumann entropy of the reduced density operator $\rho_A=\mathrm{tr}_B(\rho_{AB})$, where $\rho_{AB}$ is the state of the composite system $A\otimes B$, and $\mathrm{tr}_B$ is the partial trace taken over the subspace related to system $B$. 
As an example, consider the Bell state $1/\sqrt{2}(\ket{01}+\ket{10})$ with
density operator
\begin{equation}\label{eq:Bell_state}
\rho=\frac{\outer{01}{01}+\outer{01}{10}+\outer{10}{01}+\outer{10}{10}}{2}.
\end{equation}
The reduced density operator for system $A$ is the maximally mixed state,
\begin{equation}
\begin{split}
\rho_A=&\frac{\outer{0}{0}\inner{1}{1}+\outer{0}{1}\inner{1}{0}+\outer{1}{0}\inner{0}{1}+\outer{1}{1}\inner{0}{0}}{2}\\
=&\frac{\outer{0}{0}+\outer{1}{1}}{2}= \frac{I}{2}.
\end{split}
\end{equation}

This makes the entropy of entanglement equal to 1, and means that even though we have a pure bipartite state (of entropy 0) we obtain a maximally mixed state (of entropy 1)  for the subsystem $A$. 
The entropy of entanglement is symmetric in its
arguments, so the uncertainty about the state of subsystem $A$ equals that
of subsystem $B$, $S(\rho_A)=S(\rho_B)$. So we have maximal information about the whole system but no information about the individual parts. Thus, in a sense all information is
stored in the correlations of the pair. States with this property are called maximally entangled. The Bell states
\begin{equation}\label{eq:Bell_states}
\ket{\Psi^\pm}=\frac{\ket{00}\pm\ket{11}}{\sqrt{2}} \quad \text{and}
\quad\ket{\Phi^\pm}=\frac{\ket{01}\pm\ket{10}}{\sqrt{2}}, 
\end{equation}
are maximally entangled states. These can be generated from the following recipe (see \cref{fig:Bell_generation}).
\begin{enumerate}
\item Prepare a pair of qubits in the state $\ket{a}\ket{b}$, where $ab$ is 00 for $\ket{\Psi^+}$, 01 for $\ket{\Phi^+}$, 10 for $\ket{\Psi^-}$, and 11 for~$\ket{\Phi^-}$.
\item Apply a Hadamard gate to the first qubit.
\item Apply a \textit{CNOT} with the first qubit as control and the second as
target.
\end{enumerate}
\begin{figure}
	\centering
		\includegraphics[scale=1]{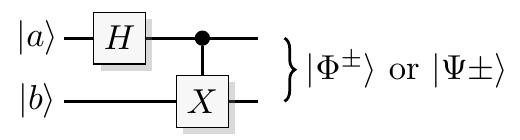}
	
	\captionof{figure}{Circuit for generating the four Bell states.}
	\label{fig:Bell_generation}
\end{figure}	
We can use this recipe also in QSL to produce the states simulating the Bell
states in \cref{eq:Bell_states}. As an example, the state simulating $\Phi^+$
is
\begin{equation}
\begin{split}
(0,R)\;(1,R') \xrightarrow{\H\times \I}\ &(R,0)\;(1,R')\\
\xrightarrow{{\CNOT}}\ &(R,R')\;(\overline R,R').
\end{split}
\end{equation}
For the whole state there are 2 bits of uncertainty, and is therefore a
state of maximal knowledge since it is composed by two elementary
subsystems. Viewing one of the individual subsystems, and ignoring the
other, there is also two bits of uncertainty and therefore, viewed individually, they are
maximally mixed.

The four QSL states related to Bell states are
\begin{equation}
\begin{split}
&\ket{\Psi^+}\sim(R,R')\;(R,R')\\
&\ket{\Psi^-}\sim(R,\overline{ R'})\;(R,R')\\
&\ket{\Phi^+}\sim(R,R')\;(\overline R,R')\\
&\ket{\Phi^-}\sim(R,\overline{ R'})\;(\overline R,R'),
\end{split} 
\label{QSL_Bell_states}
\end{equation}
where $R$ and $R'$ are independent randomly generated bits. These states are
equivalent with those used by Spekkens, but represented differently.
Spekkens represents the state of a pair of elementary systems by their joint
probability distribution. As an example,
\begin{equation}
\begin{split}
&\square \square \square \blacksquare\\[-1.63ex]
&\square \square \blacksquare \square\\[-1.63ex]
&\square \blacksquare \square \square\\[-1.63ex]
&\blacksquare \square \square \square\\
\end{split}
\end{equation}
shows the joint probability distribution corresponding to $\ket{\Psi^+}$.
The marginal probability distributions for each elementary
system are uniform, corresponding to a maximally mixed states; in their marginal
distributions there is no information about the state.

\subsubsection{Remote Steering}	
Suppose that Alice and Bob each possess one qubit of a pair in the state
\begin{equation}
\frac{\ket{00}+\ket{11}}{\sqrt{2}}=\frac{\ket{++}+\ket{--}}{\sqrt{2}}.
\end{equation}
This is sometimes expressed as Alice and Bob sharing an entangled state.
If Bob measures his qubit in the computational basis $\{\ket{0},\ket{1}\}$,
then he obtains the outcome for $\ket{0}$ and $ \ket{1}$ with equal
probability. If he obtains $\ket{0}$, the state updates according to
\begin{equation}
\frac{\ket{00}+\ket{11}}{\sqrt{2}} \to \ket{00}
\end{equation} 
and if he gets $\ket{1}$ it updates according to 
\begin{equation}
\frac{\ket{00}+\ket{11}}{\sqrt{2}} \to \ket{11}.
\end{equation} 
In both cases he knows the state of Alice's qubit, but if Bob instead chooses
to measure in the phase basis, he will still get outcomes for
$\{\ket{+},\ket{-}\}$ at random and with equal probability. If the outcome
is $\ket{+}$ or $\ket{-}$ the state updates to
\begin{equation}
\frac{\ket{00}+\ket{11}}{\sqrt{2}} \to \ket{++}\quad\text{or}\quad		
\frac{\ket{00}+\ket{11}}{\sqrt{2}} \to \ket{--}
\end{equation}
respectively. In either case Bob will learn the state of Alice's system, but
he also notes that it will be different if he chose to measure in the
computational or phase basis. This phenomenon, that Bob apparently can
influence Alice's state by the choice of his measurement basis, is known as
remote steering.

In QSL this is simulated as follows. Alice and Bob each get one elementary
system from a pair initiated in the state
\begin{equation}
\ket{\Psi^+}\sim(R,R')\;(R,R').
\end{equation} 

If Bob measures $\Z$, he learns the value of $R$, and thus also the
value of the computational bit in Alice's system. If $R$ was 1 (there is a
$50\%$ probability of that being the case) the state updates to
\begin{equation}
(1,R'')\;(1,R'),
\end{equation}	 
and accordingly if the outcome is 0, here $R''$ is a new i.i.d random bit generated by the measurement. Note
that this also terminates the last bit of correlation between the two
systems so that there is no correlation between them after the measurement.
If Bob instead measures $\X$, he learns the value of $R'$, and thus
the value of the phase bit in Alice's system. If the outcome is 1 the state
updates to
\begin{equation}
(R'',1)\;(R,1).
\end{equation}	 
and accordingly if the outcome is 0. In QSL, and in Spekkens' model, the
measurement choice of Bob is not influencing Alice's system --- the measurement choice is
influencing Bob's knowledge about Alice's system.

\subsubsection{Anti-correlation in Spin-measurements of the Singlet}

Another closely related example that instead does not work in Spekkens' model or QSL is complete anticorrelation within the singlet. 
If a single system spin measurement is preformed in any direction, $\vec{r}\,\vec{\sigma}$, on both qubits in the singlet state
\begin{equation}
\ket{\Phi^-}=\frac{\ket{01}-\ket{10}}{\sqrt{2}},
\end{equation}
they will always output opposite values. 
The state simulating the singlet is
\begin{equation}
(R,\overline{ R'})\;(\overline R,R'),
\end{equation}
where the overline denotes the Boolean complement.
We can see that measuring $\Z,\X$ and $\Y$ on
both systems will return
\begin{equation}
\begin{split}
R \quad &\text{and} \quad \overline R\\
\overline {R'} \quad &\text{and} \quad R'\\
R \oplus \overline{ R'}\quad &\text{and} \quad \overline R \oplus R'
\label{eq:YY}
\end{split}
\end{equation}
respectively. The first two measurements $\Z$ and $\X$
will produce opposite outcomes, but the last, $\Y$, will produce
equal outcomes (see the last line of expression (\ref{eq:YY})). 
Thus, even though we are
restricted to only three measurements, QSL does \emph{not} reproduce this
phenomenon.

There are three more examples like this; one for each Bell state, with
different relations between the correlations that can be seen from
measurements of $X$, $Y$ and $Z$ on both qubits. In QSL and Spekkens' however, measurement of
$\Y$ will always show opposite correlations as those in quantum
theory, just as in the above example (see further \cite{Spekkens2007}).

\subsubsection{No-cloning}
No-cloning is the no-go theorem stating that there is no unitary
transformation that takes an arbitrary state $\ket{\psi}$ and an auxiliary qubit
$\ket{a}$, and returns both systems in the state
$\ket{\psi}$~\cite{Wootters1982}
\begin{equation}
\ket{\psi}\ket{a} \overset{U}{\nrightarrow} \ket{\psi}\ket{\psi}.
\end{equation}
A unitary transformation preserves the inner product, so if we take two
quantum states $\ket{\psi}$ and $\ket{\psi'}$ and apply a hypothetical cloner,
\begin{equation}
\begin{split}
\ket{\psi}\ket{a} &\rightarrow \ket{\psi}\ket{\psi}\\
\ket{\psi'}\ket{a} &\rightarrow \ket{\psi'}\ket{\psi'}.
\end{split}
\end{equation}
Requiring the inner product to be preserved we have
\begin{equation}
\inner{\psi}{\psi'}\inner{a}{a}=(\inner{\psi}{\psi'})^2.
\end{equation}
Since $\inner{a}{a}=1$ this implies that the norm of these two scalars is equal,
\begin{equation}\label{eq:no_cloning}
|\inner{\psi}{\psi'}|=|\inner{\psi}{\psi'}|^2,
\end{equation}
which only happens if the states are parallel or orthogonal. Therefore, there
is no single unitary that performs the task for an arbitrary state.

In QSL we have two elementary systems in pure states $T,W$, and an auxiliary
system in a pure state $A$. The transformation that clones these states is
\begin{equation}
\begin{split}
T\times A &\mapsto T\times T\\
W\times A &\mapsto W\times W.
\end{split}
\end{equation}

For this to be reversible, the number of elements in the joint support of $ T\times A $ and $ W\times A $ needs to be preserved by the transformation, because if the support grows or shrinks it is not a bijection. Before the transformation, the number of elements in the joint support is 
\begin{equation}
	\begin{split}
		|S(T\times A)\cup S(W\times A)|\\
		=|S(T\times A)|+|S(W\times A)|-|S(T\times A\cap W\times A)|\\
		=|S(T)| |S(A)|+|S(W)| |S(A)|-|S(T\times A\cap W\times A)|\\
		=8-|S(T\times A\cap W\times A)|\\
		=8-|S((T\cap W)\times A)|\\
		=8-|S(T\cap W)||S(A)|\\
		=8-2|S(T\cap W)|,
	\end{split}
\end{equation}
where we have used that the support of a pure state has 2 elements, i.e., $|S(A)|=2$. The number of elements in the joint support after the transformation is
\begin{equation}
	\begin{split}
		|S(T\times T)\cup S(W\times W)|\\
		=|S(T\times T)|+| S(W\times W)|-|S(T\times T\cap W\times W)|\\
		=|S(T)|^2+| S(W)|^2-|S(T\times T\cap W\times W)|\\
		=8-|S(T\times T\cap W\times W)|\\
		=8-|S(T\cap W\times T\cap W)|\\
		=8-|S(T\cap W)||S(T\cap W)|.
	\end{split}
\end{equation}
This gives us the condition (compare with \cref{eq:no_cloning})
\begin{equation}
2|S(T\cap W)|=|S(T\cap W)|^2,
\end{equation}
but this is only fulfilled when $T$ and $W$ are disjoint (orthogonal) or completely overlap (parallel).
Therefore, in a setup with two QSL systems, there is no single QSL transformation that performs the task for an arbitrary state.

\subsubsection{Interference}\label{sec:Interference}

Examples of interference does not require a bipartite system, but it makes for
a good example of how it is used in quantum information processing.

	\begin{figure}
\centering
\includegraphics[scale=1]{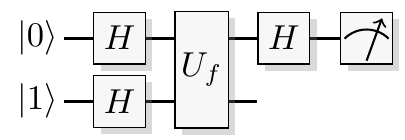}

\captionof{figure}{Circuit for the \textsc{Deutsch} algorithm used to illustrate
interference.} \label{fig:interference}
\end{figure}
Consider the protocol in \cref{fig:interference}. Prepare a two qubit system
in the state $\ket{01}$, and apply a Hadamard gate to each of them
\begin{equation}
\ket{01} \xrightarrow{H\otimes H} \frac{\ket{0}(\ket{0}-\ket{1})+\ket{1}(\ket{0}-\ket{1})}{2}.
\end{equation}
Then a unitary operation that encodes a reversible function in the
computational basis is applied.
\begin{equation}
\frac{\ket{0}\Big(\ket{f(0)}-\big|\overline{f(0)}\big\rangle\Big)+\ket{1}\Big(\ket{f(1)}-\big|\overline{f(1)}\big\rangle\Big)}{2}.
\label{eq:parallelism_second_qubit}
\end{equation}
Even though the unitary encoding the function is used only once,  information
about more than a single function values (in this case two) appears in the state. This phenomenon is
sometimes called quantum parallelism. The state in expression
(\ref{eq:parallelism_second_qubit}) also equates to
\begin{equation}
\frac{(-1)^{f(0)}\ket{0}\ket{-}+(-1)^{f(1)}\ket{1}\ket{-}}{\sqrt{2}},
\end{equation} 
and after the final Hadamard it becomes
\begin{equation}
\frac{(-1)^{f(0)}\ket{+}\ket{-}+(-1)^{f(1)}\ket{-}\ket{-}}{\sqrt{2}}.
\end{equation}
Measuring the observable $\outer{0}{0}\otimes I$, that is, testing whether
the first qubit is still in the initial state $\ket{0}$, will test positive
with probability
\begin{equation}
\left|\frac{(-1)^{f(0)}+(-1)^{f(1)}}{2}\right|^2.
\end{equation}
So, if $f(0)\neq f(1)$ amplitudes will interfere destructively, and the test will be positive with probability zero. If $f(0)=f(1)$ they interfere
constructively, and the test will be positive with unit probability. This is
known as \textsc{Deutsch} algorithm~\cite{Deutsch1985}, and
considered by many the starting point of quantum algorithm research.

There are four functions over 1 bit: $f(x)=0,1,x,$ and $x\oplus 1$. The
canonical way of constructing those with reversible logic follows. The first
two, $f(x)=0$ and $f(x)=1$, are constructed by doing nothing at all respectively
applying a $NOT$ gate to the output. For the last two, $f(x)=x$ and
$f(x)=x\oplus 1$, an additional \textit{CNOT} is applied between the input and
output. The quantum unitary $U_f$ would then be realized in four ways, as the
identity, an $X$ gate on the second qubit only, a \textit{CNOT}, or a
\textit{CNOT} followed by an $X$ gate on the second qubit.

In QSL, the protocol proceeds as follows. Prepare two systems simulating $\ket{01}$, and apply $\H$ to each system
\begin{equation}
(0,R)\;(1,R') \xrightarrow{{\H \times \H}} (R,0)\;(R',1).
\label{eq:81}
\end{equation}
The effect of the above discussed construction of $\U_f$ in QSL will map the
function from the first to the second system over the computational bits. It
will also add the phase of the second system into the first system if the
\textit{CNOT} is present (that is, if $f(0)\oplus f(1)=1$). This simulates the
phenomenon sometimes called phase kickback~\cite{Cleve1998}. For the other two
functions (when $f(0)\oplus f(1)=0$) there will be no phase kickback. So, not
knowing which function that is implemented, the natural way of describing the
effect of the phase kickback is to use the parity of the function. This gives
the map
\begin{equation}\label{eq:QSL_parallelism}
	(x,p)\;(y,z)\mapsto\big(x,p\oplus z(f(0)\oplus
	f(1)\,)\big)\;\big(y\oplus f(x),z\big).
\end{equation}
Applying the function to the result of the map in expression (\ref{eq:81}) we get
\begin{equation}
(R,0)\;(R',1) \xrightarrow{\U_f} \big(R,f(0)\oplus
f(1)\big)\;\big(R'\oplus f(R),1\big),
\end{equation}
and then applying the last transformation ${\H\otimes \I}$ gives
\begin{equation}
\big(f(0)\oplus f(1),R\big)\;\big(R'\oplus f(R),1\big).
\label{eq:84}
\end{equation}
A simulation of measuring the observable $\outer{0}{0}\otimes I$ is to test
whether the computational bit of the first system is zero or not. From
the state in expression (\ref{eq:84}) this translates to testing 
\begin{equation}
f(0)\oplus
f(1)\overset{?}{=}0,
\end{equation} which is true only when $f(0)$ and $f(1)$ are equal,
and false when they are not, just as in quantum theory.

\begin{figure}
	\centering
		\includegraphics[scale=1]{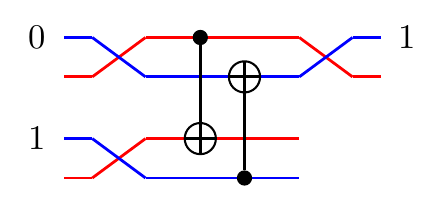}
	
	\captionof{figure}{Example of QSL simulating interference in \textsc{Deutsch}
		algorithm when $f(x)=x$.} \label{fig:interference_with_cnot}
\end{figure}	
\cref{fig:interference_with_cnot} shows the explicit instance of protocol in
QSL when $f(x)=x$.

This is a faithful simulation of a protocol showing interference and quantum parallelism. 
Even in QSL there is an apparent effect of information from both function values being added to the result, while only one query is made (see the QSL map $\mathcal{U}_f$ in \cref{eq:QSL_parallelism} ). 
Now,
whether a quantum computer access all function values with only one query, or
whether the equations only describe the structure of additional information
that can be retrieved, is an unsettled philosophical debate. For QSL the
latter is certainly the case, but we should also stress that QSL is \emph{not} quantum theory. 
Yet, since one process faithfully simulates the other, these processes are
operationally equivalent, and in that spirit a machinery running this particular QSL
recipe should not be distinguished from one running the quantum recipe.

We need to stress that in QSL we do \emph{not} calculate the parity of the
function $f(0)\oplus f(1)$ by accessing the function twice, and then have the oracle
signal that information. The expression using the parity of the function just turns out to be a good
description for the additional information available when we do
not know which of the four functions that was applied. With the constructions of
$f(x)=x$ and $f(x)=x \oplus 1$, the phase bit of the first system is
flipped, but not for the constructions for $f(x)=0$ and $f(x)=1$, i.e., the
output depends on the choice of function made in the construction.

To further clarify,  with ``operationally equivalent'' we mean that processes, or
theories are equivalent when only judged by their input/output behavior. So, a
machine running \textsc{Deutsch} algorithm in QSL is operationally
equivalent to the corresponding process on a quantum machine, but comparing
QSL as a theory against quantum theory is not. In fact, we have already seen
examples that QSL is operationally different, for instance in the
correlations seen in the outcomes from Pauli measurement over Bell states. However,
our main goal here is not to fuel the philosophical debate, nor to perfectly
simulate the whole of quantum theory, but only to simulate it accurate enough to solve the computational problem.

\subsubsection{Measurements}

For one elementary system we have defined three measurements that we relate to
the Pauli observables. We also have six observables for the orthogonal
projections onto their eigenstates, as exemplified in the previous section, and we
can ask whether the computational bit is $0$ or not (corresponding to the observable $\outer{0}{0}$), rather than asking if it
is $0$ or $1$ (Corresponding to the observable $Z$). When measuring a single elementary system there is no distinction
between asking if it is $0$ or not, or $0$ or $1$, but for higher dimensional systems there is a distinction. 
For instance, say that the dimension is four and that we measure the observable $ \outer{0}{0} $, with the outcome that the system is not in $\ket{0}$, we cannot infer
whether it is $\ket{1}$, $\ket{2}$, or $\ket{3}$. In QSL a measurement of $ \outer{0}{0} $ corresponds to asking the system if all the computational bits are zero or not. A measurement that distinguishes between $\ket{0}$, $\ket{1}$, $\ket{2}$, or $\ket{3}$ corresponds to asking the system whether the computational bits encode $0$, $1$, $2$, or $3$ respectively.

For systems composed of two elementary systems, we have seen that single
system measurements act locally. In the Pauli group of observables there
are also joint measurements of the kind $\sigma_i \otimes \sigma_j$, and in
QSL these are simulated as follows.

A measurement of the ${\Z \times \Z}$ observable returns the correlation between the computational bits of both systems, $x_1 \oplus x_0$, and redistributes according to the outcome. 
That is, if $x_1 \oplus x_0=0$ the computational bits of both systems will be random but completely correlated after the measurement, and if $x_1 \oplus x_0=1$ they will be random but anti-correlated. 
A measurement represented by ${\X\times \X}$ does the same, but for the phase bits. 
The procedure is similar for ${\Z\times \X}$, for the computational bit of the first system and with the phase bit of the second, and so on.

These joint measurements retrieve 1 bit of information even though the whole system is composed of two elementary systems, and we should therefore be able to retrieve two bits, implying that the joint measurements are non-maximally informative measurements. 
In quantum theory this relates to when an
observable has degenerate eigenvalues. For instance, the observable quantity $X\otimes X$ can take on two different values, and knowing this value gives us 1 bit of information.

A Bell state measurement is a measurement that distinguish the four Bell states. This
is a maximally informative measurement since it takes the maximally mixed
state into a pure state (one of the Bell states). In QSL, an analogue to this measurement returns the correlation between the two computational bits \emph{and} between the
two phase bits. An example of this will be given in \cref{sec:Superdense}.

This is not to be confused with measurements whose outcome violate Bell
inequalities --- measurements included in a Bell test. For instance the
CHSH inequality has a classical bound of 2, and quantum theory can violate
this to a maximal value of $2\sqrt{2}$, called the Tsirelson bound.
An experiment that maximally violates the inequality can be constructed from
measuring the observables
\begin{equation}
X \text{ or } Z
\label{eq:85}
\end{equation}
on one qubit, and
\begin{equation}
\frac{Z-X}{\sqrt{2}}\text{ or }\frac{-Z-X}{\sqrt{2}}
\label{eq:86}
\end{equation}
on the other, starting with a system in the singlet state. For a complete
description see \textcite[111]{Nielsen2010}. In Hilbert space the
observables in expression (\ref{eq:86}) has eigenbases that are offset with
$22.5^\circ$ from the eigenbases of those in expression (\ref{eq:85}). In QSL we do not
have measurements with this relationship between each other.

So, in QSL we have maximally entangled states, but cannot violate Bell inequalities. 
Obtaining a violation from QSL, which is a local realist model, would contradict Bell's theorem~\cite{Bell1964}.

Another phenomenon that QSL cannot reproduce is the contextual correlations in
the Peres-Mermin square, which is an explicit example of the Kochen-Spekker
theorem~\cite{Kochen1967}. This was shown by~\textcite{Pusey2012}, not in
the QSL representation, but with his stabilizer representation of Spekkens'
model. However, \textcite{Kleinmann2011} initiated work on extending
Spekkens' model to reproduce the contextual correlations that can be seen
from Pauli group measurements. Their work was later concluded by
\textcite{Harrysson2016}. It is also interesting to note that correlations
seen from measurement sequences in the Peres-Mermin square have an efficient
simulation according to the Gottesman-Knill
theorem~\cite{Gottesman1998,Aaronson2004}.

\subsubsection{Superdense Coding}\label{sec:Superdense}

Superdense coding is the name of a protocol that in a sense allows one party,
Alice, to convey two bits of information $m_1,m_0$ to the other party, Bob, by
only interacting with one qubit. It is however a two qubit protocol as shown
by \cref{fig:superdense_coding}.
	\begin{figure}
\centering
\includegraphics[scale=1]{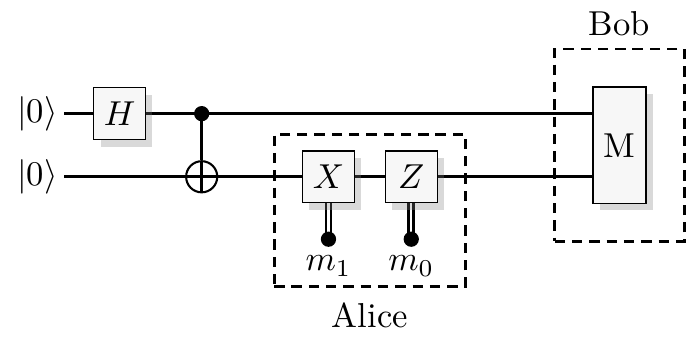}

\captionof{figure}{Protocol for superdense coding. Alice can convey two
bits of information $m_1$ and $m_0$ to Bob by only interacting with one qubit of
a correlated pair.}
\label{fig:superdense_coding}
\end{figure}

First create the state $\ket{\Psi^+}$, give one of the two qubits to Alice and the other qubit to Bob. 
Depending on which of the four messages that Alice wants to send to Bob, she applies $I$, $X$, $Z$, or $Y$ to her qubit, creating
\begin{equation}
\begin{split}
I\otimes I\tfrac1{\sqrt{2}}({\ket{00}+\ket{11}})&=\tfrac1{\sqrt{2}}({\ket{00}+\ket{11}})\\
X\otimes I\tfrac1{\sqrt{2}}({\ket{00}+\ket{11}})&=\tfrac1{\sqrt{2}}({\ket{10}+\ket{01}})\\
Z\otimes I\tfrac1{\sqrt{2}}({\ket{00}+\ket{11}})&=\tfrac1{\sqrt{2}}({\ket{00}-\ket{11}})\\
Y\otimes I\tfrac1{\sqrt{2}}({\ket{00}+\ket{11}})&=\tfrac1{\sqrt{2}}({\ket{10}-\ket{01}}).
\end{split}
\end{equation}
She then sends her qubit to Bob. A Bell state measurement will allow Bob to
perfectly distinguish the state of the pair and deduce Alice's message.

In QSL the state simulating $\ket{\Psi^+}$ is $(R,R')\;(R,R')$, and when Alice applies ${\I, \X,\Z}$ or $\Y$ the result is
\begin{equation}
\begin{split}
\I\times\I\;(R,R')\;(R,R')&=(R,R')\;(R,R'),\\
\X\times\I\;(R,R')\;(R,R')&=(R,\overline{R'})\;(R,R'),\\
\Z\times\I\;(R,R')\;(R,R')&=(\overline{R},R')\;(R,R'),\text{ or}\\
\Y\times\I\;(R,R')\;(R,R')&=(\overline{R},\overline{R'})\;(R,R')
\end{split}
\end{equation}
respectively. 
As previously stated, a Bell state measurement returns the XOR (the correlation) between the two computational bits and the two phase bits. 
This measurement on the above states will yield $00$, $01$, $10$, or $11$, respectively. Thus, we also have superdense coding in QSL.

The reason that Alice can convey a 2 bit message to Bob by sending him one elementary QSL system, is that each QSL system contains 2 bits. 
It is true that Bob cannot access both bit values, since he is restricted to only retrieving 1 bit of information from each system. 
He can retrieve both bits of the message only because there is a second system, initially highly correlated with the first, and where the correlations are manipulated by Alice.

\subsection{Higher Number of Elementary Systems}

Going to higher number of elementary systems, QSL departs from Spekkens' toy
model to become a less restrictive theory. Systems still compose under the
Cartesian product, and the following notation is used
\begin{equation}
\begin{split}
(x_{n-1},p_{n-1})&\;\ldots\;(x_1,p_1)\;(x_0,p_0)\\
&=(\sum 2^ix_i,\sum 2^ip_i)
=(x,p).
\end{split}
\end{equation}
Sometimes we will use a separation between different registers with the
notation $(x,p)\;(x',p')$, where $x$, $p$, $x'$ and $p'$ will be integers modulo a power-of-two.

\subsubsection{Teleportation}

Before we introduce the new transformations, let us take one more example with
more than two qubits.

Teleportation is the name of a protocol where, given some pre-shared entanglement, Bob can recreate a qubit state $\ket{\psi}$ made available to Alice, from information from Alice on the result of a Bell-state measurement performed by her.
\begin{figure}
	\centering
	\includegraphics[scale=1]{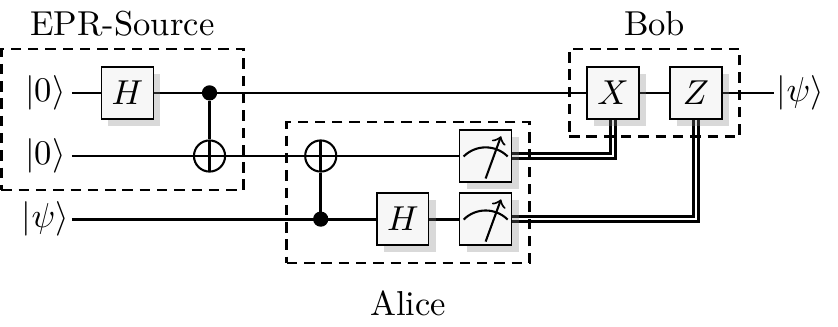}
	\captionof{figure}{The teleportation protocol.}
	\label{fig:teleportation}
\end{figure}
The protocol is shown in \cref{fig:teleportation}, and proceeds as follows.
Alice and Bob share a qubit pair in the Bell state $\ket{\Psi^+}$ created by the  \textit{H} and a \textit{CNOT} gates at the left of \cref{fig:teleportation},
\begin{equation}
\ket{00}\ket{\psi}\to\frac{\ket{00}+\ket{11}}{\sqrt{2}}\ket{\psi}.
\end{equation}
Alice then correlates the state $\ket{\psi}=a\ket{0}+b\ket{1}$, that she wants
to teleport to Bob, with her part of the pair. This is done using a Bell state measurement built from the next \textit{CNOT} and  \textit{H} gates.
\begin{equation}
\begin{split}
&\frac{a(\ket{00}+\ket{11})\ket{0}+b(\ket{00}+\ket{11})\ket{1}}{\sqrt{2}}\\\to
&\frac{a(\ket{00}+\ket{11})\ket{0}+b(\ket{01}+\ket{10})\ket{1}}{\sqrt{2}}.
\end{split}
\end{equation}
After the Hadamard, the state can be written as
\begin{equation}
\begin{split}
&\frac{(a\ket{0}+b\ket{1})\ket{00}+(a\ket{0}-b\ket{1})\ket{01}}{\sqrt{2}}\\
+&\frac{(a\ket{1}+b\ket{0})\ket{10}+(a\ket{1}-b\ket{0})\ket{11}}{\sqrt{2}}.
\end{split}
\label{eq:teleport}
\end{equation}
Alice measures her two qubits and sends the result to Bob. We see from
expression (\ref{eq:teleport}) that if the least significant bit that Bob
receives is set, then he needs to apply $Z$ to his qubit, and if the most
significant is set apply an $X$. Doing so, he retrieves the state
$a\ket{0}+b\ket{1}=\ket{\psi}$.

The same protocol (\cref{fig:teleportation}) with QSL gives
\begin{equation}
\begin{split}
&(0,R)\;(0,R')\;(b_x,b_p)\\
\xrightarrow{{\H\times \I \times \I}}\ &(R,0)\;(0,R')\;(b_x,b_p)\\
\xrightarrow{{\CNOT\times \I}}\ &(R,R')\;(R,R')\;(b_x,b_p)\\
\xrightarrow{{\I\times \CNOT}}\ &(R,R')\;(R\oplus b_x,R')\;(b_x,b_p\oplus R')\\
\xrightarrow{{\I\times \I\times \H}} \ &(R,R')\;(R\oplus b_x,R')\;(b_p\oplus R',b_x).
\end{split}
\end{equation}
By the two measurements Alice now retrieves the values of $R\oplus b_x$ and
$R'\oplus b_p$, and sends them to Bob. 
If the first bit is set he performs an $\X$, i.e., adds $R\oplus b_x$ to the computational bit modulo 2.
If the latter is set he performs a $\Z$, i.e., adds $R'\oplus b_p$ to the phase bit, also modulo 2. 
The state he ends up with is
\begin{equation}
(R\oplus (R\oplus b_x),R'\oplus (R'\oplus b_p))=(b_x,b_p),
\end{equation}
which is the state Alice was provided with.
Thus, Alice and Bob cooperate to ``teleport'' the state of Alice's input system to Bob. 
Just as in the quantum protocol, Alice does not at any point retrieve any information on the state provided to her. 
That is, Alice and Bob cooperate to update the state of Bob's system to correspond to the state of Alice's input system, independent of what the state of that input system is, just as in the quantum protocol.

In classical communication this is not described as teleportation, but rather as encryption and decryption through the One-Time-Pad, where the random numbers $R$ and $R'$ act as secret shared key between Alice and Bob. 
Therefore, in QSL (or Spekkens' toy model), the teleportation protocol is equivalent to (two uses of) the One-Time-Pad. 

We should note that quantum teleportation goes beyond this, since we are restricted to simulate a finite subset of quantum states being teleported, rather than the continuum of states in quantum theory.

\subsubsection{Transformations}

Here we introduce a transformation unavailable in Spekkens' model, and as we
will see it take us out to a less restricted model.
The QSL Toffoli construction is shown in \cref{fig:QSL_3_qubit_gates}A., and
produces the mapping
\begin{equation}
\begin{split}
&\big(x_2,p_2\big)\\&\big(x_1,p_1\big)\\&\big(x_0,p_0\big)
\end{split}\quad\mapsto\quad
\begin{split}
&\big(x_2,p_2\oplus p_0x_1\big)\\
&\big(x_1,p_1 \oplus p_0 x_2\big)\\
&\big(x_0 \oplus x_2 x_1,p_0\big).
\end{split}
\end{equation}
It is constructed in this way to uphold quantum gate identities like those
in \cref{fig:Toffoli_identities}A. and \cref{fig:Toffoli_identities}B. If the
input of one of the two control qubits is initiated in $\ket{1}$, the effect
is that of a \textit{CNOT} over two other systems. Also, if the target qubit is
initiated in $\ket{-}$, the effect is that of a $CZ$ over the two control
qubits.

There is another identity using two \textit{CNOT}s to produce a controlled-SWAP, called Fredkin gate (see \cref{fig:Toffoli_identities}C). The QSL
analogue of this is shown in \cref{fig:QSL_3_qubit_gates}B. and produces the
mapping
\begin{equation}
\begin{split}
&\big(x_2,p_2\big)\\&\big(x_1,p_1\big)\\&\big(x_0,p_0\big)
\end{split}\mapsto
\begin{split}
&\big(x_2,p_2\oplus (x_1\oplus x_0)(p_1\oplus p_0)\big)\\
&\big(x_1\oplus x_2(x_1\oplus x_0),p_1\oplus x_2(p_1\oplus p_0)\big)\\
&\big(x_0\oplus x_2(x_1\oplus x_0),p_0\oplus x_2(p_1\oplus p_0)\big).
\end{split}
\end{equation}

\subsection{Properties and Relations to other Theories} 
Simulated quantum phenomena in QSL have an efficient simulation on a classical
probabilistic Turing machine. 
It uses two classical bits for each elementary
system, and all QSL-gates are constructed from a constant number of classical
reversible gates.  
We therefore have the following simple lemma.

\begin{lemma}\label{lemma:efficiency}
Any quantum circuits, constructed from gates also present in QSL, have a
classical simulation that requires at most a constant overhead in resources.
It can in turn be simulated in polynomial time, in the size of the circuit, on a
classical probabilistic Turing machine.
\end{lemma}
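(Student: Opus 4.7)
The plan is to track, explicitly, the classical resources needed to faithfully reproduce each primitive of QSL, and then observe that a circuit of size $N$ is just a sequence of $N$ such primitives.

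First I would fix the representation. By the definitions in \cref{sec:QSL_Elementary_Systems} (and its extension above), an $n$-qubit simulated system is encoded by exactly $2n$ classical bits $(x_{n-1},p_{n-1})\ldots(x_0,p_0)$, giving a factor-$2$ space overhead relative to the $n$ qubits. This takes care of the ``constant overhead in resources'' claim: each elementary QSL system costs $2$ classical bits, regardless of the circuit it lives in.

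Next I would go through each of the gate primitives defined in the text — $\X$, $\Y$, $\Z$, $\H$, $\S$, $\S^{-1}$, $\CNOT$, $\CZ$, $\SWAP$, Toffoli, and Fredkin — and note that each is given by an explicit closed-form bit map involving a fixed, finite number of XOR and AND operations on the $2n$ bits (in fact acting nontrivially on at most $6$ of them, since the largest gate we use is a $3$-elementary-system Fredkin/Toffoli). Each such update is therefore implementable by $O(1)$ classical reversible bit operations on the encoding, independently of $n$. For preparation, sampling a uniformly random bit is a single coin flip on a PTM, and the explicit encodings in \cref{eq:QSLZbasis,eq:QSLXbasis,eq:QSLYbasis} each need only one such flip. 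For a projective measurement of $\X$, $\Z$, or $\Y$ on an elementary system, the PTM reads one bit (or the XOR of two bits), outputs it, and then flips one or two fresh coins to implement the post-measurement randomization described in the measurement subsection. All of these primitives cost $O(1)$ time and $O(1)$ fresh random bits.

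Now, given a QSL circuit with $N$ primitives on $n$ elementary systems, the simulator simply maintains the $2n$-bit register and, for each primitive in order, executes the corresponding $O(1)$-cost update. The total running time is $O(N)$ and the total space is $O(n)$; since $n\le N$ for any nontrivial circuit, both are polynomial (indeed linear) in $N$. Because the only nondeterminism comes from the coin flips used in preparation and measurement, this is exactly the resource model of a PTM as described in the preliminaries. The ``constant overhead'' statement refines this: simulating any particular QSL gate costs only $O(1)$ classical bit operations per quantum-style gate call, so the slowdown factor is bounded uniformly and does not grow with $n$ or $N$.

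There is essentially no main obstacle here — the work was already done when the gates and measurements were defined as explicit Boolean maps. The only thing to be slightly careful about is the fresh randomness consumed by measurements and by preparations of states like $(R,0)$: one must argue that the PTM can, as in \textcite{Bernstein1997}, generate the required unbiased bits in time polynomial in the precision, which for our purposes is trivially $O(1)$ bits per primitive. With that observation in hand, \cref{lemma:efficiency} follows immediately.
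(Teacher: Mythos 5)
Your proposal is correct and takes essentially the same route as the paper, which justifies the lemma with the two sentences immediately preceding it: each elementary system uses two classical bits, and every QSL gate is built from a constant number of classical reversible gates, whence a PTM simulation with constant overhead per gate. Your write-up just makes that accounting (including the $O(1)$ coin flips for preparation and measurement) explicit.
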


\begin{figure}
\centering
\raisebox{2.4cm}{\textbf{A.}}\includegraphics[scale=1]{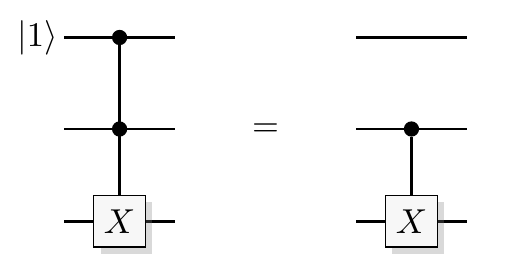}\\\bigskip
\raisebox{2.4cm}{\textbf{B.}}\includegraphics[scale=1]{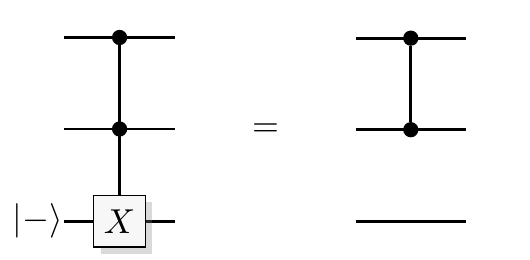}\\\bigskip
\raisebox{2.4cm}{\textbf{C.}}\includegraphics[scale=1]{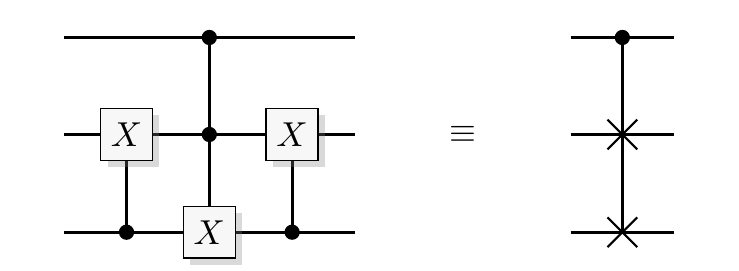}

\captionof{figure}{Quantum gate identities. \textbf{A.} Toffoli gate with one
of the controlling qubits initiated in $\ket{1}$ results in a \textit{CNOT}
over the other two qubits. \textbf{B.} Toffoli gate with the target qubit
initiated in $\ket{-}$ results in a $CZ$ over the cotrol qubits. \textbf{C.}
Identity connecting the Toffoli and Fredkin gate.}
\label{fig:Toffoli_identities}
\end{figure}	
\begin{figure}
\centering
\raisebox{3cm}{\textbf{A.}}\includegraphics[scale=1]{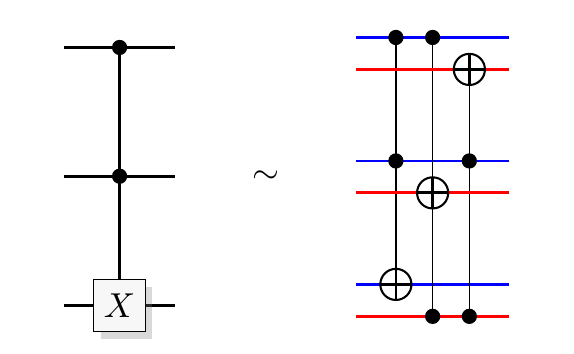}\\\bigskip
\raisebox{3cm}{\textbf{B.}}\includegraphics[scale=1]{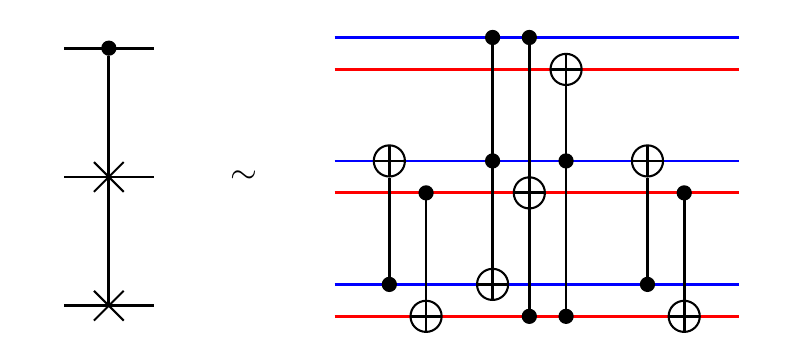}

\captionof{figure}{QSL construcion of the Toffoli gate (\textbf{A.}) and
Fredkin (\textbf{B.}), constructed to uphold the identities in
\cref{fig:Toffoli_identities}.}
\label{fig:QSL_3_qubit_gates}
\end{figure}

\subsubsection{The relation to stabilizer quantum mechanics, locality and contextuality}

Another theory that is computationally tractable is the \emph{stabilizer subtheory}. 
As the name suggests it is a subtheory of quantum theory where one is restricted to  transformations from the Clifford group (generated by the Hadamard, phase-gate, and CNOT), and Pauli group measurements.  Being restricted to these operations, one can only reach the stabilizer states. Classical controls are also allowed. This subtheory has an efficient classical simulation and this is shown by the Gottesman-Knill theorem~\cite{Gottesman1998}. 
An in-depth discussion of its resource requirements can be found in \textcite{Aaronson2004}.

It turns out that Spekkens' model is closely related to the stabilizer subtheory~\cite{Spekkens2016}. 
\textcite{Pusey2012} showed that the number of states in Spekkens' model is the same as the stabilizer states, and it has operations that are similar to the generators of the Clifford group.
However, it is not a restricted version of quantum theory  --- as we saw in \cref{sec:QSL_Elementary_Systems} there are transformations that
correspond to anti-unitary transformations.

QSL contains Spekkens' model and is therefore not a restricted version of quantum theory. It is per construction completely
local, since all information propagates through local interactions, while the
stabilizer subtheory is not. 
At this point, it is important to note that both QSL and Spekkens' model are both also non-contextual, in contrast to stabilizer QM.
Consider an observable $A$ that is jointly measurable with observables $B$ and $C$, where $B$ and $C$ might not be jointly measurable.
Then, we can measure $A$ together with $B$, or $A$ together with $C$.
In quantum theory, $A$ would commute with both $B$ and $C$.
Even so, any attempt to assign values to the outcomes from measuring $A$, $B$, and $C$ can force the value of $A$ to depend on the \emph{context} of it being measured together with $B$ or with $C$, resulting in a contextual model \cite{Kochen1967}. 
\begin{definition}
	A non-contextual model is a model where measurement outcomes do not depend on
	the context of the measurement.
\end{definition}

Using the above definition, we arrive at the following theorem.

\begin{theorem}\label{thm:non-contextual}
	QSL is a non-contextual model.
\end{theorem}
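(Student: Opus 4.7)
The plan is to exploit the fact that every QSL system has, at all times, a well-defined underlying bit-string state (the analogue of an ontic state), and that each measurement is implemented as the reading of a fixed function of these underlying bits. Since the function being read does not depend on which commuting observables one chooses to measure simultaneously, contextuality cannot arise.

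First, I would make explicit the value assignment. For any QSL system the state is a tuple of computational and phase bits $(x_{n-1},p_{n-1})\dots(x_0,p_0)$, each bit being a (possibly random) element of $\{0,1\}$. Every measurement discussed in \cref{sec:QSL} reads a deterministic Boolean function of these bits: for example, $\Z$ on system $i$ returns $x_i$, $\X$ on system $i$ returns $p_i$, $\Y$ on system $i$ returns $x_i\oplus p_i$, a joint $\Z\times\Z$ on systems $i,j$ returns $x_i\oplus x_j$, a joint $\X\times\Z$ returns $p_i\oplus x_j$, and so on. Denote by $v_A$ the Boolean function associated with observable $A$.

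Second, I would verify the non-contextuality condition. Suppose $A$ is jointly measurable both with $B$ and with $C$ in QSL. By construction of the QSL measurements, the outcome assigned to $A$ in either context is simply the value $v_A$ evaluated on the current underlying bits, independently of whether the apparatus is simultaneously reading $v_B$ or $v_C$. Hence the value assigned to $A$ does not depend on the context. This matches the definition given just above the theorem statement, so the assignment $(x_{n-1},p_{n-1},\dots,x_0,p_0)\mapsto v_A$ provides a non-contextual value assignment for every jointly measurable set of observables.

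Third, I would address the post-measurement update, which is the only subtle part. A measurement in QSL not only reads $v_A$ but also randomizes the complementary bits. One must check that this update does not introduce context-dependence into the reported outcomes: the randomization of unread bits happens after the value has been read, and its effect on subsequent measurements depends only on which observable was actually measured, not on the label attached to a commuting partner. Consequently the outcome sequence of any compatible set is consistent with a single underlying assignment, and the model is non-contextual in the sense of the definition. The main obstacle will be arguing cleanly that the state-update rule never manufactures contextual correlations, in particular for the Toffoli-enabled joint measurements beyond Spekkens' setting; I expect this to reduce to observing that every QSL transformation and measurement is a classical reversible or classical stochastic map on the underlying bits, so the entire process admits a joint distribution over ontic states, which by standard results rules out contextuality.
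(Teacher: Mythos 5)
Your proposal is correct and follows essentially the same route as the paper: the underlying computational and phase bits constitute a simultaneous value assignment to all observables, each measurement reads a fixed Boolean function of those bits, and hence no outcome can depend on the measurement context. The paper's proof is a one-line version of exactly this argument; your additional care about the post-measurement randomization is a reasonable elaboration but not a different method.
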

\begin{proof}
	Per construction, QSL has a simultaneous value assignment to all observable quantities, and these values do not change dependent on the measurement or measurement context that we choose to use to retrieve them. 
  Thus, measurements outcomes in QSL 	does not depend on the context of the measurement.
\end{proof}

\subsubsection{QSL extends the state space of Spekkens' model}

QSL also allows for a strictly larger set of states
than Spekkens' model and the stabilizer subtheory. 
To see why, consider the construction in
\cref{fig:GHZ_with_Toffoli} that produces a simulation of the GHZ-state
\begin{equation}
\frac{\ket{000}+\ket{111}}{\sqrt{2}}.
\end{equation}
If the least significant system is measured in the computational basis, the
value of $x$ is retrieved. If the second system is measured in the phase basis,
the value of $y\oplus x  z$ is retrieved. This information is enough to
completely specify the state of the most significant system, i.e., we know the
state of both the computational and phase bit, resulting in zero uncertainty. 

\begin{figure}
\centering
\includegraphics[scale=.8]{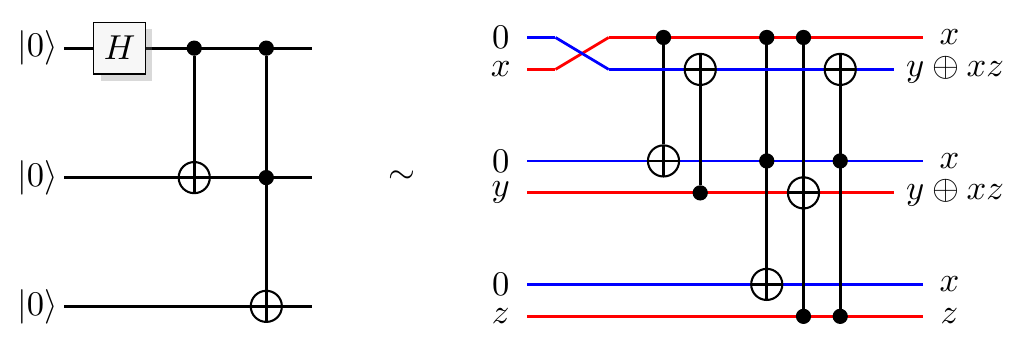}

\captionof{figure}{One construction that produces a GHZ-state, and the
matching QSL construction.}
\label{fig:GHZ_with_Toffoli}
\end{figure}	

We can repeat this scheme retrieving the basic bit values of a fourth system while reusing the two least significant systems as auxiliaries. By induction, with $n$ repetitions we can learn the basic bit values of $n$ QSL-systems, using only two auxiliary systems.   
Using the available reversible transformations, or permutations, we can create any value of the basic bit values we desire. 
The output of this procedure is not a valid (epistemic) state in Spekkens' model, and this has consequences for the set of pure states in QSL, because pure states are now states where both bit values are known. 

This also enables a larger mixed-state space.
It is true that Spekkens' original model \cite{Spekkens2007} does not allow for general mixtures of states since knowledge is defined in terms of partitions of the phase space corresponding to the support of the probability distributions used, and not the distributions themselves. 
For example, the only single-system mixed state allowed in Ref~\cite{Spekkens2007} is the completely mixed state, see \cref{eq:spekkensstates}.
However, it is simple to create any classical mixture of the six available pure states, corresponding to the octahedron in \Cref{fig:state_space}.
For example, to create a classical mixture of two pure states with probability $p$ for one of them, all that is needed is a number of auxiliary systems that is linear in the number of bits of $p$, all in the completely mixed state.
Measurement of each system gives a fair coin toss, and simple binary classical comparison with the number $p$ gives a binary output which is 1 with probability $p$. 
This can then be used to choose what state to prepare, resulting in the desired classical mixture of the two states.
The addition in QSL of the QSL-Toffoli enables mixtures within the larger tetrahedron, but note that the states outside the octahedron does not correspond to quantum states, at least not in a simple manner. 

Compare this to how with stabilizer
quantum theory we start with quantum theory and restrict it to a finite set of
states, transformations, and measurement. Adding the quantum Toffoli to
stabilizer quantum theory we asymptotically recover quantum theory. For one and
two elementary systems, QSL and Spekkens' theory are equivalent, and can be
derived by imposing restrictions on a classical statistical theory over the
phase space described by $\mathbb{Z}_2^{2n}$  (see \cite{Spekkens2016}). Adding
the QSL-Toffoli to Spekkens' model we can reach the ontic states and any
mixtures of these. Thus, asymptotically we recover the unrestricted classical
theory over $\mathbb{Z}_2^{2n}$.

\begin{figure}
\centering
\includegraphics[scale=.95]{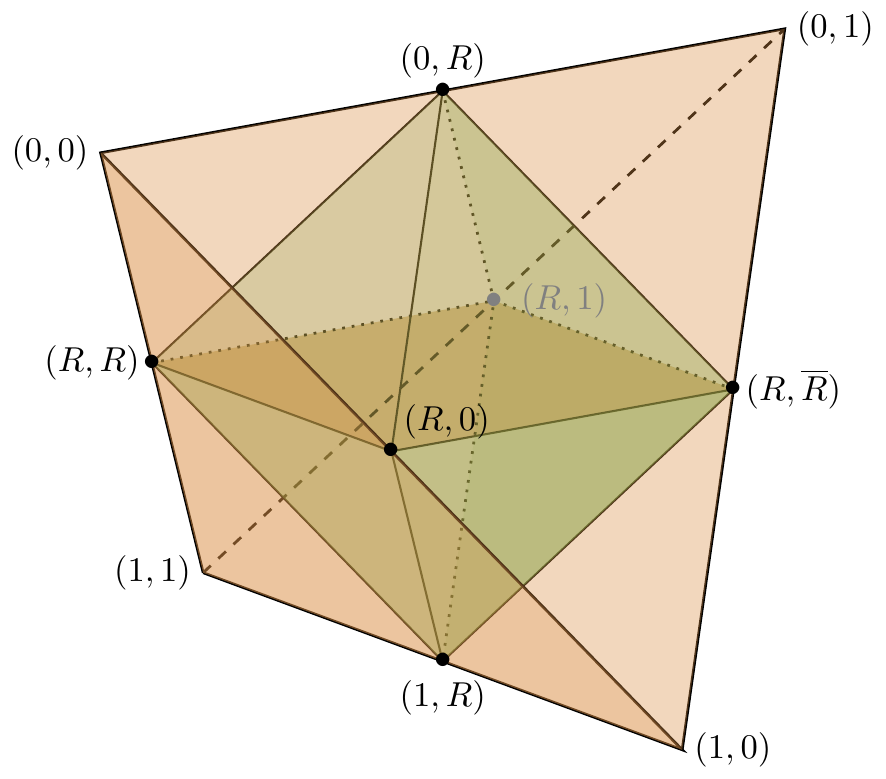}

\captionof{figure}{Representation of the state space of a single elementary system, with a geometry of a tetrahedron inscribing an octahedron. To aid with the visualization: if we take the vertices of the tetrahedron and fold them to one of the closest vertices on the octahedron, we recover the octahedron.}
\label{fig:state_space}
\end{figure}	

\subsubsection{QSL is an example of a Generalized Probability Theory}

Common for all these theories, quantum theory, the stabilizer subtheory, Spekkens' model, and QSL is that they are \emph{generalized probabilistic theories} (GPTs). 
That is, if we define a probabilistic theory $D$ aiming to describe the effects of measurement outcomes from these theories, it is necessarily non-Kolmogorovian. 
To see why let us consider $X=1$ and $Z=1$ as events in $D$. 
The observables do not commute in any of the four theories and are not simultaneously measurable. 
This means that $\{X=1\}\cap\{Z=1\}$ is not well-defined, and therefore, $D$ does not form a $\sigma$-algebra as required by Kolmogorov's third axiom. 
For another and more comprehensive account of this see~\textcite{Kleinmann2014}.

A $\sigma$-algebra is a collection of sets closed under the operations of complement, and countable intersection and union~\cite{Kallenberg1997}. 
If these sets --- connected to events --- are measurable, they can be used to form a logic. 
With the complement, intersection, and union as the negation, conjunction, and disjunction respectively (NOT, AND, and OR), we recover a Boolean algebra~\cite[297]{Judson2016}; the logic underpinning classical probability theory. 
However, since measurable quantities in the four theories do not form a $\sigma$-algebra --- some quantities are not simultaneously measurable --- they cannot form a Boolean algebra.
Systems that are described by these theories do not obey classical logic.

A hands-on example of non-classical logic, closely related to Spekkens' model and QSL, is owing to \textcite[21]{Cohen1989}. 
He considers a firefly in a box which is either lit up or not.
The box can be viewed from the front side, and from one of the adjacent sides.
On the front side of the box there are two windows, one to the right and one to the left. 
Also, to the side of the box there are two windows, one to the right and one to the left. 
Now, if the firefly is lit up, a single observer (without any depth perception) looking into either the front or side windows will only be able to tell whether the firefly was on the left or right side, or close to or far from the front side of the box. 
We can think of the box as divided into four partitions and a single observer can only distinguish between two of them at a time. 
Cohen then writes
\begin{quote}
``If we believe that [this] is the best possible characterization of the firefly system, then we believe our firefly in a box is a nonclassical physical system, because [looking at both front and side windows] cannot be performed simultaneously. 
If we believe that we can [look into both front and side windows] simultaneously, perhaps by positioning two observers, one at each window, [...] then we believe our system is classical and [that it has a better characterization].'' --- \textcite[25]{Cohen1989}
\end{quote} 

The connections to Spekkens' model and QSL is clear. 
The four partitions of the box relate to Spekkens' four ontic states, and we can relate the two windows to the two classical bits, the computational and phase bit, of a QSL-system. 
Note that if the firefly does not light up, the observer has no idea as to where it is, and the system is in the maximally mixed state. 
Cohen shows that the system is classical by finding a refined description, just like Spekkens' model and QSL is constructed in a way permitting a finer description, but without that finer description, all three operate per definition according to a non-classical logic.

To summarize, Spekkens' and QSL are not restricted quantum theories, like the
stabilizer subtheory, they are restricted \emph{classical} theories. They are
generalized probabilistic theories and systems behave according to a
non-classical logic, because of the restriction that turn them into generalized probabilistic theories. 

We will now switch our attention from protocols describing quantum phenomena,
to protocols for solving computational problems. First out is the
\textsc{Bernstein-Vazirani} problem, which builds on an important primitive
in quantum computation known as \textsc{Fourier sampling}.

\section{The \textsc{Bernstein-Vazirani} Problem}\label{sec:Bernstein-Vazirani}

In 1993 Ethan Bernstein and Umesh Vazirani~\cite{Bernstein1993,Bernstein1997}
devised the first decision problem that showed a significant oracle separation
between the quantum and classical (probabilistic) computational models. In
other words, they provided an oracle algorithm   separating the computational
complexity class \textbf{BPP} from the matching class for quantum machines ---
\textbf{BQP}. 
\textbf{BPP} contains all decision problems that can be solved in polynomial time with an error probability bounded away from $1/2$ on a probabilistic Turing machine, while \textbf{BQP} contains all decision problems that can be solved in polynomial time with an error probability bounded away from $1/2$ on a quantum Turing machine~\cite{Bernstein1997}. 

\subsection{Problem Formulation}

The problem they solved is called \textsc{Recursive Fourier Sampling}, but here we will only consider the base problem that we call the \textsc{Bernstein-Vazirani} problem, where there is less than the above-mentioned advantage, see below.
Consider that we are given access to an oracle computing a Boolean
function, $f:\{0,1\}^n \to\{0,1\}$, promised to be of the particular linear
form $f(x)=s\cdot x$, where $x,s \in \{0,1\}^n$. The task is now to find the ``secret'' string~$s$.

\begin{figure}
\centering
\includegraphics[scale=1]{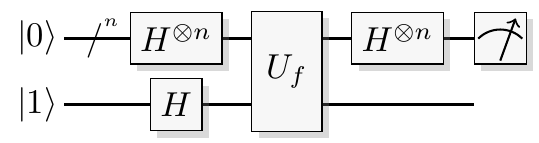}

\captionof{figure}{Quantum circuit for the \textsc{Bernstein-Vazirani} algorithm}
\label{fig:B-V}
\end{figure}	

\subsection{Classical Algorithm}

In the case where the oracle only gives us access to the function values, we query the
function for all inputs $x$ with a Hamming weight of $1$, i.e., all inputs
\begin{equation}
\tilde{x}_i=
\begin{cases}
x_j=1, &\text{for }j=i \\ 0, &\text{otherwise,} 
\end{cases}
\end{equation} where $j\in\{1,2,\ldots,n\}$ is the bit-index. Then the
function will answer with one new bit of information about $s$ for each query,
$f(\tilde{x_i})=s_i$. After $n$ queries we have retrieved $s$. 

\subsection{Quantum Algorithm}

If instead the oracle is given as a unitary transformation, adding the
result of the query to the query-qubit in the computational basis, specifically
$ \ket{x}\ket{y}\mapsto \ket{x}\ket{y\oplus f(x)} $, then a quantum algorithm
can solve the problem with a single query.

\begin{tcolorbox}
	\begin{algorithm}[\textcite{Bernstein1993,Bernstein1997}]
		Proceed with the following steps.
		\begin{enumerate}
			\item Prepare an $n$-qubit
			query register and an additional answer qubit in the state $\ket{0^n}\ket 1$.
			\item Apply the Walsh-Hadamard transform to the query register and
			the output qubit.
			\item Apply the oracle.
			\item Apply the Walsh-Hadamard transform to the query register.
			\item Measure the query register in the computational basis.
		\end{enumerate}
		The measurement in the last step will reveal the secret string $s$.
	\end{algorithm}
\end{tcolorbox}

From step 1 and 2 we have
\begin{equation}
\ket{0^n}\ket{1}\xrightarrow{H^{\otimes n}\otimes H} \frac{1}{\sqrt{2^{n}}}\sum_x \ket{x}\ket{-}.
\end{equation}
From here on, if nothing else is stated, summation indexes run over the
variable's whole domain. Applying the oracle (step 3) we get
\begin{equation}
\begin{split}
&\,\frac{1}{\sqrt{2^{n+1}}}\sum_x \ket x(\ket{ 0 \oplus f(x)} -\ket {1 \oplus
f(x)}) \\ =& \frac{1}{\sqrt{2^{n+1}}} \sum_x (-1)^{f(x)}\ket x(\ket 0 -\ket
1) \ ,
\end{split}
\end{equation}
and in step 4 the inverse Hadamard transform on the query register gives
\begin{equation}
\begin{split}
&\frac{1}{2^n\sqrt{2}} \sum_{x,z} (-1)^{f(x)-x\cdot z}\ket z(\ket 0 -\ket
1)\\
= \ &\frac{1}{2^n\sqrt{2}} \sum_{x,z} (-1)^{(s-z)\cdot x}\ket z(\ket 0 -\ket
1)\\
= \ &\frac{1}{\sqrt{2}}\ket s(\ket 0 - \ket 1) ,
\end{split}	
\label{eq:FourierSampling}
\end{equation}
where the last identity is given by evaluating the sum over $x$ to obtain $
2^n $ if $ z=s $ and zero otherwise. By measuring the query register (step 5) we
retrieve $s$ by calling the quantum oracle only once. This is compared to the linear number
of times in the case when we only have access to the function output.

\begin{figure*}[t]
\centering\hspace{-1cm}
\includegraphics[scale=1]{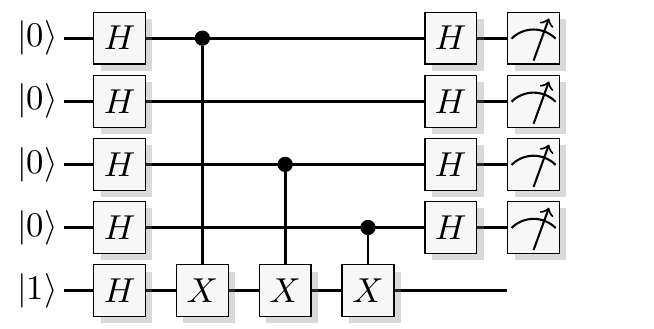}\hspace{3cm}
\includegraphics[scale=.65]{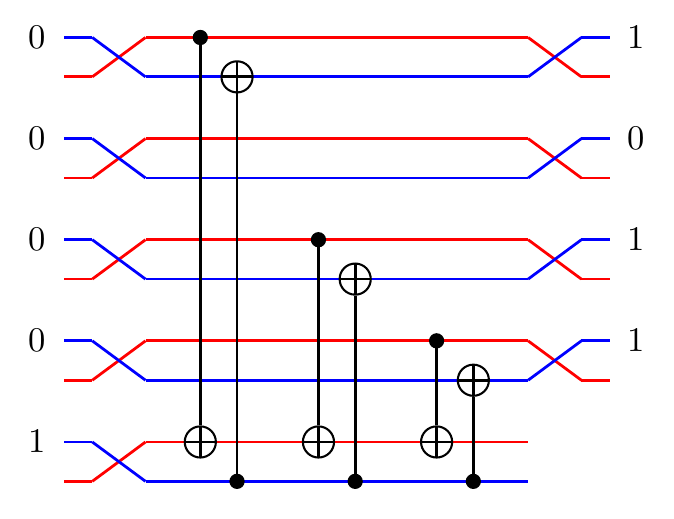}

\caption{Example of a quantum and the corresponding QSL algorithm for solving the \textsc{Bernstein-Vazirani} problem when the secret string is $ s=(1011) $. The unsigned red
inputs and outputs of the QSL-circuit are all uniformly distributed random
bits.}
\label{fig:BV_example}
\end{figure*}

\subsection{QSL Simulation}

To simulate this with QSL the first thing we need is to define the 
oracle. We know that it is promised to perform $f(x)=x\cdot s$ over the
computational basis, and such oracle can be constructed by only using $ CNOT $
gates. With their targets at the answer-qubit, and the controls on the qubits in
the query register that corresponds to where the bit values $s_i$ in the secret
string are set. This will produce the desired function. Observe that this allows
us to construct the whole algorithm using only Clifford group operations, and it
is well-known that there is an efficient classical simulation via the
Gottesman-Knill theorem. However, the Stabilizer subtheory is a nonlocal
contextual model, while QSL is local and non-contextual. Therefore, if this
construction works in QSL we can rule out these from being required properties.

First, prepare $n$ QSL-bits for the query register all in the
$(0,R)$ state, and one QSL-bit for the answer register initiated to $(1,R)$
\begin{equation}
(0^n,R)\;(1,R')
\end{equation} 
where $R$ is a uniformly distributed random bit string, and $R'$ is a uniformly
distributed random bit. Applying the QSL-gate simulating the Hadamard to all
QSL-bits gives
\begin{equation}
(R,0^n)\;(R',1)\ .
\label{eq:102}
\end{equation}

In general, simulating the above construction of the oracle will result in the
following map 
\begin{equation}
(x,p)\;(a,b) \mapsto \big(x,p\oplus bs\big)\;\big(a\oplus f(x),b\big).
\end{equation}

For the state in expression (\ref{eq:102}) the oracle will add $f(R)$ (modulo 2) to the
computational bit of the answer register. Since the phase bit $b$ of the
answer register is set, the oracle will also have the effect of flipping the
phase bits of every QSL-system that acts as a control in the query register.
Since the phase bits in the query register are all zero, flipping each
phase bit for which $s_i=1$. This will induce $s$ into the phase, giving the state
\begin{equation}
\big(R,s\big)\;\big(R'\oplus f(R),1\big).
\end{equation}
The Walsh-Hadamard transform again swaps all computational and phase bits
of the query register
\begin{equation}
\big(s,R\big)\;\big(R'\oplus f(R),1\big).
\end{equation}
Now, measuring the computational bits of the query register will reveal the
secret string $s$, at the cost of only one oracle call. \cref{fig:BV_example}
shows an example for when the secret string is $s=(1011)$.

This is a special case of the more general procedure known as \textsc{Fourier
Sampling}, where all Boolean functions are allowed, that is, we are not
restricted to functions of the form $f(x)=x\cdot s$. 
In \textsc{Fourier Sampling} the outcomes of a computational basis measurement
is weighted by the absolute square of the Fourier (Walsh-Hadamard) coefficients.
In the quantum algorithm, the Fourier-basis contains the Fourier transform of the map, giving direct access to sampling from that distribution. For a classical reversible oracle that only gives access to the function values, the
values of the Fourier transform is not directly accessible. In other words,
having access to the function encoded in a unitary operator enables us to sample
from some distributions related to the function, not only the function itself.
To sample from these distributions only having access to the function output
might be hard. A discussion about this in relation to the
\textsc{Bernstein-Vazirani} problem can be found in \parencite{Wojcik2006}. The \textsc{Deutsch-Jozsa} problem in the next section is also a special case of \textsc{Fourier
Sampling}.

\section{The \textsc{Deutsch-Jozsa} Problem}\label{sec:Deutsch-Jozsa} 

The \textsc{Deutsch-Jozsa} algorithm~\cite{Deutsch1992} was the first oracle
algorithm that suggested there could be a substantial advantage of doing
information processing on quantum systems. It is a generalization of the
\textsc{Deutsch} algorithm (see \cref{sec:Interference}), and have been used in
many experimental demonstrations of quantum computing. For a detailed account see~\cite{Perez-Garcia2016} and citations therein.

\subsection{Problem Formulation} We are going to use two different problem
formulations, the first is due to \textcite{Cleve1998}.
\begin{definition}\label{def:D-J1}
Consider that you are given access to an oracle encoding a Boolean function, guaranteed to be either constant or balanced.
The problem is to determine whether the function is constant or balanced. 
\end{definition}
A \emph{constant} Boolean function is one that always returns 1, or always
returns 0. While a \emph{balanced} function returns an equal number of 1s and 0s, i.e.,
the string of values for all $2^n$ possible input values will have a Hamming weight of $2^{n-1}$.
To distinguish this string of all output states $(f(2^n-1)\ldots f(1)f(0))$,
that completely characterize the function, from the bit-string that we usually
call output, we will call it a \emph{function string}.

The second problem formulation we are going to use is the original formulation by \textcite{Deutsch1992}.
\begin{definition}\label{def:D-J2}
Consider that you are given access to an oracle implementing a Boolean
function. The problem is to determine whether
\begin{itemize}
\item [(i)] the function is \textbf{not} constant, or 
\item [(ii)] the function is \textbf{not} balanced.
\end{itemize}
\end{definition}

Here the function is not guaranteed to be of one or the other kind, it can be
any Boolean function, and one of these statements can always be found true. This is therefore a decision problem in contrast to the promise problem in \cref{def:D-J1}.

For completeness, we will go through the algorithms for solving this problem
having access to an oracle implementing the function over bits, qubits, and
then QSL-bits. We will also consider both definitions of the problem.

\subsection{Deterministic and Probabilistic Algorithms}	

If we get access to an oracle that computes the function $f:\{0,1\}^n
\to \{0,1\}$, then we can query the function $2^{n-1}+1$ times and decide both
problems. If all these outputs are 0 (or 1) then it cannot be balanced, and we
have solved the problem described in \Cref{def:D-J2}. The problem as described in
\Cref{def:D-J1} is also solved by the same algorithm, the only difference is the promise.

This algorithm solves the problem (according to both definitions) with
certainty, but with a number of queries exponential in $n$. This is viewed
as evidence that, relative to the oracle, these problems are not in
$\mathbf{P}$ (the class of problems solvable in polynomial time).

However, if we only require the algorithm to return the solution with an error
probability bounded away from $1/2$, they can be solved with only a few
(constant number of) queries. An algorithm that fails with at most probability $1/4$ is to query the function three times, and answer  constant (or not balanced in the decision problem) if all three outputs are equal, otherwise balanced (or not constant in the decision problem). The error probability can be calculated as follows.

Since balanced functions have an equal amount of 1s
and 0s in the function string, choosing inputs at random we will see a 0 or a 1 at the output
with equal probability. So, if the function is balanced, and we
query the function three times, the output (000) or (111) both occur with
probability $1/8$. Thus, there is then a 1/4 probability of wrongfully guess the
function to be constant (not balanced), and a corresponding success probability of 3/4,
independent of the problem size.
If the function is constant (not balanced), the algorithm will succeed with unit
probability. For the explicit analysis see \cite{Deutsch1992}. This also shows that,
relative to the same oracle, this problem is in \textbf{BPP},
according to both definitions.

\subsection{Quantum Algorithm}

Here we instead are given access to a quantum oracle, namely, one that implements the
function as a unitary over the computational basis. Specifically,
\begin{equation}
U_f\ket{x}\ket{y} = \ket{x}\ket{y \oplus f(x)}.
\end{equation}
The algorithm is as follows (see also \cref{fig:D-J}).
\begin{tcolorbox}
\begin{algorithm}[\textcite{Deutsch1992}] Proceed with the following steps.
\begin{enumerate}
\item Prepare an $n$-qubit query register in the state $\ket{0}$, and an
output qubit in $\ket{1}$.

\item Apply the Walsh-Hadamard transform to the query register and
the output qubit.

\item Apply the oracle.

\item Apply the Walsh-Hadamard transform to the query register.

\item Test if the output state of the query register is $\ket{0}$ by a measurement of the observable $\outer{0}{0}\otimes I_2$.
\end{enumerate}
If the test is positive, output ``constant,'' otherwise ``balanced''.
\end{algorithm}
\end{tcolorbox} 
\noindent
In the algorithm, step 1 and 2 result in
\begin{equation}
\ket{0^n}\ket{1}\xrightarrow{H^{\otimes n}\otimes H} \frac{1}{\sqrt{2^{n}}}\sum_x \ket{x}\ket{-} .
\end{equation}
The oracle (step 3) transforms this into
\begin{equation}
\begin{split}
&\frac{1}{\sqrt{2^{n+1}}}\sum_x \ket{x}\Big(\ket{f(x)}-\big|\overline{f(x)}\big\rangle\Big)
\\&\qquad\qquad
= \frac{1}{\sqrt{2^{n}}}\sum_x (-1)^{f(x)}\ket{x}\ket{-} .
\end{split}
\end{equation}
After the final Walsh-Hadamard transform of the query register (step 4) this becomes
\begin{equation}
\frac{1}{2^{n}}\sum_{x,z} (-1)^{f(x)+ x\cdot z}\ket{z}\ket{-}.
\end{equation}
The query register part of this state is measured to find the value of the Hermitian
projector $\outer{0}{0}$, which can be seen as a test of whether the system
is in the state $\ket{0}$, in other words, it tests whether ``the
state of the query register is unchanged after applying the circuit''. This
test will be positive with probability
\begin{equation}
\left|\frac{1}{2^n}\sum_x(-1)^{f(x)}\right|^2=
\begin{cases}
1,\quad \text{if }f(x)\text{ is constant} \\
0,\quad \text{if }f(x)\text{ is balanced,}
 \end{cases}
\end{equation} 
and that allows us to solve the problem with only one query, showing that
relative to the oracle this problem according to both definitions is in the
complexity class \textbf{EQP} (problems exactly solvable on a quantum machine
in Polynomial time)\cite{Bernstein1993}.
\begin{figure}
	\centering
	\includegraphics[scale=1]{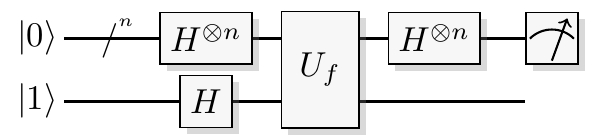}
	
	\captionof{figure}{Circuit representation of the \textsc{Deutsch-Jozsa} algorithm.}
	\label{fig:D-J}
\end{figure}

\subsection{The Problem for Small Input} 

It is well-known that for a small number of input qubits the quantum algorithm
admits an efficient classical simulation. \textcite{Collins1998} pointed out
that using the \textsc{Deutch-Jozsa} algorithm for a meaningful test of quantum
computation requires the number of input qubits to be strictly larger than
two. Similar findings have been made in
\textcite{Calude2007,Abbott2012a}.

The observation of \textcite{Collins1998} was that the algorithm was
completely independent of the answer register, which therefore can be omitted.
The important part of the construction is that it produces the correct
phase imprint in the query register. This is sometimes referred to as a
phase oracle. Please observe that this kind of oracle does not allow for
retrieving function values, and cannot be used to employ the regular solution,
but only makes the quantum algorithm available. 
Further, having access to a unitary implementing a Boolean function, a phase
oracle can efficiently be constructed~\cite{Bernstein1997}.

We can summarize these results simply by the fact that all balanced and
constant functions over one and two bits have oracles that only use
Clifford group operations (see \cref{fig:D-J_oracle_for_1and2})  and therefore admit a classical simulation in ether
QSL~\cite{Johansson2017} or the stabilizer subtheory.

\begin{figure}
\centering
\includegraphics[scale=1]{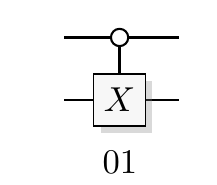}%
\includegraphics[scale=1]{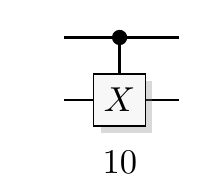}\\
\includegraphics[scale=1]{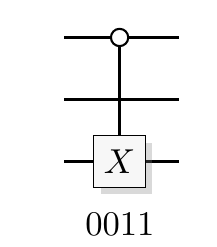}%
\includegraphics[scale=1]{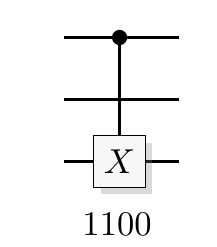}%
\includegraphics[scale=1]{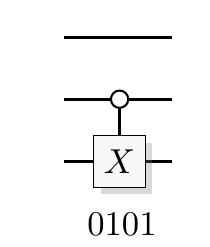}%
\includegraphics[scale=1]{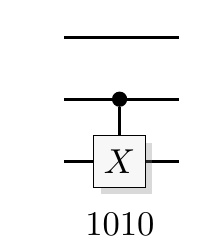}\\
\includegraphics[scale=1]{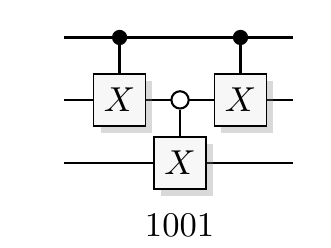}%
\includegraphics[scale=1]{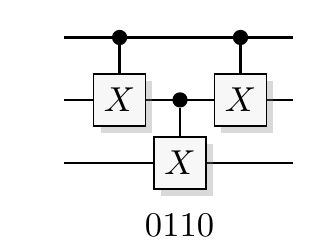}\\
\captionof{figure}{Explicit implementations of all balanced oracles for the \textsc{Deutsch-Jozsa} algorithm for one and two qubit input.}
\label{fig:D-J_oracle_for_1and2}
\end{figure}

For three qubits of input, and not constructing phase oracles, all 72 balanced
and constant function can be implemented in a reversible circuit using only
one Toffoli gate and five \textit{CNOT}s (for the explicit implementations see \Cref{sec:B}). Mapping these circuits
into the corresponding QSL-circuits, we obtain an implementation of an oracle
for each of the 72 functions, and using these to simulate the \textsc{Deutsch-Jozsa}
algorithm, QSL solves the problem with unit probability.

In general, there are two constant functions and
\begin{equation}
{2^n\choose{2^{n-1}}}\ge \Big(\frac{2^n}{2^{n-1}}\Big)^{2^{n-1}}=2^{2^{n-1}}
\end{equation} 
balanced functions, because of the simple bound
\begin{equation}
{n\choose k}\ge\Big(\frac nk\Big)^k.
\end{equation} 
This makes it intractable to construct, or even to index, all oracles explicitly. 
Thus, the oracle paradigm unavoidable in this problem setting. This will be discussed further in \cref{sec:Oracles}.

\subsection{QSL Simulation Guaranteed a Constant or Balanced Function}\label{sec:QSL_DJ1}

To approach the problem with QSL we need to determine the effect of an oracle
from QSL-bits onto QSL-bits.
We will start by specifying a valid implementation of a quantum
oracle, and then simply map all the quantum gates of the implementation
into QSL-gates. This will allow us to obtain an expression for the
effect of the QSL oracle, and show that there exists an oracle corresponding to a
simulation of a valid quantum oracle.
We then go on to show that if we are given access to this oracle, that is
sufficient to solve the problem with a single query.

The implementation of the quantum circuit that we are going to use is shown in
\cref{fig:D_J_oracle_1}, and employs two Toffoli gates and two boxes
representing a permutation $\pi \in S_{2^n}$ of the computational basis
states. These can be constructed from $NOT$, \textit{CNOT}, and Toffoli gates, and at
most one auxiliary bit~\cite{Shende2003}.

There is one parameter $\ket{b}$, that determines what type of function the circuit implements. If $\ket{b_0}=\ket{1}$ the function is balanced and if $\ket{b_0}=\ket{0}$ it is constant. 
If the function is constant, the value is $f(x)=b_1$.
The idea is to use a simple construction for one particular balanced function, and then use a permutation of the possible values $x$ to make all balanced functions available. 
If $\ket{b_0}=\ket{1}$ and the permutation is the identity, the Toffoli connected to $\ket{b_0}$ will give the balanced function $f(x)=x_{n-1}$
If the permutation is any other than the identity permutation, the input values are permuted before the function is calculated and written to the answer register. 
This makes all balanced functions available in the used reversible circuitry. 
The inverse permutation is used on the query register to uncompute the permutation, assuring that we get the map $\ket{x}\ket{y} \mapsto \ket{x}\ket{y \oplus f(x)}$.

\begin{figure}
\centering
\hspace*{.5cm}\includegraphics[scale=1]{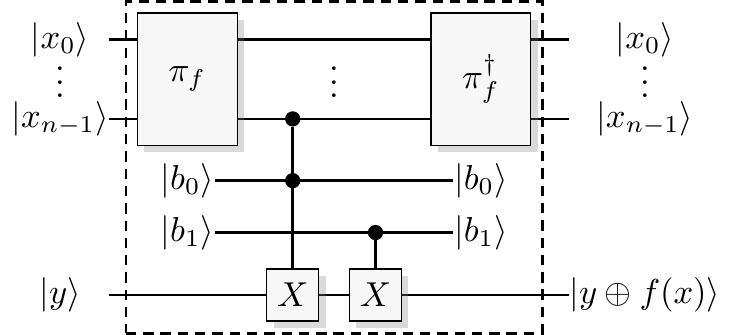}

\captionof{figure}{A specific implementation of a quantum oracle that can
	be used to solve the \textsc{Deutsch-Jozsa} problem in \Cref{def:D-J1}. The
	internal parameter $b_0$ chooses between a constant or balanced function, and if the function is constant, the value is $b_1$. 
	For balanced functions, the idea is that the
	first Toffoli acting as a \textit{CNOT} to produce one specific balanced function (if the permutation $\pi_f$ is the identity), adding the most significant bit to the output, $f(x)=x_{n-1}$. 
	Access to all balanced functions is then obtained by an arbitrary permutation of the output.}
\label{fig:D_J_oracle_1}
\end{figure}

We start by analyzing the effect of the permutation box mapped into QSL. 
Its effect can be separated into two parts, one is the effect on the computational bits and one on the phase bits. 
The effect on the computational bit-string is the same as the effect of the quantum gate $\pi_f$ on the computational basis. 
The effect over the phase bit-string will be another permutation, and since any computational basis permutation can be built from $X$, \textit{CNOT}, and Toffoli gates, the corresponding QSL gates can be used to build a QSL permutation
In general, if the construction uses Toffoli gates, the exact permutation on the phase bit-string will depend not only on which computational basis-string permutation is used, but also on the actual value of the computational bits.
Let us call this permutation $\pi_{f,x}$.
As we shall see, it does not matter precisely which permutation is realized, only that the resulting map is invertible.
The corresponding map to a computational basis permutation $\pi_f$ in QSL will be
\begin{equation}
(x,p) \mapsto \big(\pi_f(x), \pi_{f,x}(p)\big),
\label{eq:phasepermutation}
\end{equation}
and the effect of the inverse on the latter state, after a possible change of the phase bits, will be
\begin{equation}
(\pi_f(x),p') \mapsto \big(x, \pi_{f,x}^{-1}(p')\big).
\end{equation}
Then, the overall effect of the circuit in \cref{fig:D_J_oracle_1} is
\begin{equation}\label{eq:QSL_DJoracle}
\begin{split}
(x,p)\;&(y,r)
\xrightarrow{\pi_f} \ \big(\pi_f(x),\pi_{f,x}(p)\big)\;(y,r)\\
\xrightarrow{{\T_a}}\ &\big(\pi_f(x),\pi_{f,x}(p)+b_0r\delta_{n-1}\big)\;\big(y\oplus b_0{\pi_f(x)_{n-1}},r\big)\\
\xrightarrow{{\T_b}}\ &\big(\pi_f(x),\pi_{f,x}(p)+b_0r\delta_{n-1}\big)\;\big(y\oplus b_0{\pi_f(x)_{n-1}}\oplus b_1,r\big)\\
=\ &\big(\pi_f(x),\pi_{f,x}(p)+b_0r\delta_{n-1}\big)\;\big(y\oplus f(x),r\big)\\
\xrightarrow{\pi^{-1}_f}\ &\Big(x,\pi_{f,x}^{-1}\big(\pi_{f,x}(p)+b_0r\delta_{n-1}\big)\Big)\;\big(y\oplus f(x),r\big),
\end{split}
\end{equation}
where
$\pi_f(x)_{n-1}$ denotes the most significant bit of the input vector after
it has gone through the permutation, and $\delta_{n-1}$ is the bit vector for which the  most significant bit is 1 and all others 0.

If we now get oracle access to this function we can use it to query for particular function values, by preparing the registers in the states corresponding to $\ket{x}\ket{0}$ and applying the oracle. 
We can also use it to determine whether the function is constant or balanced by running the \textsc{Deutsch-Jozsa} algorithm. This gives
\begin{equation}
\begin{split}
(0,X)\;&(1,Y) 
\xrightarrow{\H^{\times (n+1)}}\ (X,0)\;(Y,1)\\
\xrightarrow{\U_f}\ &\Big(X,\pi_{f,X}^{-1}(\pi_{f,X}(0)+b_0\delta_{n-1}\big)\Big)\;(Y\oplus f(X),1)\\
\xrightarrow{\H^{\times n}}\ &\Big(\pi_{f,X}^{-1}\big(\pi_{f,X}(0)+b_0\delta_{n-1}\big),X\Big)\;(Y\oplus f(X),1).
\end{split} 
\end{equation}
If the function is constant, then $b_0=0$ and measurement of the query register will return $\pi_{f,X}^{-1}\big(\pi_{f,X}(0)\big)=0$. If
the function is balanced, then the measurement returns
$\pi_{f,X}^{-1}\big(\pi_{f,X}(0)+\delta_{n-1}\big)\neq0$, which is part from zero since $\pi_{f,X}^{-1}$ is injective. Note that the detailed behavior of the permutation $\pi_{f,X}$ is not important since knowing that it is invertible is enough. We have proven the following theorem.
\begin{theorem}
There is an efficient QSL algorithm that solves the \textsc{Deutsch-Jozsa}
problem by a single query to the oracle. This algorithm has an efficient
classical simulation on a PTM.
\end{theorem}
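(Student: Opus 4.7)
The plan is to mirror the quantum Deutsch-Jozsa algorithm inside QSL, using a specific gate-level construction of the oracle whose QSL analogue can be analyzed in closed form. First I would fix the quantum oracle of \cref{fig:D_J_oracle_1}: an arbitrary permutation $\pi_f$ of the computational basis (realizable from $X$, \CNOT{} and Toffoli gates with one ancilla \cite{Shende2003}), followed by two Toffolis whose control bits $b_0,b_1$ select between a constant function and the distinguished balanced function $f(x)=x_{n-1}$, followed by $\pi_f^{-1}$ which uncomputes the permutation on the query register. Varying $\pi_f$ reaches every balanced function, and the whole circuit realizes $\ket{x}\ket{y}\mapsto\ket{x}\ket{y\oplus f(x)}$, so it is a legitimate quantum oracle for the problem.

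Next I would replace every quantum gate in this circuit with its QSL analogue and track the state symbolically. The only nonroutine step is the effect of the permutation box in QSL: since it is built from $\X$, $\CNOT$ and Toffoli gates, it acts as $\pi_f$ on the computational bits but as some (possibly state-dependent) bijection $\pi_{f,x}$ on the phase bits, as in \cref{eq:phasepermutation}. Composing $\pi_f$, the two QSL Toffolis and $\pi_f^{-1}$ gives the closed-form map in \cref{eq:QSL_DJoracle}. Plugging this into the standard algorithm --- prepare $(0,X)\;(1,Y)$, apply $\H^{\times (n+1)}$, apply the QSL oracle, apply $\H^{\times n}$, and read the computational bits of the query register --- outputs $\pi_{f,X}^{-1}\bigl(\pi_{f,X}(0)+b_0\delta_{n-1}\bigr)$. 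A constant function has $b_0=0$, so the output is deterministically $0$; a balanced function has $b_0=1$, and the output is nonzero because $\pi_{f,X}^{-1}$ is a bijection. Hence a single QSL query decides the problem with unit success probability.

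Efficiency then follows immediately from \cref{lemma:efficiency}: the permutation $\pi_f$ has a reversible implementation of size polynomial in $n$, each gate of that implementation maps to a QSL gate of constant size, and every QSL gate is simulated on a PTM with a constant number of operations on two classical bits per elementary system. The main obstacle I anticipate is precisely the treatment of the induced phase permutation $\pi_{f,x}$. One must be confident that, even though it depends on the underlying computational values and is not explicit, it remains a bijection and is exactly cancelled, up to the known shift $\delta_{n-1}$, when $\pi_f^{-1}$ is applied. This reduces to observing that every QSL gate used is reversible, so the composition and its inverse behave as expected, and the shift $b_0\delta_{n-1}$ survives because it is injected between $\pi_f$ and $\pi_f^{-1}$ rather than commuting through them; once this is in hand, the rest of the argument is a direct unrolling of the circuit.
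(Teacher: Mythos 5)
Your proposal is correct and follows essentially the same route as the paper: the same oracle construction from \cref{fig:D_J_oracle_1}, the same gate-by-gate translation yielding the map of \cref{eq:QSL_DJoracle}, the same observation that the output $\pi_{f,X}^{-1}\big(\pi_{f,X}(0)+b_0\delta_{n-1}\big)$ is zero iff $b_0=0$ by injectivity of $\pi_{f,X}^{-1}$, and the same appeal to \cref{lemma:efficiency} for efficiency. Your anticipated obstacle concerning the state-dependent phase permutation is exactly the point the paper also flags and resolves by noting that only invertibility of $\pi_{f,X}$ matters.
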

That the simulation is efficient follows from the fact that the QSL circuit uses a number of gates polynomial in the input size, relative to the oracle, and from
\cref{lemma:efficiency}. 
The QSL algorithm gives the same answers as the quantum algorithm: the zero bitstring if the function is constant, and a nonzero bitstring if the function is balanced. 
It is thus a faithful simulation of the quantum algorithm, in other words, these procedures are operationally the same.

\subsection{QSL Simulation Accepting Arbitrary Boolean Functions}\label{sec:D-J_Arbitrary_Boolean}

To approach the less strict problem formulation (\cref{def:D-J2}) we need a
construction that can produce all Boolean functions. To do that we have chosen
an implementation using a comparator.
A comparator is a device that compares two values, and we have chosen to compare $x+a$ and $2^n$, and output $1$ if $x+a\ge2^n$ and $0$ otherwise. 
This is an example of a function $f(x)$ that output 1 for $a$ of the possible inputs, and 0 otherwise, for each $a\in\{0,\ldots,2^n-1\}$. 
A generic such function can be generated by using the computational basis permutation presented in \cref{sec:QSL_DJ1}. 

Such a comparator can be built in reversible logic by adapting a construction for a
ripple-carry adder, known as a Cuccaro adder \cite{Cuccaro2004,Markov2012}.
The Cuccaro adder uses four reversible logic registers $(c_{\text{in}},x,a,z)$, where $x$ and $a$ are input-strings of equal length, and $c_{\text{in}}$, and $z$ are single-bit registers for carry-in and carry-out respectively. The mapping is
\begin{equation}
(c_{\text{in}},x,a,z)\mapsto (c_{\text{in}},x,x+a,z\oplus c_{\text{out}}),
\end{equation}
but since we want the above comparator, we note that $c_{\text{out}}$ is set if and only if $x+a\ge2^n$ (generating a carry). 
By using the initial value $z=0$ and uncomputing the ripple-carry-adder intermediate values instead of completing the subtraction (see \cite{Markov2012}), we obtain the map
\begin{equation}
(0,x,a,0)\mapsto(0,x,a,x+a\ge2^n).
\end{equation}

An illustrative example of a comparator built to compare two 4-bit integers is
shown in \cref{fig:comparator}.
\begin{figure}
	\centering
	\includegraphics[scale=.9]{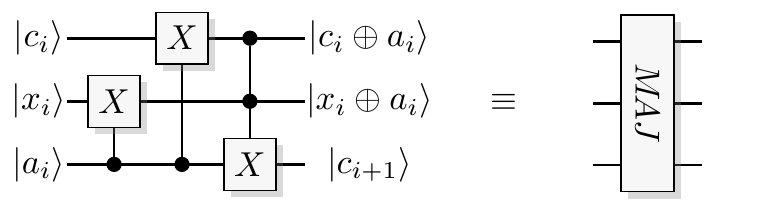}
	
	\captionof{figure}{$M\!AJ$ module where the output carry $c_{i+1}=a_i\oplus(c_i\oplus a_i)(x_i\oplus a_i)=a_ic_i\oplus a_ix_i\oplus c_ix_i$.}
	\label{fig:MAJ}
\end{figure}	

\begin{figure}
	\centering
	\includegraphics[scale=.7]{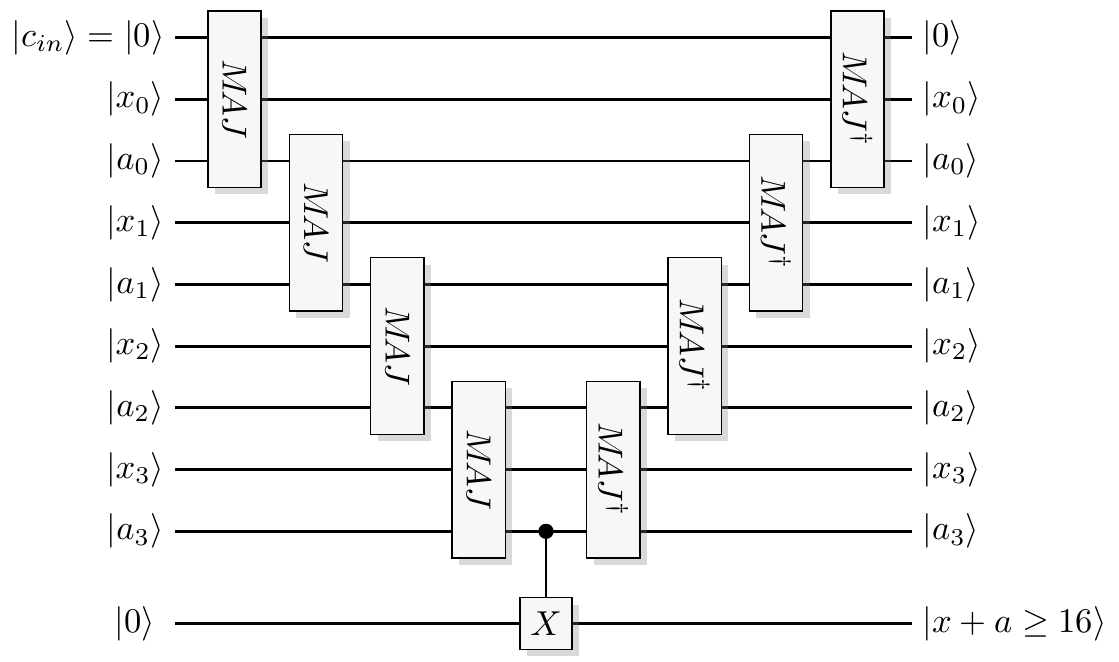}
	
	\captionof{figure}{Quantum circuit of a reversible comparator for comparing the sum of two 4-bit integers $x$ and $a$ to the number $2^4$, signaling if $x+a\ge2^4$ in the answer register.}
	\label{fig:comparator}
\end{figure}	

To enable the function that is constant $1$ one possible solution is to extend the comparator so that it works for $0\le a\le 2^n$.
Inverting the output if the additional bit $a_n$ equals 1 gives the constant function 0 for the value $a=0$ and the constant function 1 for the value $a=2^n$.
The complete construction of the circuit is shown in \cref{fig:all_Bool_oracle}.

\begin{figure}
\centering
\includegraphics[width=\linewidth]{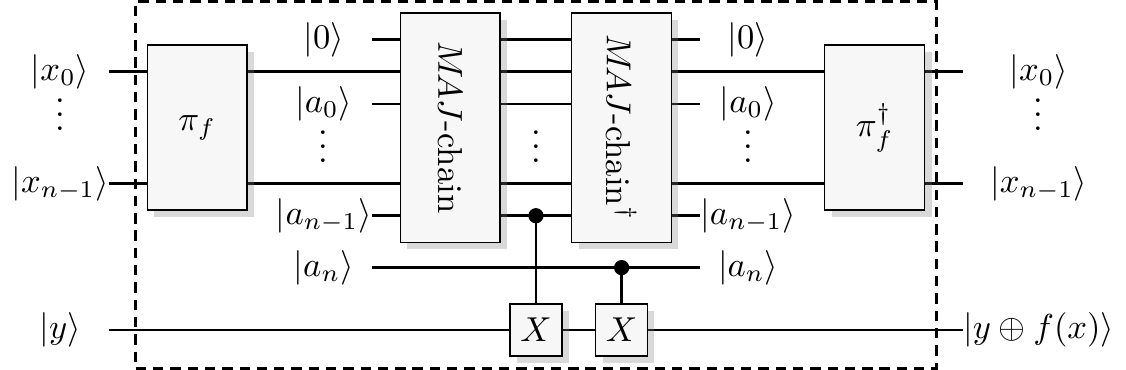}

\captionof{figure}{Construction of a circuit that can give any Boolean function. 
The idea is to construct one specific Boolean function for each number $0\le a\le2^n$, that give the output 1 on $a$ inputs by using a comparator. Then produce the other functions by using a permutation of the input states. 
The comparator is built through two chains of $M\!AJ$ and $M\!AJ^{\dagger}$ gates as in \cref{fig:comparator}.
}
\label{fig:all_Bool_oracle}
\end{figure}	

To verify that this gives the correct answer in the \textsc{Deutsch-Jozsa} algorithm, we need to check the phase kick-back of the gate array. 
For a given bit index, the $\MAJ$ and $\MAJ^{-1}$ gates reverse the phase transformation as well as the computational basis transformation. 
The one difference is the possible phase kick-back $\oplus k_i$ on the $a_i$ register, between the two gates, from the $i+1$st step of the chain or from the oracle target register if $i+1=n$. 
The effect of the three gates in the final $\MAJ^{-1}$ gate is
\begin{equation}
\begin{split}
  &(c_i\oplus a_i,\cdot)\;(x_i\oplus a_i,\cdot)\;(c_{i+1},\cdot\oplus k_i)\\
  &\rightarrow
  (\cdot,\cdot\oplus k_i(x_i\oplus a_i))\;(\cdot,\cdot\oplus k_i(c_i\oplus a_i))\;(\cdot,\cdot\oplus k_i)\\
  &\rightarrow
	(\cdot,\cdot\oplus k_i(x_i\oplus a_i))\;(\cdot,\cdot\oplus k_i(c_i\oplus a_i))\;\\
&\hspace{3.5cm}\times(\cdot,\cdot\oplus k_i\oplus k_i(x_i\oplus a_i))\\
  &\rightarrow
(\cdot,\cdot\oplus k_i(x_i\oplus a_i))\;(\cdot,\cdot\oplus k_i(c_i\oplus a_i))\;\\
&\hspace{3.5cm}\times(\cdot,\cdot\oplus k_i\oplus k_i(x_i\oplus c_i))\\
\end{split}
\end{equation} 
There are two cases we need to check.
If $f$ is constant then $a=0$ or $a=2^n$, and if $f$ is balanced then $a=2^{n-1}$.
In both cases $c_i=a_i=0$ for $0\le i\le n-2$, making $k_i(c_i\oplus a_i)=0$, so the phase kick-back is 0 for all the query register bits except the most significant bit.
For the most significant bit, it is still the case that $c_{n-1}=0$ so that the phase kick-back is $k_{n-1}(c_{n-1}\oplus a_{n-1})=k_{n-1}a_{n-1}$. 
The \textsc{Deutsch-Jozsa} algorithm now gives
\begin{equation}
\begin{split}
	(0,X)\;&(1,Y) 
	\xrightarrow{\H^{\times (n+1)}}\ (X,0)\;(Y,1)\\
	\xrightarrow{\U_f}\ &\Big(X,\pi_{f,X}^{-1}(\pi_{f,X}(0)+a_{n-1}\delta_{n-1}\big)\Big)\;(Y\oplus f(X),1)\\
	\xrightarrow{\H^{\times n}}\ &\Big(\pi_{f,X}^{-1}\big(\pi_{f,X}(0)+a_{n-1}\delta_{n-1}\big),X\Big)\;(Y\oplus f(X),1).
\end{split} 
\end{equation}
This is exactly the same behavior as in the previous section with $a=2^{n-1}b$. 
In fact, $a=2^{n-1}b$ makes the present oracle perform the exact same map as the one in the previous section, both in the computational and phase basis.
The same analysis as in \cref{sec:QSL_DJ1} now follows, and the measurement will reveal that the function was \emph{not balanced} ($a_{n-1}=0$ gives zero output) or \emph{not constant} ($a_{n-1}=1$ gives nonzero output). We have the following theorem.
\begin{theorem}
There is an efficient QSL algorithm that solves the \textbf{original}
\textsc{Deutsch-Jozsa} problem by a single query to the oracle. This
algorithm has an efficient simulation on a classical probabilistic Turing
machine.
\end{theorem}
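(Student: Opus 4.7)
The plan is to exhibit, for every Boolean function $f\colon\{0,1\}^n\to\{0,1\}$, an explicit QSL oracle whose behavior inside the standard Deutsch--Jozsa circuit is already enough to decide, with one query and with certainty, either ``$f$ is not constant'' or ``$f$ is not balanced'' as required by \cref{def:D-J2}. The construction proceeds in four steps: (i) write down a reversible classical circuit that can realize every Boolean function through a single integer parameter; (ii) map it gate-by-gate into QSL via the $\X$, $\CNOT$, and Toffoli prescriptions from \cref{sec:QSL}; (iii) compute the net map on both computational and phase bits, tracking the phase kick-back; (iv) run the QSL Deutsch--Jozsa circuit on the resulting oracle and read off the measurement.

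For step (i) I would use the Cuccaro-style ripple-carry comparator parameterised by $a\in\{0,\ldots,2^n\}$ that sends $(0,x,a,0)\mapsto(0,x,a,\mathbf{1}\{x+a\ge 2^n\})$, combined with the $a_n$-controlled output flip that covers the constant $1$ case, and composed with the computational-basis permutation $\pi_f$ of \cref{sec:QSL_DJ1} to reach every Boolean function. The constants correspond to $a\in\{0,2^n\}$, the balanced functions to $a=2^{n-1}$ (up to the permutation), and the rest of the $a$ values interpolate. For step (ii), mapping each $\MAJ$ and $\MAJ^{-1}$ box into QSL inherits a computational-bit action identical to the classical one and a phase-bit action that is a permutation $\pi_{f,x}$ depending on the current computational bits.

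For step (iii), the key observation is cancellation. Because every $\MAJ$ in the forward chain is mirrored by a $\MAJ^{-1}$ in the backward chain, their computational-bit actions compose to the identity, and the same is true on the phase bits \emph{except} when a phase kick-back is inserted on the middle wire between matching $\MAJ$ and $\MAJ^{-1}$ boxes. Since in the Deutsch--Jozsa protocol the only source of such kick-back is the target qubit prepared in $\ket{-}$, I would verify bit-by-bit along the chain that the residual phase contribution to the query register collapses to $a_{n-1}\,\delta_{n-1}$, exactly as in \cref{eq:QSL_DJoracle}, while everything else is reshuffled by an invertible $\pi_{f,X}$. Running the final Walsh--Hadamard on the query register and measuring in the computational basis then yields $\pi_{f,X}^{-1}\bigl(\pi_{f,X}(0)+a_{n-1}\delta_{n-1}\bigr)$, which is the zero string iff $a_{n-1}=0$. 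A case split on $a_{n-1}$ finishes the decision problem: $a_{n-1}=0$ covers $a<2^{n-1}$ and $a=2^n$, in which case $f$ is certainly not balanced; $a_{n-1}=1$ forces $a\notin\{0,2^n\}$, so $f$ is certainly not constant. Efficiency then follows from \cref{lemma:efficiency}, since the Cuccaro comparator and the permutation $\pi_f$ use only $O(n)$ QSL gates relative to the oracle.

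The main obstacle I anticipate is the phase-bit bookkeeping in step (iii). Each $\MAJ$ contains a Toffoli, and the QSL Toffoli acts on the phase bits in a way that depends on the computational bits of the controls, so the cancellation between a $\MAJ$ and its $\MAJ^{-1}$ is not immediate: one has to check that the computational values threaded between the two halves of the chain are exactly the ones needed for the phase permutation to unwind. Once that cancellation is verified on a single bit index, induction along the chain is routine, but this is the step where a careless calculation would break the argument.
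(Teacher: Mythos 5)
Your construction is the one the paper uses in \cref{sec:D-J_Arbitrary_Boolean}: a Cuccaro-style comparator parameterised by $a\in\{0,\dots,2^n\}$, an $a_n$-controlled output flip to reach the constant-$1$ function, composition with the number-state permutation $\pi_f$, a $\MAJ$/$\MAJ^{-1}$ phase-kick-back analysis, and the readout $\pi_{f,X}^{-1}\bigl(\pi_{f,X}(0)+a_{n-1}\delta_{n-1}\bigr)$. So the route is essentially identical to the paper's.

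There is, however, one step that would fail as you state it: the claim that the residual phase contribution to the query register collapses to $a_{n-1}\,\delta_{n-1}$ for \emph{every} $a$, which your final case split on $a_{n-1}$ tacitly relies on. The kick-back deposited on query bit $i$ by the matched $\MAJ$/$\MAJ^{-1}$ pair is $k_i(c_i\oplus a_i)$, where $c_i$ is the ripple carry; for a generic $a$ (e.g.\ $a=1$, so $a_0=1$ while $c_0=0$) this term need not vanish on the low-order bits, and the measured string is then not simply $a_{n-1}\delta_{n-1}$. The collapse holds exactly when $a\in\{0,\,2^{n-1},\,2^n\}$, because only then is $a_i=c_i=0$ for all $0\le i\le n-2$. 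Fortunately that is all the decision problem of \cref{def:D-J2} requires: for every other value of $a$ the function is neither constant nor balanced, so both admissible answers are true and the measurement outcome cannot produce an error. The paper makes this restriction explicit (``there are two cases we need to check''); inserting that observation into your step (iii) — verify the cancellation only for the three values of $a$ corresponding to constant and balanced functions, and note that correctness is automatic otherwise — repairs the argument without changing anything else.
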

That the simulation is efficient relative to the oracle follows from \cref{lemma:efficiency}. 

\subsection{Query complexity}

The standard analysis of the \textsc{Deutsch-Jozsa} problem states that if we only have access to the function $f:\{0,1\}^n \to\{0,1\}$, then it cannot be solved with unit probability using less than an exponential number of queries. 
This is usually taken as evidence that the problem is not in \textbf{P}. 
More carefully put, this is evidence that relative to this particular oracle, the problem is not in \textbf{P}. 
If we accept a solution with a bounded error-probability then, relative to the same oracle, the problem is in \textbf{BPP} \cite{Deutsch1992}.

Quantum computation allows us to solve the problem using only $\O(1)$ queries, so that relative to a quantum oracle the problem is in \textbf{BQP}.
However, the quantum oracle is different, since the function is encoded in a much richer framework --- as a transformation of the computational basis of a quantum system composed of qubits. 
Using this richer framework we can choose to extract additional information not available in the regular query model,
and it is this information that allows us to solve the problem using only $\O(1)$ queries. Relative to this \textit{richer} oracle, the problem is in \textbf{BQP}.

Note that, relative to an oracle in QSL, the \textsc{Deutsch-Jozsa} problem can be solved with only $\O(1)$ queries.
In addition, both the oracle and the \textsc{Deutsch-Jozsa} algorithm in QSL can be efficiently simulated on a classical probabilistic Turing machine, so that relative to an oracle in QSL, the \textsc{Deutsch-Jozsa} problem is in \textbf{BPP}.
In this particular case, there is no quantum advantage in terms of query complexity.

Furthermore, the ideal quantum algorithm solves the \textsc{Deutsch-Jozsa} problem with unit probability, so a more precise characterization is to put the problem into \textbf{EQP}. 
The QSL algorithm also provides the solution with unit probability, and in fact it provides it deterministically. 
It is even the case that the randomness present in the framework never contributes to the computation in the \textsc{Deutsch-Jozsa} algorithm, i.e., the random values can be replaced with a fixed value without interfering with the result. 
Combining this with \cref{lemma:efficiency}, we see that, relative to the QSL-oracles, the problem is in \textbf{P}.

\section{Oracles as a Comparison}\label{sec:Oracles}

The fact that the quantum query model possesses \textit{much richer} quantum oracles than standard classical binary-input binary-output function oracles, puts doubt in the standard comparison between the two.
The richness of the quantum oracle really calls for a comparison with a correspondingly rich classical framework in which oracles are able to encode the function in one subpart of the system, akin to the computational basis, and encode some additional function in another subpart of the system, akin to the phase basis. 
Any comparison between classical and quantum query complexity should take place between these two richer frameworks: one quantum, and one classical that allows these richer oracles. 
QSL has exactly the required properties, and can in turn be efficiently simulated on a classical probabilistic Turing machine.
There are several points to make here.

\subsection{The Additional Structure and Constraints}

Only having access to an oracle that computes a function from bits
to bits is insufficient for quantum computation. So what are the
conditions?

One condition is that the oracle needs to accept qubits as inputs. This is
sound but not sufficient, since there are many quantum operations that produce the
correct function map, but will not enable quantum computation. One example is
an implementation of the function over a completely phase-mixing channel.

A more precise condition is that the function needs to be implemented as a
unitary operator, but even this is not enough. The function needs to be implemented as a reversible function with all auxiliary bits cleared and the query register restored.
Even more important, the unitary operator also needs
to preserve the relative phases $c_{x,y}$
\begin{equation}
U_f\Big(\sum_{x,y}c_{x,y}\ket{x}\ket{y}\Big) =\sum_{x,y}c_{x,y}\ket{x}\ket{y\oplus f(x)}
\end{equation}

There are many unitary implementations encoding the function in the
computational basis for which the algorithm does not work. One example is the
unitary given by
\begin{equation}
U'_f\Big(\sum_{x,y}c_{x,y}\ket{x}\ket{y}\Big) =\sum_{x,y}c_{x,y}(-1)^{f(x)}\ket{x}\ket{y\oplus f(x)}
\end{equation}
for which the \textsc{Deutsch-Jozsa} algorithm will answer ``constant'' --- for all Boolean functions --- with unit probability \cite{Johansson2017}.
Another example is the unitary
\begin{equation}
U''_f\Big(\sum_{x,y}c_{x,y}\ket{x}\ket{y}\Big)
=\sum_{x,y}c_{x,y}(-1)^{x_0 +f(x)}\ket{x}\ket{y\oplus f(x)},
\end{equation}
for which the \textsc{Deutsch-Jozsa} algorithm will answer ``balanced'' --- for all Boolean functions --- with unit probability.
There are exponentially many unitaries of this kind.
Still, for all these examples, the function map is available and the oracle can still be used to solve the problem using the classical query algorithm.

A similar observation has been made in \textcite{Machta1998}.
In fact, most unitary implementations of $f$ will not
work for the algorithm since, in general, we can have an addition of
$(-1)^{g(x,y)}$ to the relative phases, where $g(x,y)$ is an arbitrary
function not necessarily related to~$f$.

Underneath the compact requirement that the function should be
implemented as the specific unitary $U_f\ket{x}\ket{y}=\ket{x}\ket{y\oplus f(x)}$
lurks an exponential number of constraints: the preservation of an
exponential number of complex amplitudes. 
This is the source of the richness of quantum oracles: the phase constraints enable access to much more information than can be accessed with the standard classical binary-input binary-output function oracle.
Only under these phase constraints it is possible to retrieve information otherwise not available from the phase degree of freedom, which in turn is useful for a more efficient solution of the problem under study.

\subsection{Is the Black-Box Black?}

The above difference between quantum oracles and classical oracles that only gives us access to function values, puts into question whether they can be used in a justified comparison. 
The broader non-black-box definition offers a good analogy. 
In this case, a quantum oracle should be described as a gray-box model; we have partial knowledge about the underlying model generating the statistics, rather than only getting access to the statistics.

Remember the example in \cref{sec:Preliminaries} where a gray-box model
is used in relation to cryptography. There, in addition to knowing the function map of the protocol, we also know that
it has a physical implementation in some electrical circuit, but we do not
know everything that there is to know about the box. In this model, which
includes a physical description, side channels become available, e.g., as information leakage through the power consumption of the circuit. In cryptography, a
protocol is not considered broken if there is a specific implementation that
leaks the secret, which might otherwise be hard to compute. Such a comparison
would be unjustified.

In quantum query complexity, the oracle also comes with a physical description,
namely, it needs to be implemented as one specific unitary, modulo global phase. Only
then will the additional information be available. It is not available
from where we usually read out the value of the function, i.e., in the computational basis. Instead, it is available as phase information, in another physical degree of freedom, constituting a side channel that allows us to access the extra information that enables the speed-up.
In quantum computing this side effect is systematically used to efficiently solve computational problems that are normally not efficiently solvable. 
This gives a direct comparison between solving computational problems in quantum computers, and breaking a cryptographic device through a side channel.

\subsection{Assumptions in the Use of Oracles}

Let us now briefly return to the definitions of \cref{sec:OracleDefinition}, and
note some differences to the use of the oracle notion in classical algorithms, quantum algorithms, and
QSL algorithms. In query complexity, an oracle is defined by what operation it
performs, having very few other properties, and then all statements that are
made are relative to that oracle. Informally the question is: if the oracle can
be computed in a single time step, what problems could then be solved with a
given amount of resources?

In classical query complexity the oracle computes the function and the standard
definition is that of a black box, i.e., there are no other properties to
declare. Relative to such an oracle the \textsc{Deutsch-Jozsa} problem cannot be
solved deterministically in polynomial time. In reversible classical logic the standard oracle definition is that of a black box that adds the result to a
target bit (see \cref{sec:OracleDefinition}). Also relative to such an oracle
the \textsc{Deutsch-Jozsa} problem cannot be solved deterministically in polynomial time.

In quantum query complexity the oracle is defined as a unitary transformation
computing the function. This gives a not-quite black box, because this adds the property of a specific relation
between the relative phases within the output states. Given such an oracle the \textsc{Deutsch-Jozsa}
problem can be solved with a single query. The justification for the interest in
quantum query complexity is that quantum theory predicts that these
oracles exist and can hypothetically be built.

Finally, in QSL the oracle is defined as a reversible transformation computing
the function, also adding the property of a specific relation between the
relative phases in the phase-basis output. Given such an oracle the \textsc{Deutsch-Jozsa} problem can be
solved with a single query. 
The justification for the interest in QSL query complexity is that QSL predicts that these oracles exist, in fact giving an explicit simulation in a classical probabilistic Turing machine.

In classical versus quantum oracle separations the assumption is that classical
oracles cannot encode any other information than the function. QSL serves as an
example that if we just slightly modify the classical reversible model, then we
do get classically simulatable oracles that systematically encode other information.

\subsection{Systematic Phase Errors}\label{sec:Sys_errors}

The additional information that we can choose to retrieve from a quantum
circuit, and that sometimes can be used to solve computational tasks, is
related to how unitary transformations manage relative phases. The
information in the amplitude and in the phase is mutually unbiased, and
QSL handles this by storing them independently. A function over the
computational bits is always accompanied by a corresponding function on the phase bits.

It would be convenient if, for any Boolean function $f$ over the
computational bits, we could have a unique expression for the effect on the
phase bits. Such an expression would be useful for analyzing the general
behavior of QSL. Unfortunately, different implementation of the same function
will result in different effects on the phase bits.

As an example, there is another procedure for creating the GHZ-state using
two \textit{CNOT}-gates rather than one \textit{CNOT} and one Toffoli-gate, as in
\cref{fig:GHZ_with_Toffoli}. This construction is shown in
\cref{fig:GHZ_with_cnots}. It is clear that these two constructions produce
different maps. They give the same map on the computational bits, but
different on the phase bits. It is not only the maps that are different, also the
probability distributions that we relate to quantum states are as well. To
see this, consider the marginal probability distributions from measuring the
least significant system in the computational basis, and the second system in
the phase basis. Doing this with the construction in
\cref{fig:GHZ_with_Toffoli}, we end up with a state represented by a
probability distribution that has 0 bits of entropy, as we have seen above.
On the other hand, doing the same for the construction in \cref{fig:GHZ_with_cnots}, using
two \textit{CNOT}-gates, the output state will have one bit of entropy.
\begin{figure}
\centering
\includegraphics[scale=0.85]{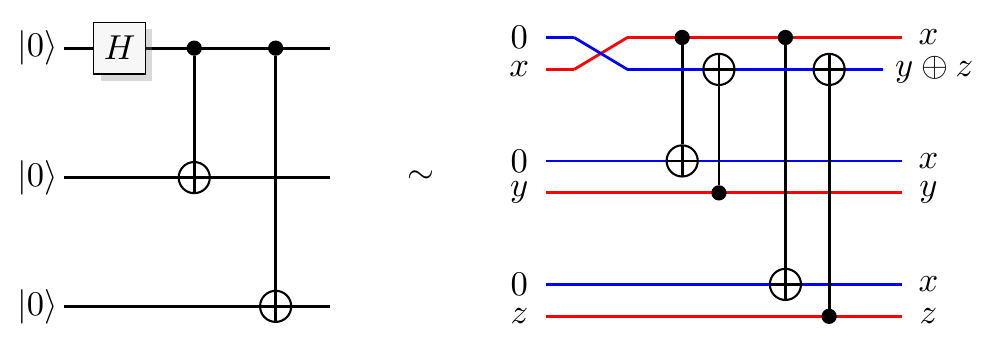}

\captionof{figure}{Another construction that produces a GHZ-state, and the
analogue construction in QSL.}
\label{fig:GHZ_with_cnots}
\end{figure}	

In other words, simulation of different constructions of quantum circuits
that produce the same function in the computational basis, does not produce
the same effect on the phase bits. The major inconvenience this phenomenon
brings is of course that we cannot analyze the general effect of applying a
function, but have to resort to specific constructions. If we find a
construction that lets us solve a problem efficiently, similarly as
in quantum theory, there is no guarantee that other constructions will work.
Conversely, if we fail to find a construction that works, we cannot say that
the quantum algorithm lacks an efficient simulation in QSL. However, this is
similar to what we would expect if we had an implementation of the Toffoli gate with systematic errors, and this will be clarified in what follows. 

The presence of systematic phase errors in a quantum gate array will influence the side channel enabled by the preservation of relative phases. 
Consider the following Toffoli gate that includes a systematic phase error $\theta$,
\begin{equation}
\begin{bmatrix}
1&0&0&0&0&0&0&0\\
0&1&0&0&0&0&0&0 \\
0&0&1&0&0&0&0&0 \\
0&0&0&1&0&0&0&0 \\
0&0&0&0&1&0&0&0 \\
0&0&0&0&0&1&0&0 \\
0&0&0&0&0&0&0&\makebox[3mm]{$e^{i\theta}$} \\
0&0&0&0&0&0&\makebox[0pt]{$e^{i\theta}$}&0 \\
\end{bmatrix}
\end{equation}
or equivalently
\begin{equation}
\sum_{x=0}^{5} \outer{x}{x} + e^{i\theta}\left(\outer{6}{7} + \outer{7}{6}\right).
\end{equation}
This unitary will be denoted $\tilde{T}$. Our aim is to
analyze four different constructions of the balanced function computing the
majority of three bits; the function with function string $ (11101000) $. The four constructions are shown in
\cref{fig:two_majority}. 

The construction in \cref{fig:two_majority}A is straightforward, using four 3-$\tilde{T}$ gates with their target in the answer register. 
If we construct the 3-$\tilde{T}$ gates as in \cref{fig:3-Toffoli_sym_error}, then the resulting operator will be
\begin{equation}
3\hypen\tilde{T}=\sum_{X=0}^{13} \outer{x}{x}+e^{i\theta}\left(\outer{14}{15} + \outer{15}{14}\right).
\end{equation}
The unitary implementation of \cref{fig:two_majority}A becomes $\ket{x}\ket{y}
\mapsto (e^{i\theta})^{f(x)}\ket{0}\ket{y \oplus f(x)}$, and the \textsc{Deutsch-Jozsa}
algorithm will wrongfully answer constant with probability
\begin{equation}
\left|\frac{1}{8}\sum_{x=0}^8 (e^{i\theta})^{f(x)}(-1)^{f(x)}
\right|^2=\left|\frac{4-4e^{i\theta}}{8}\right|^2=\sin^2\left(\frac{\theta}{2}\right).
\end{equation} 
With $\theta=\pi/3$ the error probability becomes $1/4$.

\begin{figure}
\centering
\begin{tikzpicture}
\node(a){\includegraphics[scale=0.7]{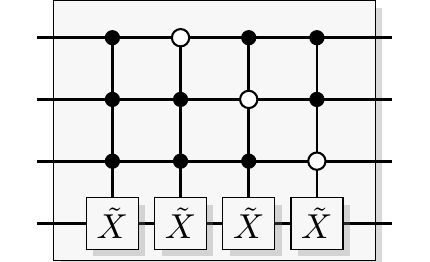}};
\draw (a.north west) +(4mm,0) node[anchor=north east] {\textbf{A.}};
\draw (a.north east) +(1mm,0) node[anchor=north west] (b){\includegraphics[scale=0.7]{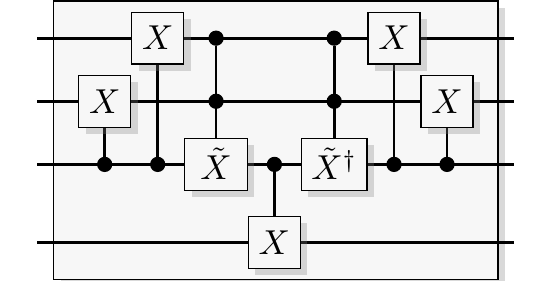}};
\draw (b.north west) +(4mm,0) node[anchor=north east] {\textbf{B.}};
\draw (a.north west) +(0,-24mm) node[anchor=north west] (c){\includegraphics[scale=0.7]{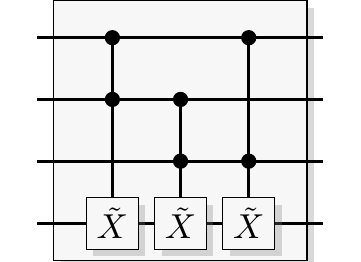}};
\draw (c.north west) +(4mm,0) node[anchor=north east] {\textbf{C.}};
\draw (b.north west) +(0,-24mm) node[anchor=north west] (d){\includegraphics[scale=0.7]{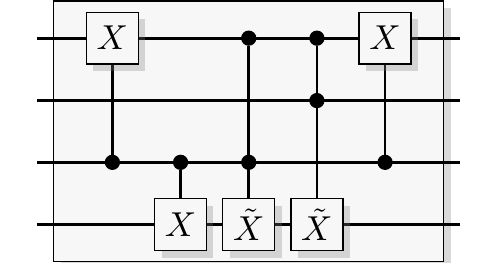}};
\draw (d.north west) +(4mm,0) node[anchor=north east] {\textbf{D.}};
\end{tikzpicture}
\captionof{figure}{Four different quantum circuit implementations of the same balanced function,
computing the majority $M\!AJ$ (see \cref{fig:MAJ}) of three bits. \textbf{A.} Straightforward
construction using four 3-$\tilde{T}$ gates. \textbf{B.} The majority  is
computed in-place in the query register, the answer is copied out to the
answer register, and the intermediate step is then uncomputed by applying
$M\!AJ^\dagger$ to the query register. \textbf{C.} is an optimization of \textbf{A.} tuned for as few gates as possible, and \textbf{D.} is another optimized for few Toffolis followed by few $CNOT$s. It was generated together with all 72 possible functions found in \cref{sec:B}.  } \label{fig:two_majority}
\end{figure}

\begin{figure}
\centering
\includegraphics[scale=.9]{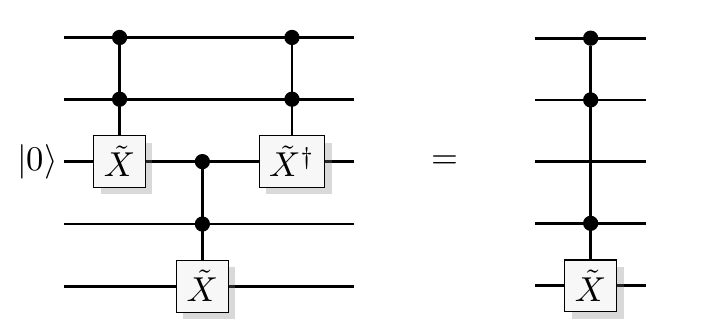}

\captionof{figure}{Scheme for constructing a 3-$\tilde{T}$ from three $\tilde{T}$ gates with an systematic error and an ancillary qubit initiated in
$\ket{0}$. }
\label{fig:3-Toffoli_sym_error}
\end{figure}

The construction in
\cref{fig:two_majority}B uses an intermediate step to compute the majority
in-place, using the $M\!AJ$-gate shown in \cref{fig:MAJ}, adding the result to the output, and then uncomputing the intermediate result by
applying $M\!AJ^\dagger$. With this construction, the \textsc{Deutsch-Jozsa} algorithm will (correctly) answer balanced with unit probability even though there is a phase error in the $\tilde{T}$ gate.

This reproduces the behavior of QSL, where using the QSL-Toffoli gives an error probability $ 1/4 $ when the oracle is
constructed as in \cref{fig:two_majority}A, and zero error probability if constructed as in \cref{fig:two_majority}B. The calculations can be found in \cref{sec:c}.

With the quantum circuit in \cref{fig:two_majority}C the error probability becomes 
\begin{equation}
\left|\frac{4-3e^{i\theta}-e^{i3\theta}}{8}\right|^2,
\end{equation}
and for \cref{fig:two_majority}D
\begin{equation}
\left|\frac{1-e^{i2\theta}}{8}\right|^2,
\end{equation}
and again, if $\theta=\pi/3$ these become $19/64$ and $1/64$ respectively. 
In QSL, only the three random bits going into the query register is taking part in the simulation.
Therefore, any probability distribution describing the outcome can only be resolved into fractions of $8$. 
The best approximations we can obtain for $19/64$ and $1/64$ is then $1/4$ and $0$ respectively. 
This is exactly the error probability that we see in QSL (see \cref{sec:c}).

Perhaps it is important to stress that the QSL-Toffoli is not a quantum Toffoli with a systematic error. 
It is a simulation of a quantum Toffoli that uses classical bits in the simulation. 
What we have seen here is that the QSL-Toffoli behaves similarly to a quantum Toffoli gate with systematic error, giving similar variations in output statistics for different constructions of the same function.

\subsection{Starting with Something Else Than Access to an Oracle}

It may also be argued that if we start from something else than access to an oracle, for instance a circuit implementation, it might be hard to translate that into QSL.
The question is why this would be considered an argument at all.
In the problems under study, there are simply too many possible functions to convey an explicit construction to the solver of the problem. 
For example, in the case of an explicit function from the \textsc{Deutsch-Jozsa} problem, the number of balanced functions grows as $ \sqrt{2^{2^n}} $, so that simply \textit{indexing} the explicit constructions would require an exponential amount of information.
How would it be possible to convey the explicit construction to the solver in this case?
If successful, the solver will have received an exponential amount of information in the transfer. 

One possibility to make the problem practical is to restrict the problem to a polynomial-sized subset of all balanced functions, but this would be a severe alteration of the problem formulation, and then there is no reason to believe that the proof of separation still applies.
The simplification to a polynomial-sized subset may enable a polynomial-time solution even in a classical Turing machine, and it is even possible to argue that this is likely the case.

In addition to this, given an explicit construction, for example in the form of a circuit or a procedure, it may well be that the structure of the explicit construction gives the solution away, the most clear example of this in quantum computation is the standard construction of the \textsc{Bernstein-Vazirani} oracle that can be found in \Cref{fig:BV_example}.
In the case that the solution cannot be found by inspection, the extraction of the solution from the structure would constitute a field of research in its own right. 
--- and with the interpretation of the extra information as a side channel, it could even be argued that quantum computation is part of that field of research. 

The above complications is the basic motivation to use the black-box query model in the first place. 
In the remainder of the paper, we will use the query model to investigate how QSL performs in two additional quantum algorithms: \textsc{Grover's} algorithm and \textsc{Simon's} algorithm. 
We will also have a look at a real-world algorithm, \textsc{Shor's} algorithm, where construction of the gate array is polynomial-time, and study the behavior of QSL in this situation.

\section{\textsc{Grover's} Algorithm}\label{sec:Grover's}

Given a Boolean function $f:\{0,1\}^n \to \{0,1\}$, \textsc{Grover's} algorithm~\cite{Grover1996}
is a probabilistic algorithm that returns an element of the preimage of
$f(x)=1$. That is, it returns with high probability a satisfying assignment
to $f$, if there is one. This algorithm can be used to solve the decision problem of answering whether $f$ has a satisfying assignment or not.

If we have no knowledge about the function, and we are only given oracle access
to it, we must resort to using exhaustive search --- simply querying the function for
all inputs until we find a $1$ or not. This will in the worst case require
$\O(2^n)$ queries. 
The best guess a probabilistic algorithm can use, without any knowledge about the
function, is to pick an input uniformly at random. This
will also require $\O(2^n)$ queries to solve with a bounded
error-probability.
Both of these estimates rely on the black-box nature of the function provided, because if the function gate array is available, it may be trivial to find the satisfying assignment, see for example \cref{fig:one-shot_Grover_example}.

The quantum algorithm proceeds as follows, and only requires
$\O(\sqrt{2^n})$ queries to solve the problem with a bounded error-probability.
\begin{tcolorbox}
\begin{algorithm}[\textcite{Grover1996}]
After performing a Walsh-Hadamard transform on the initial state
$\ket{0}^{\otimes n}\ket{1}$, repeat the following steps $\sqrt{2^n}$ times.
\begin{enumerate}
\item Apply the oracle
\item Perform a Walsh-Hadamard transform of the query register
\item Apply $(I-2\outer{0}{0})$ to the query register. This is an
$n$-controlled-$Z$ with all controls inverted (including the target $Z$).
\item Perform a  Walsh-Hadamard transform of the query register
\end{enumerate}
Then measure the query register, which will give you the satisfying
assignment with high probability (see further \cite{Nielsen2010}). 
One repetition of the steps 1-4 is known as the Grover operator.
\end{algorithm}
\end{tcolorbox}

\subsection{Problem Formulation} 

Here we will restrict the problem to the
worst case scenario, where there is only one satisfying assumption.
\begin{definition}
Assume that you are given access to an oracle for the function
$f:\{0,1\}^n\to\{0,1\}$ and that it only has one satisfying assignment
$x^*:f(x^*)=1$. Find $x^*$.
\end{definition}
This is the original problem formulation used by Grover. 
As usual, for the quantum case the quantum oracle is assumed to be encoded as a unitary function from qubits to qubits, and not as a Boolean function from bits to bits.

\subsection{One-Shot Grover} 
We will start with a small example which
is sometimes called \textsc{One-shot \textsc{Grover}}. This is the case where there
are two qubits in the query register\footnote{In general, the requirement for a
"one-shot" Grover is that $1/4$ of the input states are satisfying
assignments.} and the Grover operator only need to be applied once (see
\cref{fig:one-shot_Grover}).

\begin{figure}
\centering
\includegraphics[scale=1]{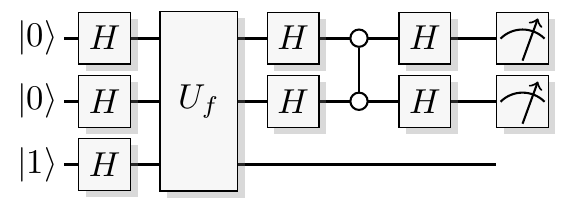}

\captionof{figure}{Circuit for the one-shot Grover instance.}
\label{fig:one-shot_Grover}
\end{figure}	

Since $U_f\ket{x}\ket{y}=\ket{x}\ket{y\oplus f(x)}$, it is straightforward to check that 
\begin{equation}
U_f\ket{x+}=\ket{x+},
\end{equation}
and
\begin{equation}
U_f\ket{x-}=
\begin{cases}
-\ket{x-}, & \text{ if }x=x^*\\
+\ket{x-}, & \text{ otherwise.}\\
\end{cases}
\end{equation}
This is another example of phase kick-back. From linearity we obtain
\begin{equation}
U_f=I\otimes\ket{+}\bra{+}+(I-2\ket{x^*}\bra{x^*})\otimes\ket{-}\bra{-}.
\end{equation}
The combination of steps 2, 3, and 4 is sometimes called ``inversion over the mean'', and can be written
\begin{equation}
  A=H^{\otimes 2}(I-2\outer{00}{00})H^{\otimes 2}=(I-2\outer{++}{++})
\end{equation}
The quantum algorithm proceeds as follows
\begin{equation}
\begin{split}
\ket{00}\ket{1} \xrightarrow{H^{\otimes 3}}\; &\ket{++}\ket{-}\\
\xrightarrow{U_f}\; &\left(\ket{++}-\ket{x^*}\right)\ket{-}\\
\xrightarrow{A\otimes I}\; &
\big(-\ket{++}-\left(\ket{x^*}-\ket{++}\right)\big)\ket{-}\\
=\; & -\ket{x^*}\ket{-},
\end{split}
\end{equation}
so that a measurement of the query register will reveal $x^*$ with unit
probability.

To see how this works in QSL we need to find an expression for the oracle. For
$n=2$ there are four functions with a single satisfying assignment
\begin{equation}
\begin{split}
&f(x)=\overline{x_1}\,\overline{x_0},\\
&f(x)=\overline{x_1}x_0,\\
&f(x)=x_1\overline{x_0},\ \text{and}\\
&f(x)=x_1x_0.\\
\end{split}
\end{equation}
The canonical constructions for these are to use the four different Toffoli gates
with different combinations of regular and inverted controls. Let us take the
example where $f(x)=\overline{x_1}x_0$. Here $x^*=(01)$ and the resulting
quantum circuit is shown in \cref{fig:one-shot_Grover_example}.
\begin{figure}
\centering
\includegraphics[scale=1]{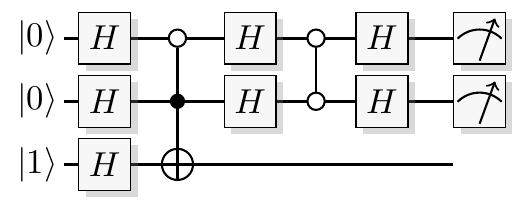}

\captionof{figure}{Example circuit of the one-shot Grover where $f(x)=\overline{x_1}x_0$.}
\label{fig:one-shot_Grover_example}
\end{figure}	 	
The operation $\mathcal{A}$ (``inversion over the mean'') would be
\begin{equation}
\begin{split}
&(x_1,p_1)\;(x_2,p_2)\\
\xrightarrow{\H^{\times 2}}\;&(p_1,x_1)\;(p_2,x_2)\\ 
\xrightarrow{\CZ_{00}}\;&
(p_1,x_1\oplus \overline{p_2})\;(p_2,x_2\oplus \overline{p_1})\\
\xrightarrow{\H^{\times 2}}\;&(x_1\oplus \overline{p_2},p_1)\;(x_2\oplus \overline{p_1},p_2).\\
\end{split}
\end{equation}
The complete algorithm becomes
\begin{equation}
\begin{split}
&(0,r_1)\;(0,r_0)\;(1,r_t)\\
\xrightarrow{\H^{\times 3}}\;&(r_1,0)\;(r_0,0)\;(r_t,1)\\ 
\xrightarrow{\O_f}\;&(r_1,r_0)\;(r_0,\overline{r_1})\;\big(r_t\oplus f(r),1\big)\\ 
\xrightarrow{\mathcal{A}\times \I}\;&
(r_1\oplus \overline{\overline{r_1}},r_0)\;(r_0\oplus \overline{r_0},\overline{r_1})\;\big(r_t\oplus f(r),1\big)\\=\;&
(0,r_0)\;(1,\overline{r_1})\;\big(r_t\oplus f(r),1\big)
\end{split}
\end{equation}
Measurement of the query register will reveal $x^*=(01)$ (see
\cref{fig:QSL_one-shot_Grover}). This will be true also for the other three
functions as well, and we will return to show this later on.

\begin{figure}
\centering
\includegraphics[width=\linewidth]{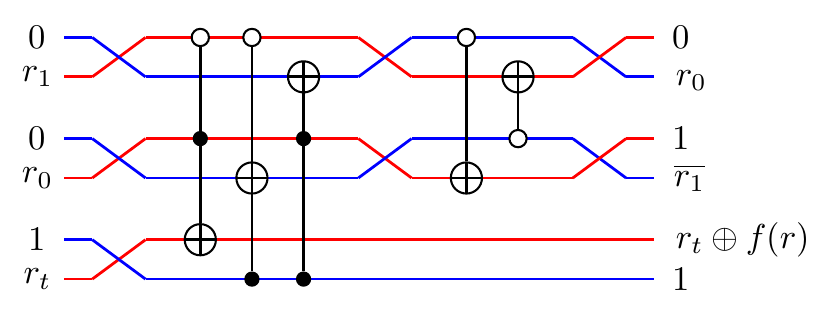}

\captionof{figure}{QSL simulation of the one-shot Grover instance from
\cref{fig:one-shot_Grover_example}.}
\label{fig:QSL_one-shot_Grover}
\end{figure}

\subsection{The $n$-Toffoli} 

To extend this to larger systems we need a description of the simulation of an $n$-Toffoli. 
For a quantum 3-qubit Toffoli we have the identity of \cref{fig:3-Toffoli_identity}.
\begin{figure}
\centering
\includegraphics[scale=1]{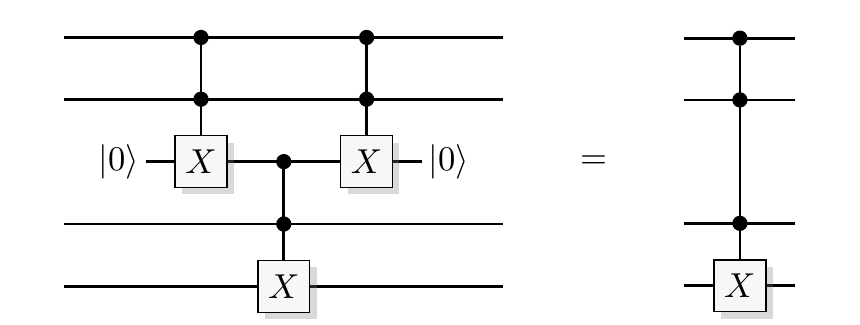}

\captionof{figure}{Scheme for construction a 3-Toffoli from three regular
Toffoli gates and an ancillary qubit initiated in $\ket{0}$.}
\label{fig:3-Toffoli_identity}
\end{figure}	
We see that the corresponding QSL map is
\begin{equation}
\begin{aligned}
&(x_2,p_2)&\quad&(x_2,p_2 \oplus t_px_0x_1)\\
&(x_{1},p_{1})&\quad&(x_{1},p_{1}\oplus t_px_0x_2)\\
&(0,a_1)&\mapsto\quad&(0,a_1)\\
&(x_0,p_0)&\quad&(x_0,p_0\oplus t_px_1x_2)\\
&(t_x,t_p)&\quad&(t_x\oplus x_0x_1x_2,t_p),
\end{aligned}
\end{equation}
which in QSL gives the relation shown in \cref{fig:QSL_3-Toffoli_identity}.
\begin{figure}
\centering
\includegraphics[scale=1]{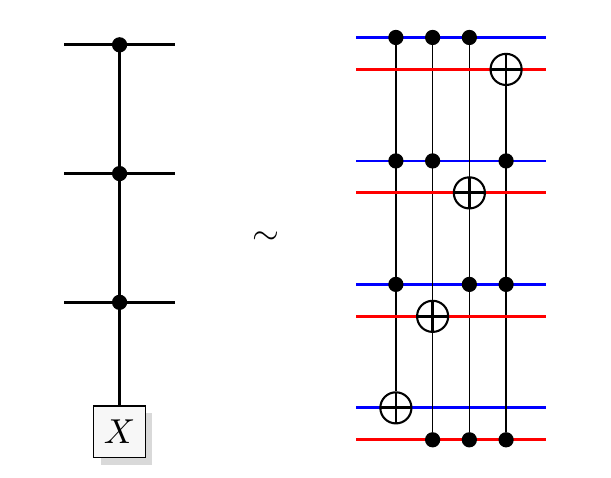}

\captionof{figure}{QSL identity for the 3-Toffoli}
\label{fig:QSL_3-Toffoli_identity}
\end{figure}	

Now using the same identity but by extending the 3-Toffoli to a 4-Toffoli (\cref{fig:4-Toffoli_identity}) we get the map
\begin{equation}
\begin{aligned}
&(x_3,p_3)\\
&(x_2,p_2)\\
&(0,a_1)\\
&(x_1,p_1)\\
&(x_0,p_0)\\
&(t_x,t_p)
\end{aligned}\quad\mapsto\quad
\begin{aligned}
&(x_3,p_3 \oplus t_px_0x_1x_2)\\
&(x_2,p_2\oplus t_px_0x_1x_3)\\
&(0,a_1)\\
&(x_1,p_1\oplus t_px_0x_2x_3)\\
&(x_0,p_0\oplus t_px_1x_2x_3)\\
&(t_x\oplus x_0x_1x_2x_3,t_p).
\end{aligned}
\end{equation}
\begin{figure}
\centering
\includegraphics[scale=1]{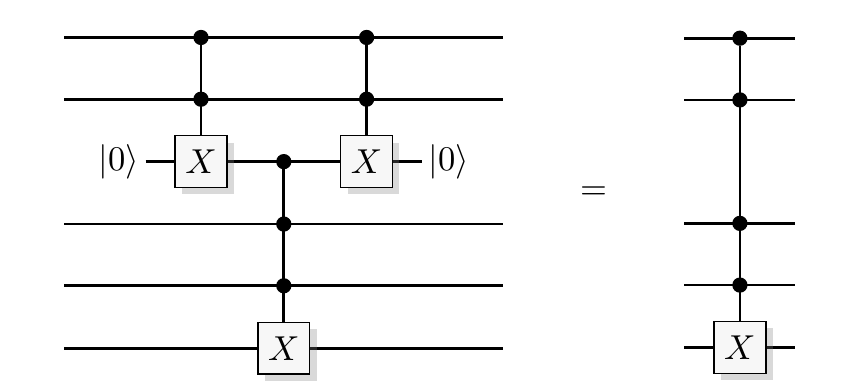}

\captionof{figure}{Scheme for construction a 4-Toffoli from one 3-Toffoli,
two regular Toffoli gates, and an ancillary qubit initiated in $\ket{0}$.}
\label{fig:4-Toffoli_identity}
\end{figure}	
By induction, we see that in QSL the map corresponding to an $n$-Toffoli is 
\begin{equation}
\begin{aligned}
&(x_n,p_n)&\quad&\Big(x_n,p_n \oplus t_p\big(\prod_{k\neq n}x_k\big)\Big)\\
&\quad \vdots&&\quad \vdots \\
&(x_1,p_1)&\raisebox{3ex}{$\mapsto$}\quad&\Big(x_1,p_1\oplus t_p\big(\prod_{k\neq 1}x_k\big)\Big)\\
&(x_0,p_0)&\quad&\Big(x_0,p_0\oplus t_p\big(\prod_{k\neq 0}x_k\big)\Big)\\
&(t_x,t_p)&\quad&\Big(t_x\oplus \big(\prod_{k}x_k\big),t_p\Big)
\end{aligned}
\end{equation} 

Using inverted controls by putting $\X$ before and after the control that is to be inverted, we see that this will not influence the value of the computational bits in the query register. 
The computational bit of the target and the phase bits of the query register will be affected, as the added $\X$ will induce the bit-complement for the corresponding $x_i$. 
With $f$ as the function with only one satisfying assignment $x^*$, 
\begin{equation}
	f(x)=\prod_k\big(x_k\oplus\overline{x^*_k}\big),
\end{equation}
built using a single $n$-Toffoli with some controls inverted, gives the QSL mapping
\begin{equation}
\begin{aligned}
\end{aligned}
(x,p)\;(t_x,t_p)\mapsto\big(x,p\oplus t_p\tilde{f}(x)\big)\;\big(t_x\oplus f(x),t_p\big).
\end{equation}
Here 
\begin{equation}
	\tilde{f}_i(x)=\prod_{k\neq i}\big(x_k\oplus\overline{x^*_k}\big)=f\Big(\sum_{k\neq i}2^kx_k+2^ix_i^*\Big),
\end{equation}
so to speak, the function with the  $i^{th}$ bit of the argument set to the correct value of the satisfying assignment $x_i^*$. 

\subsection{A Scaling Algorithm}

When increasing the number of input-bits the algorithm will have us to apply the Grover operator several times in sequence before measuring. 
A QSL simulation of this will not behave as the quantum algorithm, and we have found that the behavior is too far from the quantum behavior to give the quadratic speed-up of \textsc{Grover's} algorithm.
The underlying reason for this lack of speed-up needs more study, but we have found that simulating only the first application of the Grover operator gives a logarithmic speed-up, and this would give the following QSL algorithm.

\begin{tcolorbox}
\begin{algorithm}\label{alg:correct_repeat}
Repeat the following steps $\O\left(\frac{2^n}{n}\right)$ times
\begin{enumerate}
\item Prepare an $n$-QSL-bit query register in the $(0,r)$ state, and an
answer register containing one QSL-bit in the $(1,r_y)$ state.
\item Apply a Walsh-Hadamard transform on the query register
\item Apply the oracle
\item Apply a Walsh-Hadamard transform on the query register
\item Apply $\X^{\times n}$ to the query register.
\item Apply an $n$-controlled $\Z$ to the query register.
\item Apply $\X^{\times n}$ to the query register.
\item Apply a Walsh-Hadamard transform on the query register
\item Measure the query register to obtain a candidate $y$ for $x^*$.
\end{enumerate}
Check the output candidates for $x^*$ using the oracle, and return the candidate that outputs 1, if any.
\end{algorithm}
\end{tcolorbox}

The algorithm applies the Grover operator only once and measures immediately, and then proceeds to repeat this sequence, whereas standard Grover applies the Grover operator repeatedly, followed by a single measurement. Using our description of the $n$-Toffoli, we can now prove the following theorem.

\begin{theorem}
\Cref{alg:correct_repeat} solves Grovers problem with a bounded error
probability using no more than $\O\left(\frac{2^n}{n}\right)$ number of
queries to the oracle.
\end{theorem}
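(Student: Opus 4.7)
The plan is to analyze one round of steps 1--9 of \cref{alg:correct_repeat}, derive the per-round success probability, and then invoke a standard repetition bound.

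First I would trace the QSL state. Starting from $(0^n,R)\,(1,R_y)$ and applying Walsh--Hadamard to both registers gives $(R,0^n)\,(R_y,1)$. The single-$n$-Toffoli oracle acts as
\[
(x,p)(t_x,t_p) \mapsto \bigl(x,\,p\oplus t_p\,\tilde f(x)\bigr)\,\bigl(t_x\oplus f(x),\,t_p\bigr),
\]
so after step 3 the state is $(R,\tilde f(R))\,(R_y\oplus f(R),1)$. The second Walsh--Hadamard on the query register swaps its two bit strings to give $(\tilde f(R),R)$; the inverted $n$-controlled $\Z$ of steps 5--7 then adds $\prod_{k\neq i}\overline{x_k}$ to each phase bit, turning phase bit $i$ into $R_i\oplus\prod_{k\neq i}\overline{\tilde f_k(R)}$. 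After the final Walsh--Hadamard and computational measurement, the observed $n$-bit string is
\[
y_i \;=\; R_i\oplus\prod_{k\neq i}\overline{\tilde f_k(R)}.
\]

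Next I would case-split on the Hamming distance $d=d_H(R,x^*)$, using the identity $\tilde f_k(R)=1 \Leftrightarrow R$ agrees with $x^*$ at every coordinate except possibly $k$. If $d=0$, every $\tilde f_k(R)=1$, so the correction term vanishes and $y=R=x^*$. If $d=1$ at position $j$, then $\tilde f_k(R)=1$ exactly for $k=j$, so the correction $\prod_{k\neq i}\overline{\tilde f_k(R)}$ equals $1$ precisely when $i=j$; flipping that single discrepant bit yields $y=x^*$. If $d\geq 2$, every $\tilde f_k(R)$ vanishes, so $y=\overline R$, which is generically not $x^*$. Since $R$ is uniform on $\{0,1\}^n$, each round outputs the true $x^*$ with probability at least $(n+1)/2^n$, and the post-processing verification (one additional oracle call per round) ensures that only a correct candidate is ever returned.

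Finally, after $K$ independent rounds the failure probability is at most
\[
\bigl(1-(n+1)/2^n\bigr)^K \;\leq\; \exp\bigl(-K(n+1)/2^n\bigr),
\]
which drops below any fixed constant once $K=\Theta(2^n/n)$. Each round uses two oracle calls, so the total query complexity is $\O(2^n/n)$ as claimed. I expect the main obstacle to be the combinatorial identity for the correction term at $d=1$: once it is established that $\prod_{k\neq i}\overline{\tilde f_k(R)}$ is supported precisely on the position at which $R$ and $x^*$ disagree, everything else reduces to a routine amplification argument.
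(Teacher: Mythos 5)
Your proof is correct and follows essentially the same route as the paper: the same state-tracing through a single Grover iteration, the same identification of the output as $y_i=R_i\oplus\prod_{k\neq i}\overline{\tilde f_k(R)}$, the same Hamming-distance case analysis showing single-bit errors get corrected, and the same repetition bound. The only (harmless) difference is that in the $d\ge 2$ case you overlook that $\overline R=x^*$ when $R=\overline{x^*}$, so the exact per-round success probability is $(n+2)/2^n$ rather than your stated lower bound $(n+1)/2^n$; this does not affect the $\O(2^n/n)$ conclusion.
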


\begin{proof}
Let us start by calculating the probability distribution from which the measurement samples. 

The first four steps of the algorithm produces
\begin{equation}
(0,r)\;(1,r_t)\mapsto\big(\tilde{f}(r),r\big)\;\big(r_t\oplus f(r),1\big).
\end{equation}
Applying steps five, six, and seven will have the equivalent effect on the query register as applying another $n$-Toffoli with all controls inverted, on a target system with the phase bit set. This corresponds to applying another Boolean function
\begin{equation}
	g(x)=\prod_i \overline{x_i}.
\end{equation}
The effect on the query register is
\begin{equation}
\big(\tilde{f}(r),r\big)
\mapsto\Big(\tilde{f}(r),r\oplus \tilde{g}\big(\tilde{f}(r)\big)\Big)
,
\end{equation}
where $ \tilde{g}_i$ is similar to $\tilde{f}_i$, so that $ \tilde{g}_i$ is $g$ with the correct bit value used in position $i$ (here 0, the satisfying assignment of $g$). 
Measuring after the final Walsh-Hadamard transform we sample from the bitwise distribution
\begin{equation}
y_i =r_i\oplus \underbrace{ \Big(\prod_{k \neq i} \overline{\tilde{f}_k(r)}\Big)}_{\tilde{g}_i(\tilde{f}(r))}
\end{equation}
where $r$ is the whole random bit string, $i$ the bit index, and $\tilde{f}_k$
the different functions described above. Let us split the analysis into three
cases
\begin{enumerate}
\item If $r=x^*$, all $\tilde{f}_i(r)=1$, and all $\tilde{g}_i=0$, so that the output is $r=x^*$.

\item If $r_k=x^*_k$ except for a single $r_i\neq x^*_i$ ($r$ has a Hamming
distance of $1$ to $x^*$), then $\tilde{f}_k(r)=\delta_{ik}$ since $\tilde{f}_i(r)$ uses the correct value $x^*_i$ in place of $r_i$, while $\tilde{f}_k(r)=0$ for $k\neq i$ since the wrong bit-value $r_i$ is being used throughout. 
This in turn gives $\tilde{g}_k\big(\tilde{f}(r)\big)=\delta_{ik}$ by the same mechanism (the satisfying assignment of $g$ is all-zero values). Then, the output $r\oplus\tilde{g}\big(\tilde{f}(r)\big)=r\oplus\delta_{ik}=x^*$.

\item When two or more bits are wrong, all $\tilde{f}_k(r)=0$, and
all $\tilde{g}_i=1$. 
This will give as output the bitwise complement $\overline{r}$, which is not equal to $x^*$ except for when \textit{all} bits are wrong, in which case $\overline{r}=x^*$.
\end{enumerate}

In short, we can think of this as a protocol where a random guess $r$ is corrected to $x^*$ if there is a single bit error in $r$, or if $r$ is the bitwise complement of $x^*$. 
The sampling distribution therefore has a probability $p=\frac{n+2}{2^n}$ of obtaining $x^*$, probability $p=0$ for obtaining the bitwise complement $\overline{x^*}$ or a bitstring with Hamming distance of $1$ to $x^*$, and probability $p=\frac{1}{2^n}$ for any other value. 
Repeating the subroutine $\kappa2^n/(n+2)=\O(2^n/n)$ times gives a probability of not obtaining $x^*$ even once less than $e^{-\kappa}$ (Theorem 5 of \cite{Grover1996}).
\end{proof}

An alternative is to repeat the subroutine until the first occurrence of $x^*$. This gives a solution to the problem with error probability 0, but instead gives a random number of trials that obeys the shifted geometric distribution with probability $(n+2)/2^n$ of obtaining $x^*$. Then, the expected number of subroutine calls to the first success is $2^n/(n+2)$, so that the expected number of calls to the oracle including checking if the output $y=x^*$ is $2^{n+1}/(n+2)=\O(2^n/n)$. 

To connect with the fist example of one-shot \textsc{Grover's} for when $n=2$. There is a total of four assignments, one correct, its bitwise complement, and two strings with one bit difference. The QSL simulation will therefore always return the correct assignment for the one-shot \textsc{Grover's}.

\subsection{Comparison with a 3-qubit Experiment} 

For three bit input, running the protocol once, as in the above proposal, is a simulation of the setup used in a recent 3-qubit Grover experiment on trapped ions by \citeauthor{Figgatt2017}~\cite{Figgatt2017}.
The authors use two measures to characterize this experiment, the success probability of the algorithm and the squared statistical overlap (SSO). 
The SSO is the square of the Bhattacharyya coefficient (see \cref{eq:Bhattacharyya}) between the estimated probability distribution that the experiment samples from and the theoretical distribution.

While \citeauthor{Figgatt2017} obtain an average SSO of $83.2(7)\%$, our simulation gives
$78.5\%$, which is slightly lower. 
On the other hand, their average success probability is $38.9(4)\%$, while our simulation of the same circuitry has a success probability of $5/8=62.5\%$, much closer to the theoretical $25/32\approx78.1\%$.
It is important to note that when we compare quantum and classical bounds for
some problem, we compare the \emph{best known} bounds --- of course including
simulation algorithms. 

\subsection{Application to Ciphers} 
\label{sec:GroverCiphers}

An application of \textsc{Grover's} algorithm gives a quadratic speed-up to
breaking most ciphers for which otherwise exhaustive search is the best-known
method. Let us have a closer look at how that works.

Let us suppose that Alice has a classical system that uses a bijective map
described by the function $E(k,m)=c$, that takes as arguments a plaintext
message $m$, a key $k$, and produces a ciphertext $c$. 
Alice uses this function to encrypt information and its inversion (as a map from $m$ to $c$) to decrypt. 
When the same key $k$ is used both for encryption and decryption this is called a symmetric cipher, and when not, an asymmetric cipher.

An attack that uses \textsc{Grover's} algorithm is a \emph{known-plain-text attack}. 
In such an attack, Eve possesses (or can obtain) both the plaintext and the corresponding ciphertext produced with Alice's key. 
Let us for simplicity assume that the length of the known plaintext is longer than the unicity length, meaning it is long enough so that the key is uniquely determined by $E$, $m$, and $c$.

To retrieve $k$, Eve also needs to know $E$. That she has in her possession a copy of Alice's system, or a schematic, can be justified with Kerckhoff's principle stating that a secure cryptographic system should remain secure under the assumption that an adversary knows everything about the cryptographic system, except for the key.

Now assume that it is possible for Eve to efficiently translate Alice's classical device into a quantum device, and that this process does not give her any information about the key. 
Then she can set up her system to encrypt $\ket{m}$ and add $1$ to a second register, conditioned on seeing $\ket{c}$ (see \cref{fig:cipher_to_Grover}). 
This unitary can now be used to find $k$ in $\O(\sqrt{2^n})$ time using \textsc{Grover's} algorithm.

We conjecture that this attack can be done within the QSL framework. However, it remains to be analyzed for which specific cases this does work since the behavior depends on the implementation. Also, note that the advantage from the above algorithm is smaller, only the factor $1/n$ in $\O(2^n/n)$.

\begin{figure}
\centering
	\includegraphics[scale=1]{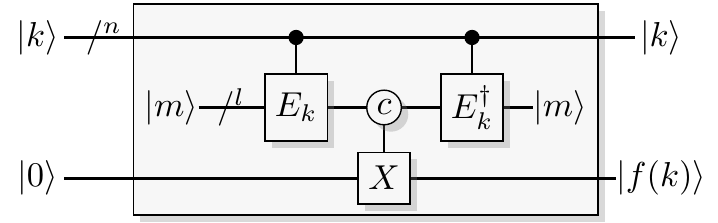}
	\captionof{figure}{Circuit construction for using \textsc{Grover's} algorithm to
		perform a known plaintext attack on a symetric cipher $E_k$, where the key
		$k$ has a bitsize of $n$ and the known plaintext a bitsize of $l$. The
		control labeled $c$ expresses that $X$ is applied only if the output of
		$E_k\!\ket m$ is $\ket{c}$, inverting the target bit. This can be constructed with an $l$-Toffoli with regular
		and inverted controlls.} \label{fig:cipher_to_Grover}
\end{figure}	

\section{\textsc{Simon's} Algorithm}\label{sec:Simon's}

\textsc{Simon's} problem \cite{Simon1994,Simon1997} is the archetype of quantum exponential speed-up, as the first case of a relativized exponential gap between \textbf{BPP} and \textbf{BQP}. Its
method has been reused and generalized many times over, and was the
inspiration for \textsc{Shor's} algorithm~\cite{Shor1994,Shor1999}.

\subsection{Problem Formulation}

We will work from the following problem formulation 
\begin{definition}
Consider that we are given access to an oracle encoding a function
$f:\{0,1\}^n\to\{0,1\}^n$, and promised that it is either one-to-one or
two-to-one and invariant under a non-trivial XOR-mask $s$.
\end{definition}

A function being invariant under a non-trivial XOR-mask $s$ means that the
function $f$ yield the same output on two different inputs, $x$ and $x'$, only
if these two input differs exactly on those bit positions where the bits of the mask
$s$ are $1$. In other words, $f(x)=f(x')$ if and only if $x=x'\oplus s$.

\subsection{Probabilistic Solution}

Assuming that we only have access to an oracle that only gives us the output
of the function $f:\{0,1\}^n\to\{0,1\}^n$. Simon proved the following theorem.

\begin{theorem}[\citeauthor{Simon1994} \cite{Simon1994,Simon1997}]\label{thm:Simon's}
Let $O$ be an oracle constructed as follows: for each $n$, a random $n$-bit
string $s(n)$ and a random bit $b(n)$ are uniformly chosen from $\{0, 1\}^n$
and $\{0, 1\}$,	respectively. If $b(n) = 0$, then the function $f_n : \{0,
1\}^n \to \{0, 1\}^n$ chosen for $O$ to compute on $n$-bit queries is a
random function uniformly distributed over permutations on $\{0, 1\}^n$;
otherwise, it is a random function uniformly distributed over two-to-one
functions such that $f_n(x) = f_n(x \oplus s(n))$ for all $x$, where $\oplus$
denotes bitwise exclusive-or. Then any PTM that queries $O$ no more than
$2^{n/4} $times cannot correctly guess $b(n)$ with probability greater than
$(1/2)+2^{-n/2} $, over choices made in the construction of $O$.
\end{theorem}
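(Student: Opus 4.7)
The plan is to prove the theorem by reducing to deterministic algorithms and then carrying out a coupling/hybrid argument between the two ensembles of oracles. Since any PTM making $T$ queries is a distribution over deterministic $T$-query algorithms, it suffices to bound the advantage of an arbitrary deterministic algorithm $M$ in distinguishing the permutation case ($b=0$) from the two-to-one case ($b=1$).

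The central observation is that, as long as $M$ never queries two inputs $x,x'$ satisfying $x\oplus x' = s$, the sequence of query-answer pairs it sees has exactly the same distribution in the two experiments. To make this precise, I would simulate $M$ lazily: at each step, only reveal the value of $f$ on the current query, conditioned on what has been seen so far. In the permutation experiment the first $T$ answers are a uniformly random ordered tuple of $T$ distinct values in $\{0,1\}^n$. In the two-to-one experiment, conditioned on the event $E$ that none of the pairs $\{x_i,x_j\}$ satisfies $x_i\oplus x_j = s$, the $T$ queries land in $T$ distinct orbits of the $\mathbb{Z}_2$-action generated by $s$, and a uniformly random injective assignment of orbits to $\{0,1\}^n$ again yields a uniformly random ordered tuple of $T$ distinct values. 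Thus the two transcript distributions are identical on the event $E$, so the total variation distance between them is at most $\Pr(\neg E)$.

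Next I would bound $\Pr(\neg E)$ by a birthday-style count, averaging over the random $s$. For any fixed run of the algorithm producing queries $x_1,\dots,x_T$, the conditional probability (over the uniform choice of $s\neq 0$) that some pair $x_i\oplus x_j$ equals $s$ is at most $\binom{T}{2}/(2^n-1)$. Since the queries are adaptive, one must argue this holds at every step of the lazy simulation; this can be done cleanly by revealing $s$ only at the very end, after all $T$ queries have been fixed (the answers in the simulation do not depend on $s$ until a collision actually occurs, so the marginal over $s$ remains uniform). Plugging in $T = 2^{n/4}$ gives $\Pr(\neg E)\le \binom{T}{2}/(2^n-1)\le T^2/2^n = 2^{-n/2}$.

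Combining these two ingredients, the statistical distance between the transcripts in the $b=0$ and $b=1$ experiments is at most $2^{-n/2}$, and so no (possibly randomized) decision rule applied to the transcript can correctly output $b$ with probability exceeding $\tfrac12 + 2^{-n/2}$. I expect the subtle step to be justifying the collision-probability bound in the adaptive setting, i.e.\ arguing that the marginal distribution of $s$ remains uniform conditional on the algorithm's view as long as no collision has yet been witnessed; the rest is bookkeeping. Everything else—fixing the random tape, identifying the transcript distributions on $E$, and the final TV-distance inequality—is essentially standard.
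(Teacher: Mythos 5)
Your outline is correct, and it is worth noting that it proves more than the paper itself attempts: the paper explicitly defers the proof of this theorem to Simon's original article and offers only a ``simple motivation,'' namely that an algorithm querying the oracle at random is unlikely to see the same function value twice. That motivation addresses only collision-finding by a non-adaptive strategy; your argument handles arbitrary adaptive randomized algorithms and bounds the actual distinguishing advantage, which is what the theorem asserts. Your route --- fixing the random tape, lazy sampling, the identical-until-bad coupling on the event $E$ that no two queries XOR to $s$, and a deferred-sampling birthday bound on $\Pr(\neg E)$ --- is essentially the argument in Simon's own paper, so you have reconstructed the standard proof rather than found a new one. Two small points to tighten. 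First, once the transcript has excluded some candidate masks, the posterior on $s$ is uniform over the remaining $2^n-1-\binom{i-1}{2}$ values, so the per-step collision probability is $(i-1)\big/\bigl(2^n-1-\binom{i-1}{2}\bigr)$ rather than $(i-1)/(2^n-1)$; summing gives $\binom{T}{2}\big/\bigl(2^n-1-\binom{T}{2}\bigr)$, which for $T=2^{n/4}$ is still at most $2^{-n/2}$, so the conclusion survives. Second, the conversion from total variation distance $\delta$ to optimal guessing probability yields $\tfrac12+\tfrac12\delta$, a factor of two better than needed, which comfortably absorbs both the previous correction and the probability-$2^{-n}$ corner case $s=0$ (where the promise is vacuous) that your bound over ``uniform $s\neq 0$'' quietly sets aside.
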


The proof can be found in Ref.~\cite{Simon1997}, here we will instead give a simple  motivation as to why this will require an exponential number of queries.
Consider an algorithm that queries the oracle on random inputs. If the
algorithm finds the same function value twice it halts, and we can conclude
that the function is two-to-one and invariant under some XOR-mask $s$. Now, if
the queries are chosen from a uniform distribution (there is no point in
choosing input from another distribution since $s$ is chosen uniformly) the
range of the function can also be seen as a probability distribution. If the
function is one-to-one it will be uniform. If the function is instead two-to-one,
half of the sample space will have $p=0$ and the rest will be uniformly
distributed with $p=\frac{2}{2^n}$. It is therefore unlikely that an algorithm
using less than an exponential number of queries will see the same value
twice.

\subsection{Quantum Algorithm}

Let us now assume that we are given access to an oracle that encodes the function in the unitary transformation 
\begin{equation}
\ket{x}\ket{y}\xrightarrow{U_f}	\ket{x}\ket{y\oplus f(x)}.
\end{equation}

\textsc{Simon's} algorithm can then make use of the subroutine shown in
\cref{fig:Simon's_circuit} and proceed as follows
\begin{equation}
\begin{split}
\ket{0}\ket{0}\xrightarrow{H^{\otimes n}\otimes I^{\otimes n}}\
&\frac{1}{2^{n/2}}\sum_x \ket{x}\ket{0}\\ \xrightarrow{U_f}\
&\frac{1}{2^{n/2}}\sum_x \ket{x}\ket{f(x)}\\ \xrightarrow{H^{\otimes
n}\otimes I^{\otimes n}}\ &\frac1{2^{n-b/2}}\sum_{x,p}
(-1)^{x\cdot p}\ket{p}\ket{f(x)},
\label{eq:simonmap}
\end{split}
\end{equation}
where the scalar product is defined mod 2. The normalization constant on the last row depends on whether $f$ is one-to-one ($b=0$) or two-to-one ($b=1$), and the difference is because of subtractive interference in the two-to-one case, see below.

If the function is one-to-one, measuring the query register will yield an outcome $p$ drawn from a uniform distribution. 
If the function is two-to-one there will be destructive interference so that some terms cancel in the sum, and using $f(x)=f(x\oplus s)$ the output can be written
\begin{equation}
\begin{split}
&\frac1{2^{n-1/2}}\sum_{x,p}(-1)^{x\cdot p}\ket{p}\ket{f(x)}\\
&\;=\frac1{2^{n+1/2}}\sum_{x,p}(-1)^{x\cdot p}\ket{p}\big(\ket{f(x)}+\ket{f(x\oplus s)}\big)\\
&\;=\frac1{2^{n+1/2}}\sum_{x,p}\big((-1)^{x\cdot p}+(-1)^{(x\oplus s)\cdot p}\big)\ket{p}\ket{f(x)}\\
&\;=\frac1{2^{n+1/2}}\sum_{x,p}(-1)^{x\cdot p}\big(1+(-1)^{s\cdot p}\big)\ket{p}\ket{f(x)}\\
&\;=\frac1{2^{n-1/2}}\sum_{x,p:s\cdot p=0}(-1)^{x\cdot p}\ket{p}\ket{f(x)}
\end{split}
\end{equation}
This will yield an outcome drawn from a distribution that is uniform over values $p$ such that $s\cdot p=0$. 

Each call to the subroutine gives independent outcomes so after an expected linear number of rounds, we will have collected $n-1$ linearly independent bit-vectors. 
Solving the resulting linear system of equations will give a non-trivial solution $s^*$ that will be our candidate for $s$. 
If the function is one-to-one, $s^*$ will be a random bit-vector.
By using $s^*$ to query the function twice, for $f(x)$ and $f(x\oplus s^*)$ for some $x$, we obtain the solution to the problem. 
If the function is two-to-one and invariant under the XOR-mask $s^*$, the two queries will yield the same result, otherwise not.
\begin{figure}
\centering
\includegraphics[scale=1]{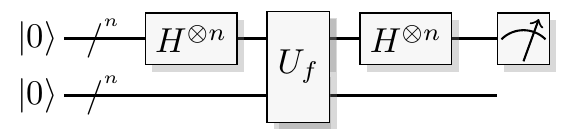}

\captionof{figure}{Quantum circuit used as a subroutine in \textsc{Simon's} algorithm.
The subroutine assumes oracle access to $U_f$. } \label{fig:Simon's_circuit}
\end{figure}	

\begin{tcolorbox}
	\begin{algorithm}[\citeauthor{Simon1994} \cite{Simon1994,Simon1997}]
		Proceed with the following steps.
		\begin{enumerate}
			\item Make $n-1$ repetitions of \Cref{sub:Simon} (also shown in \cref{fig:Simon's_circuit}). This will with high probability 
      yield $n-1$ linearly independent bit-vectors orthogonal to $s$.
			\item Solve the resulting linear system of equations to obtain the candidate solution $s^*$
			\item Query the function for $f(x)$ and $f(x\oplus s^*)$, for some $x$.
		\end{enumerate}
		If the two queries gives the same value, then the function is two-to-one and invariant under the XOR-mask $s=s^*$, and otherwise not.
	\end{algorithm}
\end{tcolorbox}\noindent
\begin{tcolorbox}
	\begin{subroutine}[\citeauthor{Simon1994} \cite{Simon1994,Simon1997}]\label[sub]{sub:Simon} Proceed with the following steps.
		\begin{enumerate}
			\item Prepare two $n$-qubit registers in the state $\ket{0}$.
			
			\item Apply the Walsh-Hadamard transform to the query register.
			
			\item Apply the oracle.
			
			\item Apply the Walsh-Hadamard transform to the query register.
			
			\item Measure the query register to retrieve a bit-vector $p$.
			\end{enumerate}
			Return $p$.
	\end{subroutine}
\end{tcolorbox}

\subsection{QSL Simulation} 
\label{sec:SimonsQSL}

\Cref{thm:Simon's} does of course not apply to the quantum case where the function oracle is given as a unitary map with $n$ qubits as input and $n$ qubits as target, and not a function from $n$ bits to $n$ bits. 
Neither does it apply to the QSL framework since the oracle is a map that uses $n$ QSL-bits as input and $n$ QSL-bits as target, or if one prefers, from $2n$ bits to $2n$ bits in its classical simulation.

The base construct that we will use for \textsc{Simon's} subroutine is that used in \textcite{Tame2014} (see their supplementary material) and consists of a network of \textit{CNOT} gates. 
It is generated by choosing a basis $\{v^{(k)}\}_{k=0}^{n-2}$ for a (perhaps \textit{the}) subspace of $\mathbb{Z}_2^n$ orthogonal to $s$, and storing the scalar product between the input $x$ and $v^{(k)}$ in the $k$-th bit of the output $y$.
This is done by connecting a \textit{CNOT} between the input bit $x_i$ and the output bit $y_k$ for each entry where $v^{(k)}_i=1$, producing the mapping
\begin{equation}
\ket{x}\ket{y}\mapsto \ket{x}\ket{y \oplus f^{(1)}(x)}
\end{equation}
where
\begin{equation}
	f_k^{(1)}(x)=
	\begin{cases}
	0,&k=n-1\\
	x\cdot v^{(k)}=(x \oplus s)\cdot v^{(k)},&\text{otherwise}.
	\end{cases}
	\label{eq:f1}
\end{equation} 
The resulting function $f^{(1)}(x)$ is a two-to-one function invariant under the XOR-mask $s$.
We label the resulting gate array $U_s$.

As an example consider the 3-qubit case where $s=(101)$. the bit-vector basis
can be chosen to $\{(101), (010)\}$. From $v^{(0)}=(101)$ we get two \textit{CNOT}s
with their controls at $\ket{x_0}$ and $\ket{x_2}$, and target at $\ket{y_0}$.
From $v^{(1)}=(010)$ a \textit{CNOT} with control at $\ket{x_1}$ and target at
$\ket{y_1}$ (see \cref{fig:U_S_example}).

\begin{figure}
	\centering
	\includegraphics[scale=1]{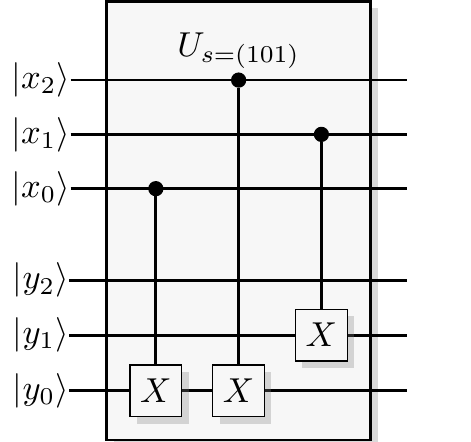}
	
	\captionof{figure}{Example construction for $U_s$, where $s=(101)$ and
		$\{v^{(k)}\}_{k=0}^1=\{(101), (010)\}$.} \label{fig:U_S_example}
\end{figure}	

For the one-to-one maps, we add a map $V_s$ that uses a final vector $v^{(n-1)}$ needed to complete the basis to a basis for the whole $\mathbb{Z}_2^n$. Since we can choose any vector  not orthogonal to $s$, we simply pick a vector with a single bit set out of the bits set in $s$ (which is possible as soon as $s\neq0$). To construct $V_x$, connect a \textit{CNOT} from $x_i$ to $y_{n-1}$ for the single $v^{(n-1)}_i=1$. This gives the map
\begin{equation}
  \ket{x}\ket{f^{(1)}(x)}\mapsto \ket{x}\ket{y \oplus f^{(0)}(x)}
\end{equation}
where 
\begin{equation}
f_k^{(0)}(x)=x\cdot v^{(k)}
\end{equation} 
This equals $f_k^{(1)}$ except for the case $k=n-1$ for which 
\begin{equation}
	f_{n-1}^{(0)}(x\oplus s)=(x\oplus s)\cdot v^{(n-1)}=\overline{f_{n-1}^{(0)}(x)},
\end{equation}
making $f^{(0)}(x)$ a one-to-one function.

The complete construction is shown in \cref{fig:Simon_oracle} and consists of three parts: the $U_s$ map, the $V_s$ map controlled by a bit $b$ that decides if $f$ is one-to-one or two-to-one, and a permutation like the one used in \cref{sec:Deutsch-Jozsa}. 

\begin{figure}
\centering
\includegraphics[scale=1]{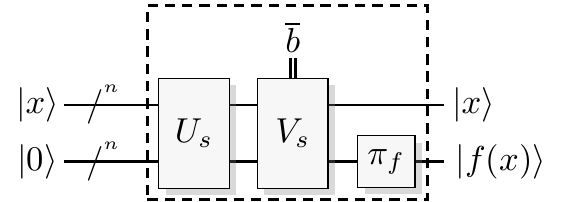}

\captionof{figure}{Oracle construction for \textsc{Simon's} Subroutine.}
\label{fig:Simon_oracle}
\end{figure}

Creating a controlled $V_s$ consists simply of exchanging the single \textit{CNOT} with a single Toffoli with the second inverted control attached to the control bit $b$, giving the map $f^{(1)}(x)\mapsto f^{(b)}(x)$.
To produce all possible one-to-one and two-to-one functions of this type from $f^{(b)}(x)$,  the output needs to be fed into an arbitrary permutation $\pi_f$ (of the number states) which makes $\pi_f(f^{(b)}(x))=f(x)$. Note that if the function is two-to-one, the permutation will preserve the invariance under addition of $s$.
Also note that the use of a generic permutation prohibits a general $\ket{y}$ to be used since $\pi_f\big(y\oplus f'(x)\big)$ does not necessarily equal $y\oplus\pi_f\big(f'(x)\big)$, thus the use of $y=0$ in the construction. 
For a general $y$, see \cref{sec:output_to_the_target_modulo_2}.

The overall implementation of the oracle will give
\begin{equation}
\ket{x}\ket{0}\xrightarrow{U_f}\ket{x}\ket{\pi_f\big(f^{(b)}(x)\big)}=\ket{x}\ket{f(x)}.
\end{equation}

Next we need to find out how this behaves in QSL. Each $\CNOT$ gives a phase kick-back to the controlling QSL-bit, and there is a $\CNOT$ between input QSL-bit $x_i$ and the output QSL-bit $y_k$ for each entry where $v_i^{(k)}=1$. Therefore, for each phase bit $c_k$ set in the output, the corresponding $v^{(k)}$ is added to the phase of the input. The total effect of $\U_s$ is to add
\begin{equation}
g^{(1)}(c)=\sum_{k=0}^{n-2} c_kv^{(k)} \quad (\mathrm{mod}\ 2)
\end{equation}
to the phase of $x$, while $\V_s$ adds $c_{n-1}v^{(n-1)}$ if enabled, creating
\begin{equation}
g^{(0)}(c)=\sum_{k=0}^{n-1} c_kv^{(k)} \quad (\mathrm{mod}\ 2).
\end{equation}
Omitting the control system $(b,a)$ (but including a nonzero target $y$), the effect of the QSL oracle is
\begin{equation}\label{QSL_simons_behavior}
\begin{split}
(x,p)\;&(y,c)\xrightarrow{\U_s}\big(x,a \oplus g^{(1)}(c) \big)\;\big(y\oplus f^{(1)}(x),c\big)\\
\xrightarrow{\CV_s}\;&\big(x,a \oplus g^{(b)}(c)
\big)\;\big(y\oplus f^{(b)}(x),c\big)\\
\xrightarrow{\pi_f}\;&\big(x,a \oplus g^{(b)}(c)\big)\;\Big(\pi_f\big(y\oplus f^{(b)}(x)\big),\pi_{f,y\oplus f^{(b)}(x)}(c)\Big),
\end{split}
\end{equation}
where the phase permutation notation from \cref{eq:phasepermutation} becomes the somewhat cumbersome $\pi_{f,y\oplus f^{(b)}(x)}(c)$.
Using this to simulate the entire subroutine in \cref{fig:Simon's_circuit}, we get
\begin{equation}
\begin{split}
(0,r)&\; (0,r') \xrightarrow{\H\times \I}\ (r,0)\; (0,r')\\
&\xrightarrow{\O_f}\ \big(r,g^{(b)}(r')\big)\; \big(f(r),\pi_{f,f^{(b)}(r)}(r')\big)\\
&\xrightarrow{\H\times \I}\ \big(g^{(b)}(r'),r\big)\; \big(f(r),\pi_{f,f^{(b)}(r)}(r')\big)
\end{split}
\end{equation}
Measuring the query register will yield $g^{(b)}(r')$, that is a linear combination of vectors $v^{(k)}$ with $r'_k$ as coefficients. 
The state preparation in each round of the subroutine creates independent and uniformly distributed random bit vectors $r'$ (and $r$). 
Therefore, if $b=1$ the output will be a random vector uniformly distributed over the subspace orthogonal to $s$, and if $b=0$ the output is a random vector uniformly distributed over the entire $\mathbb{Z}_2^n$. 
Thus, the simulation samples from the same probability distribution as the quantum algorithm, and will also solve the decision problem with the same expected linear number of queries to the oracle as the quantum algorithm. We have now arrived at
\begin{theorem}
There is an efficient QSL algorithm that solves \textsc{Simon's} problem with an expected $\mathcal{O}(n)$ number of queries query to the oracle. 
Relative to the oracle, this QSL algorithm has an efficient classical simulation on a PTM, in accordance with \cref{lemma:efficiency}.
\end{theorem}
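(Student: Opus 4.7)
The plan is to mirror Simon's quantum algorithm inside QSL using the oracle of \cref{fig:Simon_oracle} together with the subroutine analogue of \cref{fig:Simon's_circuit}. First I would invoke the subroutine trace already assembled in \cref{sec:SimonsQSL}: preparing $(0,r)\,(0,r')$, applying $\H\times\I$, the QSL oracle $\U_f$, and $\H\times\I$ again leaves the state $\big(g^{(b)}(r'),r\big)\,\big(f(r),\pi_{f,f^{(b)}(r)}(r')\big)$, so a measurement of the computational bits of the query register returns $g^{(b)}(r')=\sum_k r'_k v^{(k)}$.

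Next I would analyse the distribution of this outcome. Since $r'$ is, by the preparation step, a uniformly random bit string freshly sampled in every call, and $\{v^{(k)}\}_{k=0}^{n-2}$ is a basis for the subspace orthogonal to $s$ while $v^{(n-1)}$ extends it to a basis of all of $\mathbb{Z}_2^n$, the sample $g^{(b)}(r')$ is uniform on $\{p : s\cdot p = 0\}$ in the two-to-one case ($b=1$) and uniform on $\mathbb{Z}_2^n$ in the one-to-one case ($b=0$). This coincides with the output distribution of the quantum subroutine derived in \eqref{eq:simonmap}, so each QSL subroutine call is operationally equivalent to one quantum subroutine call.

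With this equivalence in hand, I would reuse Simon's classical post-processing unchanged. Repeating the subroutine until $n-1$ linearly independent outputs have been collected needs $\mathcal{O}(n)$ calls in expectation by the standard coupon-collector bound over $\mathbb{Z}_2$. Gaussian elimination then produces a nontrivial candidate $s^*$ orthogonal to all samples, and two further oracle queries, comparing $f(x)$ with $f(x\oplus s^*)$ for any fixed $x$, settle whether the function is two-to-one with $s=s^*$ or one-to-one. The total expected number of oracle calls is therefore $\mathcal{O}(n)$. The efficient classical simulation then follows immediately from \cref{lemma:efficiency}, since both the oracle and the subroutine are built entirely from QSL gates and the post-processing is polynomial-time linear algebra.

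The only point I would treat as delicate, rather than a genuine obstacle, is the bookkeeping for the phase-permutation $\pi_{f,y\oplus f^{(b)}(x)}(c)$ that $\pi_f$ induces on the phase bits of the answer register. I would verify that this permutation is confined to the answer register and produces no additional phase kickback onto the query register, so that the marginal distribution on the query register depends only on $g^{(b)}(r')$ as claimed, completing the argument.
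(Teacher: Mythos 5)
Your proposal is correct and follows essentially the same route as the paper: the same oracle construction ($\U_s$, controlled $\V_s$, and $\pi_f$), the same phase-kickback computation yielding $g^{(b)}(r')$ in the query register, the same identification of the output distribution with that of the quantum subroutine, and the same appeal to Simon's classical post-processing plus \cref{lemma:efficiency}. The delicate point you flag is also resolved exactly as in the paper, where \cref{eq:phasepermutation} and the oracle trace confirm that the permutation's phase action stays within the answer register (which is why the construction fixes $y=0$).
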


This shows that, relative to this oracle, the problem is in \textbf{BPP}.

\subsection{Adding the function output to the target modulo 2} \label{sec:output_to_the_target_modulo_2}

We noted earlier that the oracle in \cref{fig:Simon's_circuit} does not produce a unitary map that adds the function value mod 2 to the output, because of the final number-state permutation. 
To allow a non-zero-initialized answer register so that the map becomes $\ket{x}\ket{y}\mapsto \ket{x}\ket{y\oplus f(x)}$, Bennett's trick of uncomputation \cite{Bennett1973} can be used (see \cref{fig:Simon_oracle2}).
\begin{figure}
\centering
\includegraphics[scale=0.7]{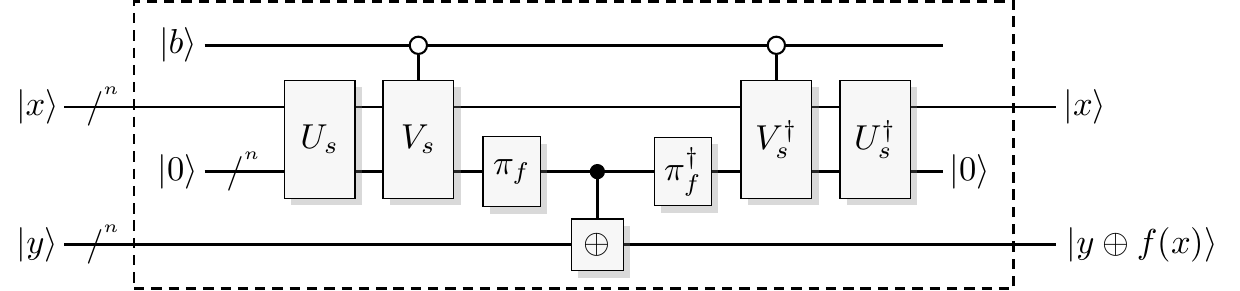}

\captionof{figure}{Oracle construction for Simons algorithm where the output
can be initiated in a non-zero state. The boxed modulo 2 addition denotes an array of \textit{CNOT}s that adds each ancilla bit to the corresponding target bit.}
\label{fig:Simon_oracle2}
\end{figure}	 

In this case the first three gates ($\U_s$, $\CV_s$, and $\pi_f$) performs the map (only writing what happens in the query register and ancilla),
\begin{equation}
\begin{split}
	(x,p)\;(0,r)
	&\rightarrow\big(x,p\oplus g^{(b)}(r)\big)\;\Big(f(x),\pi_{f,f^{(b)}(x)}(r)\Big).
\end{split}
\end{equation}
The $\CNOT$ array will add the value $f(x)$ to the target register  $(y,c)$ modulo 2 and cause a phase kickback from the target register,
\begin{equation}
\begin{split}
\big(x,p&\oplus g^{(b)}(r)\big)\;\Big(f(x),\pi_{f,f^{(b)}(x)}(r)\Big)\\
&\rightarrow
\big(x,p\oplus g^{(b)}(r)\big)\;\Big(f(x),\pi_{f,f^{(b)}(x)}(r)\oplus c\Big).
\end{split}
\end{equation}
Finally, letting $r'$ equal the output of the inverted permutation, 
\begin{equation}
	r'=\pi_{f,f^{(b)}(r)}^{-1}\big(\pi_{f,f^{(b)}(x)}(r)\oplus c\big),
\end{equation} 
the final three gates ($\pi_f^\dagger$, $\CV_s^\dagger$, and $\U_s^\dagger$) will invert the function map and cause another addition of $g^{(b)}$ in the phase of the input system,
\begin{equation}
\begin{split}
\big(x,p&\oplus g^{(b)}(r)\big)\;\big(f(x),\pi_{f,f^{(b)}(x)}(r)\oplus c\big)\\
&\rightarrow
\big(x,p\oplus g^{(b)}(r)\oplus g^{(b)}(r')\big)\;(0,r').
\end{split}
\end{equation}
When used in \textsc{Simon's} algorithm, $p=0$, and $r$ and $c$ are independent uniformly distibuted random values, which make $r'$ independent of $r$ and uniformly distributed. The output of the algorithm is the phase of the query register, which equals 
\begin{equation}
	g^{(b)}(r) \oplus g^{(b)}(r')
\end{equation}
As the two arguments to $g$ are independent and randomly distributed, the sample retrieved from the measurement will be drawn from the uniform distribution over all bit strings if $b=0$, or the strings orthogonal to $s$ if $b=1$. 
This gives the same output distribution as the quantum algorithm, and the theorem holds in this case as well.

\subsection{A deterministic algorithm for \textsc{Simon's} problem} 

An interesting extension of this algorithm is the one devised by \textcite{Brassard1997}, which relative to an oracle decides the problem in exact quantum polynomial time \textbf{EQP}, meaning that the number of calls to the oracle is deterministic and polynomial, and that the error probability is zero.
This algorithm was considered relativized evidence for an exponential separation between \textbf{EQP} and \textbf{BPP}. 
In \cite{Johansson2017} we developed a similar algorithm for the case when the oracle is of the form used in \cref{sec:SimonsQSL}, or in general, when the bias enters into the function evaluation as
\begin{equation}
	\ket{x}\ket{y}\mapsto\ket{x}\ket{f_y(x)}=\ket{x}\ket{\pi_f\Big(y\oplus U_0\big(x\oplus (x\cdot v_s)s\big)\Big)},
	\label{eq:simonsotherbias}
\end{equation}
where $U_0$ is a binary orthogonal map (i.e., one-to-one, and preserves scalar products), $v_s$ is a vector with $v_s\cdot v_s=v_s\cdot s=1$ unless $s=0$ in which case $v_s=0$, and $\pi_f$ is a permutation as before. Note that the construction enables all $f(x)=f_0(x)$, and that 
\begin{equation}
	(x\oplus s)\oplus \big((x\oplus s)\cdot v_s\big)s=x\oplus (x\cdot v_s\big)s,
\end{equation}
so that $f_y(x\oplus s)=f_y(x)$. The difference to the standard construction is that bias enters differently, since the standard construction gives $f'_y(x)=y\oplus f(x)$. 

\begin{figure}
	\centering
	\includegraphics[scale=1]{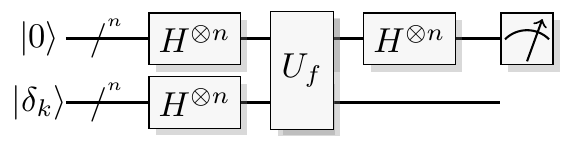}
	\captionof{figure}{Proposed quantum circuit as subroutine in an algorithm solving \textsc{Simon's} problem in polynomial time, both in quantum theory, and in QSL.}
	\label{fig:Simon_circuit_P}
\end{figure}	
 
The algorithm, as compared to the standard algorithm, adds a Walsh-Hadamard transform on the answer register input, and then simply iterates through bit-strings $\delta_k$ that have only the $k$th bit set on the answer register input.
A quantum oracle that obeys \cref{eq:simonsotherbias} will give the following effect in the algorithm, 
\begin{equation}
\begin{split}
	&\ket0\ket{p}\xrightarrow{H^{\otimes n}\otimes H^{\otimes n}}\;\sum_{x,y}(-1)^{y\cdot p}\ket{x}\ket{y}\\
	&\quad\xrightarrow{y\oplus U_0(\ldots)}\sum_{x,y}(-1)^{y\cdot p}\ket{x}\ket{y\oplus U_0\big(x\oplus(x\cdot v_s)s\big)}\\
	&\quad=\sum_{x,y}(-1)^{\big(y\oplus U_0(x\oplus(x\cdot v_s)s)\big)\cdot p}\ket{x}\ket{y}\\
	&\quad=\sum_{x,y}(-1)^{x\cdot \big(U_0^{-1}p\oplus(s\cdot U_0^{-1}p)v_s\big)\oplus(y\cdot p)}\ket{x}\ket{y}\\
	&\quad\xrightarrow{H^{\otimes n}\otimes \pi_f}\ket{U_0^{-1}p\oplus(s\cdot U_0^{-1}p)v_s}\sum_{y}(-1)^{y\cdot p}\ket{\pi_f(y)}.
	\end{split}	
\end{equation}
A measurement on the query register will give the output $U_0^{-1}p\oplus(s\cdot U_0^{-1}p)v_s$, which is orthogonal to $s$ because 
\begin{equation}
	s\cdot \big(U_0^{-1}p\oplus(s\cdot U_0^{-1}p)v_s\big)=
	(s\cdot U_0^{-1}p)\oplus(s\cdot U_0^{-1}p)(s\cdot v_s)=0,
\end{equation}
and the output from letting $p$ iterate over a basis for bitstrings of length $n$ will span the subspace orthogonal to $s$. 
The important difference to the previous analysis is that the present form preserves orthogonality in the phase kick-back, whereas there is no such guarantee needed in \cref{sec:output_to_the_target_modulo_2}.
It is now easy to check if the output subspace has dimension $n$ (the $n$ outputs are linearly independent) or $n-1$. The quantum algorithm solves the problem with error probability zero in linear time, so relative to this oracle, this problem is in \textbf{EQP}.

The behavior will be exactly the same for the QSL simulation, but here we only analyze the explicit choices 
\begin{equation}
	U_0=[v^{(1)}, v^{(2)},\ldots,v^{(n)}]^T
\end{equation}
and if $s\neq0$,
\begin{equation}
	v_s=v^{(n)}.
\end{equation}
This gives us the explicit construction of \cref{sec:SimonsQSL}, for which
\begin{equation}
\begin{split}
	&(0,x)\;(\delta_k,y)\xrightarrow{\H^{\times n}\times \H^{\times n}}(x,0)\;(y,\delta_k)\\
	&\;\;\xrightarrow{\U_s\circ\CV_s}\big(x,v^{(k)}\big)\;\big(y\oplus f^{(b)}(x),\delta_k\big)\\
	&\;\;\xrightarrow{\H^{\times n}\times \pi_f}\big(v^{(k)},x\big)\;\Big(\pi_f\big(y\oplus f^{(b)}(x)\big),\pi_{f,y\oplus f^{(b)}(x)}(\delta_k)\Big),
\end{split}	
\end{equation}
for $k\le n-2$ and 
\begin{equation}
\begin{split}
	&(0,x)\;(\delta_{n-1},y)\xrightarrow{\H^{\times n}\times \H^{\times n}}(x,0)\;(y,\delta_{n-1})\\
	&\quad\xrightarrow{\U_s\circ\CV_s}(x,\overline{b}v^{(n-1)})\;\big(y\oplus f^{(b)}(x),\delta_{n-1}\big)\\
	&\quad\xrightarrow{\H^{\times n}\times \pi_f}(\overline{b}v^{(n-1)},x)\;
	\\&\qquad\qquad\times
	\Big(\pi_f\big(y\oplus f^{(b)}(x)\big),\pi_{f,y\oplus f^{(b)}(x)}(\delta_{n-1})\Big).
\end{split}	
\end{equation}
Again, the iteration  will produce measurement results from the query register that span the subspace of bitstrings $v$ that obey $v\cdot bs=0$. 
The output is exactly the same as in the quantum case.
Thus, relative to this QSL oracle, this problem is in \textbf{P}. 
By modifying the effect of bias on the target register, we have moved from a problem in a probabilistic class to a problem in the deterministic class of the same complexity.

\subsection{Application to Symmetric Ciphers}

\textsc{Simon's} algorithm has been used to attack symmetric-key ciphers, retrieving the key in a linear number of calls to the encryption device.
There are a number of results that lead up to this.

\textcite{Kuwakado2010} showed that, if built as a unitary, a 3-round Feistel
network with internal permutations can be distinguished from a random
permutation using \textsc{Simon's} algorithm. If we look at the 3-round Feistel
network as a block cipher running in the Electronic Code Book (ECB) mode, it
permutes a block of the plaintext into a block of the ciphertext. If this
permutation is indistinguishable from a random permutation, and therefore
unpredictable, the best an adversary running a known-plain-text attack can do,
is to search the whole key space. However, if the permutation can be
distinguished from a random permutation, then we get knowledge about the
system and the key space that we need to search shrinks. The result
of \textcite{Kuwakado2010} show us that after the application of \textsc{Simon's}
algorithm there should be a speed-up of the key recovery algorithm, but not how large the speed-up is.

\textcite{Kuwakado2012} later showed that \textsc{Simon's} algorithm can be used
to retrieve the key from the Evan-Mansour cipher by employing a number of
queries that are linear in the key size. This gives an exponential speed-up to
the best-known classical query bound and a significant speed-up compared to
the method using \textsc{Grover's} algorithm described in \cref{sec:Grover's}.

\textcite{Kaplan2016} extended this to other ciphers, and modes of operations,
and showed that it can give an exponential speed-up some
classical attacks that require finding collisions.

As always, for this to work, the encrypting device needs to be re-cast as a unitary operation and the key needs to be stored in the device.
Note that here, the key is a classical parameter to the encryption device, and not under control of an attacker or entered as a quantum state in contrast to the known-plaintext-attack using \textsc{Grover's} algorithm in \cref{sec:GroverCiphers}.
From a practical point of view, this means that the key or the secret needs to be built into the construction, just as in our previous realization of \textsc{Simon's} oracle, where $s$ is used to build an explicit gate array.
Furthermore, an attacker is able to retrieve the built-in secret because it is built into the dynamics of a physical system. 
Again, when comparing with the previous realization, it is clear that the secret leaks through another degree of freedom in the physical system, and not through the computational basis that holds the cleartext or ciphertext.

The conclusion is that this is a side-channel attack, i.e., an attack made available because of a specific implementation --- not an attack on the protocol itself. 
Cryptographic protocols that are vulnerable to side-channel attacks are in general not considered broken, since all are more or less vulnerable depending on how they are implemented. 

With that said, the results of \cite{Kuwakado2010,Kuwakado2012,Kaplan2016} are important examples of new and innovative ways of performing side-channel attacks, in this case using a quantum-computer implementation of the cipher, and having the owner of the secret hide it as a parameter within the quantum computer.
We agree with the recommendation of \cite{Kuwakado2012}, that classical ciphers should not be implemented in quantum circuits, because of potentially introducing side-channel attacks.
A large part of modern cryptography is to prevent the creation of unintended side channels in new encryption devices.
It is still an open question whether QSL enables a similar side-channel attack on these kinds of systems, but for the same reason as with quantum computation, we would still advise against using QSL systems for encryption or decryption purposes.

\section{Shor's Algorithm Factoring 15}\label{sec:Shor's}

Shor's algorithm \cite{Shor1994,Shor1999} is one of the quantum algorithms that 
solve a computational problem with real-world applications: to efficiently 
find a factor in a composite number $N$. We have seen that 
Deutsch-Jozsa's and Simon's algorithms can be run in polynomial time on a 
classical Turing machine; the natural continuation is to 
attempt integer factorization, whose hardness is one of the most widely 
believed conjectures in computer science.

While large high-fidelity quantum computers are still far away, several
experimental realizations of Shor's algorithm for small numbers have been
presented~\cite{Monz2016,Lucero2012,Martin-Lopez2012,Politi2009,Lu2007,Lanyon2007,Vandersypen2001}.
These are all very impressive demonstration of quantum optimal control, but an
experimental realization of Shor's algorithm with the currently available 
technology is demanding and this has led to the need for
vast simplifications in the algorithm. There are essentially two parameters 
subject to optimization: bit-depth and 
circuit-depth (see Ref.\cite{Markov2012}\ and the citations therein). Also, the 
approximate quantum Fourier transform \cite{Coppersmith2002} is crucial for 
scalability. In this optimization procedure, one has to
be careful not to over-simplify, or make explicit (or implicit) use of 
knowledge about the solution~\cite{Smolin2013}. We will avoid such over-simplifications by using an identical circuit for factoring 15 as in Ref.~\cite{Monz2016}.

Shor's algorithm uses a quantum subroutine that finds the order 
(or period) of an element $a$ in the multiplicative group of integers modulo 
$N$. Here, the order is the smallest integer $r$ such that $a^r=1$ (mod $N$). 
This is sufficient information to find a factor in $N$. The algorithm makes use 
of an input-register quantum state, containing an integer $x$, and an 
output-register in which modular exponentiation $a^x$ (mod $N$) is computed. By 
creating a superposition in the input-register using the quantum Fourier 
transform, performing the calculation, and then inverting the transform, one 
can with high probability retrieve sufficient information to calculate the 
order (see \cref{fig:ShorA}). 

\begin{figure}[t]
	\centering
  \includegraphics[scale=1]{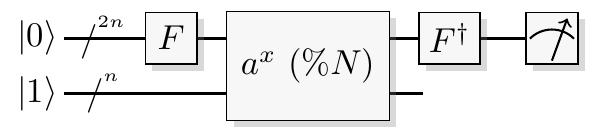}
	\captionof{figure}{Circuit diagram of the quantum subroutine used in Shor's 
		algorithm. A $2n$-qubit register is initiated in the zero-state $\ket{0}$, 
		and an $n$-qubit register in $\ket{1}$. Basis change of the input-register part 
		of the controlled modular exponentiation operator allow for sampling a 
		probability distribution with peaks at $s/r$.}
	\label{fig:ShorA}
\end{figure}

More specifically, this procedure let us 
sample from a probability distribution with peaks at $s/r$, where $s$ is 
uniformly distributed over the integers between $0$ and $r-1$.  
Ideally, the peaks are completely localized to $s/r$ but in most cases there is 
peak broadening due to Fourier leakage, and to ensure that the measurement 
yields a binary fraction sufficiently close to $s/r$, the input-register needs 
to be at least $2n$ qubits in size, where $n$ is the size of the 
output-register which is large enough to perform calculations mod $N$. 
The full procedure to retrieve $r$ is as follows: 

\begin{tcolorbox}
	\begin{algorithm}[\textcite{Shor1994}]
		Proceed with the following steps to retrieve a factor of a composite number with high probability.
		\setlength{\leftmargini}{1.5em}
		\begin{enumerate}
		    \item Pick at random an integer $a\neq \pm1$ modulo $N$. 
				If GCD$(a,N)$ is a nontrivial factor of $N$, we have a solution.
				\item Otherwise generate, setup, and run the quantum \Cref{sub:Shor} (Also shown in 
				\cref{fig:ShorA}) to find a candidate for $s/r$.
				\item Use the continued fraction expansion to retrieve $r$ (or a
				factor in $r$ when $s$ and $r$ has a common factor). 
				\item If $r$ is even, one of GCD$(a^{r/2}\pm 1, N)$ may be a nontrivial 
				factor of $N$. This happens with high probability.
		\end{enumerate}		
	\end{algorithm}
\end{tcolorbox}

\begin{tcolorbox}
	\begin{subroutine}[\textcite{Shor1994}]\label[sub]{sub:Shor} Proceed with the following steps to find a candidate for the quotient $s/r$ where $s$ is random and $r$ is the period of $a^x$ mod $N$.
		\begin{enumerate}
			\item Prepare a $2n$-qubit query register in the state $\ket{0}$ and an $n$-qubit answer register in the state $\ket{1}$. 
			
			\item Apply the quantum Fourier transform to the query register.
			
			\item Apply the specific unitary $U_{a,N}\ket{x}\ket{y}=\ket{x}\ket{y\times a^x \text{ mod }N}$.
			
			\item Apply the inverse quantum Fourier transform to the query register. 
			
			\item Measure the query register to retrieve a candidate for $s/r$.
			\end{enumerate}
			Return that candidate for $s/r$.
	\end{subroutine}
\end{tcolorbox}

For $N=15$ the possible integers that can occur in steps 2-4 are 
$a\in\{2,4,7,8,11,13\}$ (see \cref{fig:ShorC}). 
Note that deliberately choosing $a$ to give a short period that is easy to find, would be another example of over-simplifying the algorithm. 
It is therefore important that the element $a$ is 
chosen randomly~\cite[see e.g.,][]{Smolin2013}. In what follows, we have used 
all alternatives; this is of course only possible because of the small $N$ used.

\begin{figure}[t]
	\centering
	\begin{tikzpicture}[anchor=north west,rounded corners]
	\node (1) [inner sep=-1mm]{\includegraphics[scale=.65]{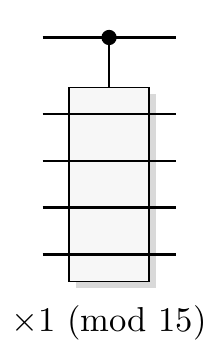}};
	\node at (1.north east) (2)[inner sep=-1mm] {\includegraphics[scale=.65]{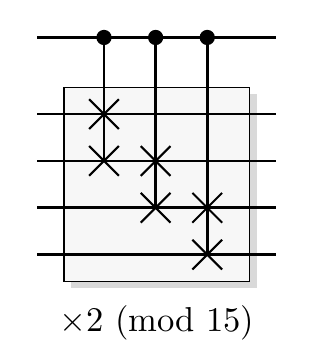}};
	\node at (2.north east) (4)[inner sep=-1mm] {\includegraphics[scale=.65]{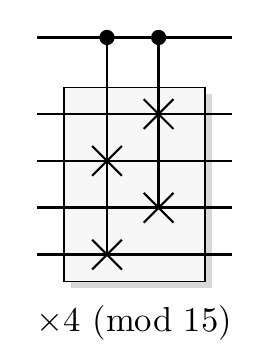}};
	\node at (4.north east) (7)[inner sep=-1mm] {\includegraphics[scale=.65]{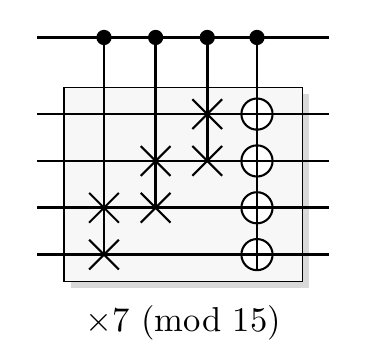}};
	\node at (1.south west) (8)[inner sep=-1mm] {\includegraphics[scale=.65]{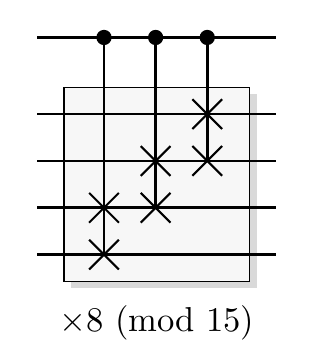}};
	\node at (8.north east) (11)[inner sep=-1mm] {\includegraphics[scale=.65]{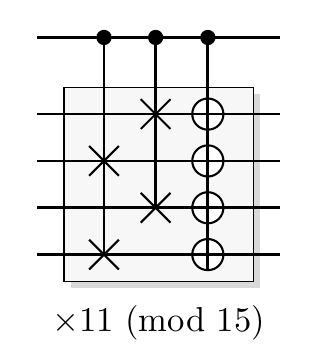}};
	\node at (11.north east) (13)[inner sep=-1mm] {\includegraphics[scale=.65]{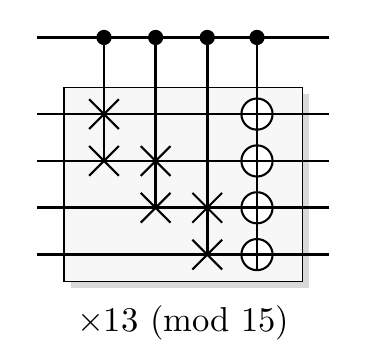}};%
	\end{tikzpicture}
	\captionof{figure}{Controlled modular multipliers that occur in Shor's algorithm.}
	\label{fig:ShorC}
\end{figure}

Some useful simplifications \textit{are} allowed. 
The Fourier transform on the input can be exchanged for the Walsh-Hadamard transform (Hadamard gates), while the inverse Fourier transform can be implemented through 	Hadamards followed by classically controlled single qubit rotations \cite{Griffiths1996}; by advancing the measurement of the controlling qubit and using the outcome as a classical control (see \cref{fig:ShorB}). 
This decouples the $2n$ qubits of the input-register in the 
sense that the procedure of preparation, transformation, and measurement can be 
performed individually on each qubit. It is common to perform these single 
qubit procedures in sequence on one single qubit, a method known as qubit 
recycling, which reduces the overall bit-depth from $3n$ to $n+1$ at the cost 
of circuit-depth.

\begin{figure}[t]
	\centering
  \includegraphics[width=\linewidth]{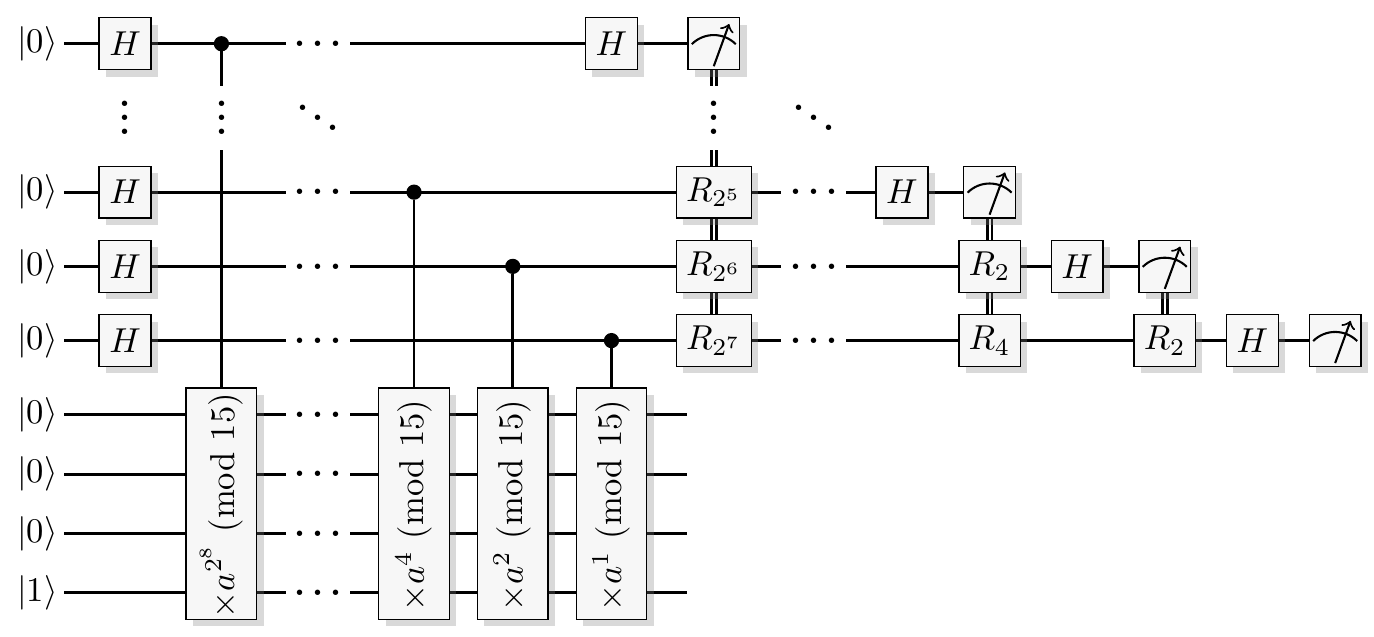}
	\captionof{figure}{Shor's algorithm with semiclassical inverse Fourier 
	transform.  Note that 
    $\times2^2\equiv \times7^2 \equiv \times8^2 \equiv \times13^2 
    \equiv \times4^1$ and $\times4^2\equiv\times11^2\equiv\times1$ (mod $15$), 
    so that many of the controlled multiplications will be identities. 
    Therefore, most rotations $R_{2^k}$ will never be 
	applied (in the ideal situation), in fact only the very last $R_2$ 
	operation can ever occur.}
	\label{fig:ShorB}
\end{figure}

The demonstration of Monz \textit{et al}~\cite{Monz2016} is the most advanced to
date. They refrain from ``compiling'' the circuitry, and use non-Clifford group
operations which is necessary for a scalable version. One
simplification Monz \textit{et al} do use is restricting the resolution of the
input-register from $2n$ qubits to three qubits. This is possible because all
elements in the multiplicative group mod 15 have power-of-two periods, and they
only verify the behavior of the exponentiation until and including the first
that is equivalent to the identity map (see \cref{fig:ShorC}).

It has been argued~\cite{Cao2015} that the three-bit precision of Monz 
\textit{et al} is insufficient since $2n$ bits are required for the algorithm 
to overcome Fourier leakage in general, and to succeed with a bounded error rate 
required for scalability. However, for $N=15$ there is no Fourier leakage 
because of the power-of-two periods and this is clear when building the 
quantum gate array. Therefore, measuring more qubits will only add noise and not 
precision. The distribution is completely described at two bits of precision. 
On the other hand, this also means that the process of generating the circuitry 
solves the factoring problem, since it is enough to know at which point the 
identity emerges from exponentiation. Thus, the experiment \cite{Monz2016} is 
not so much about factoring 15 through \textsc{Shor's} algorithm, but more a 
verification that their quantum (ion-trap) realization is good enough to 
approximate the ideal quantum statistics even when not using a compiled circuit.

We now present an experimental realization of Shor's algorithm, factoring 15, using QSL. 
Please note that the motivation for doing this is not to factor 15, but to 
compare our completely classical construction with the current state-of-the-art quantum implementations, and to show that QSL is applicable outside the oracle paradigm. 
We use basically 
the same algorithmic setup, but employ the semiclassical Fourier transform 
without qubit recycling, and the emerging identity
operations are of course all omitted. Our setup is similar to that of Monz 
\emph{et al.} (see \cref{fig:ShorB}) and uses the multiplication 
operators from Markov et al. \cite{Markov2012}, but avoids precomputing their 
effect on the initial state (see \cref{fig:ShorC}).

\begin{figure}[]
    \begin{center}
        \includegraphics[width=1\linewidth]{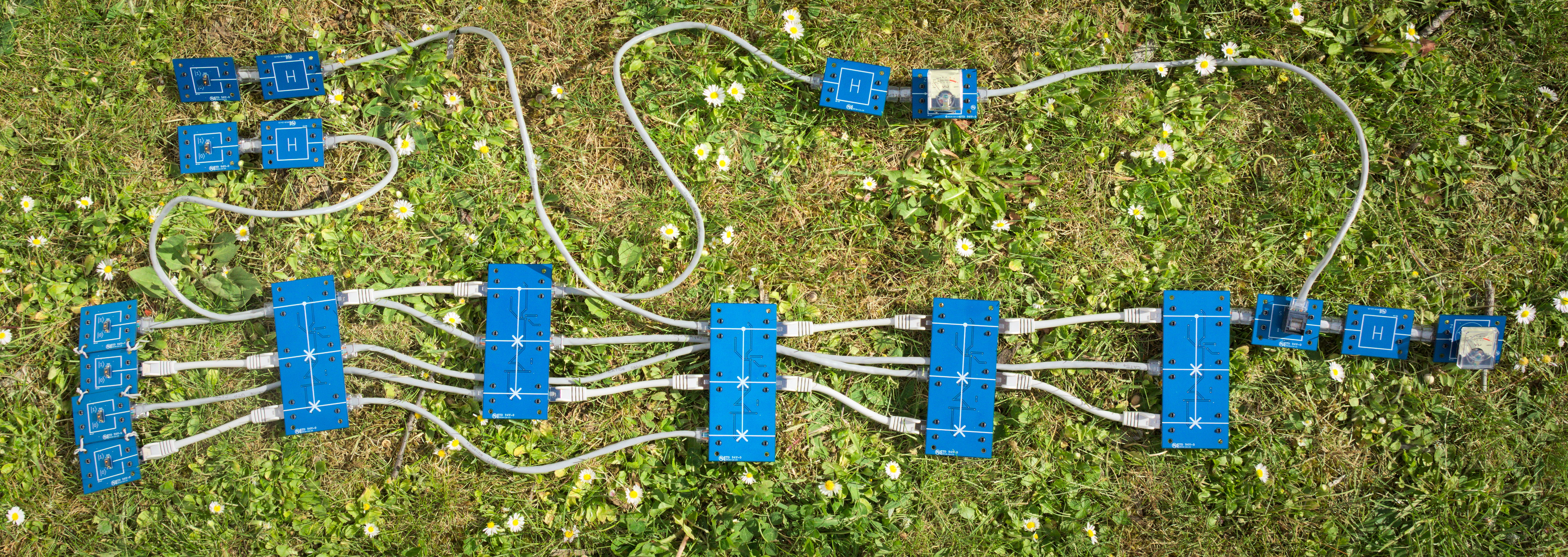}%
    \end{center}
    \captionof{figure}{{QSL realization of Shor's algorithm.} For the case $a=8$ 
    so that the modular multipliers used are, 
    $\times8^2\equiv\times4$ (mod 15), and $\times8$ (mod 15).}
    \label{fig:QSL_lawn}
\end{figure}

\begin{figure}[]
	\begin{center}
		\includegraphics[scale=.39,clip,trim=.5cm 0cm 12.15cm 0cm]{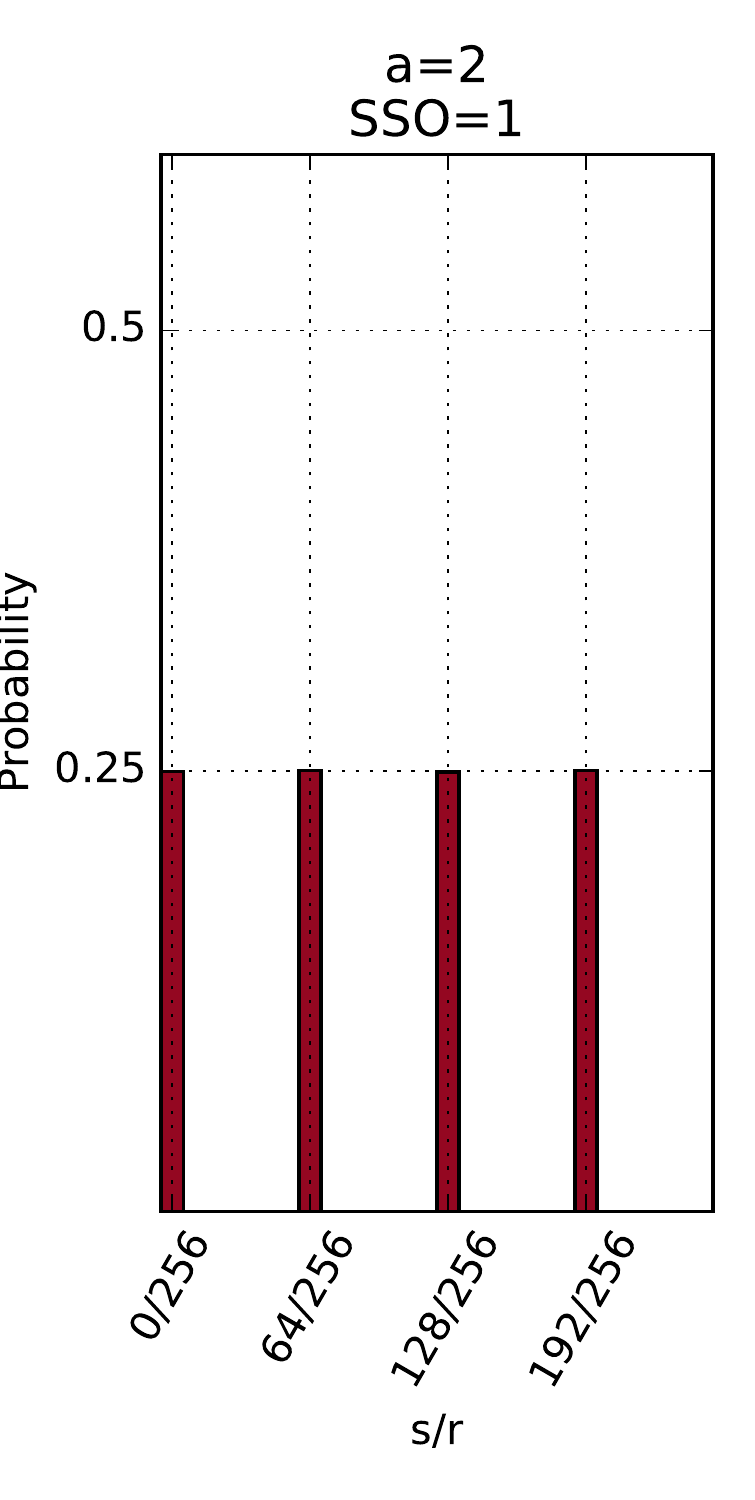}
		\hspace{0cm}\includegraphics[scale=.39,clip,trim=0cm 0cm 12.55cm 
		0cm]{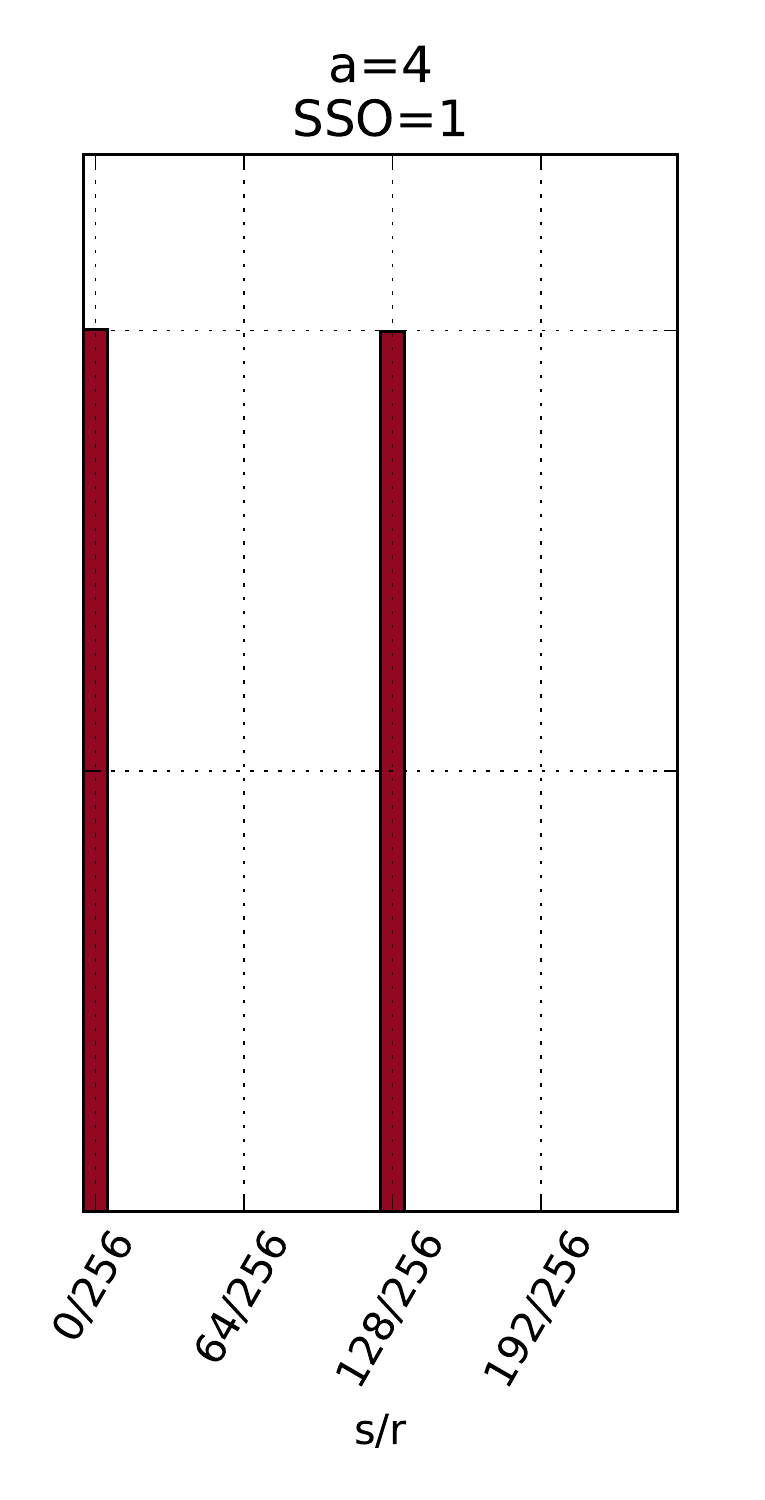}
		\hspace{0cm}\includegraphics[scale=.39,clip,trim=.5cm 0cm 12.55cm 
		0cm]{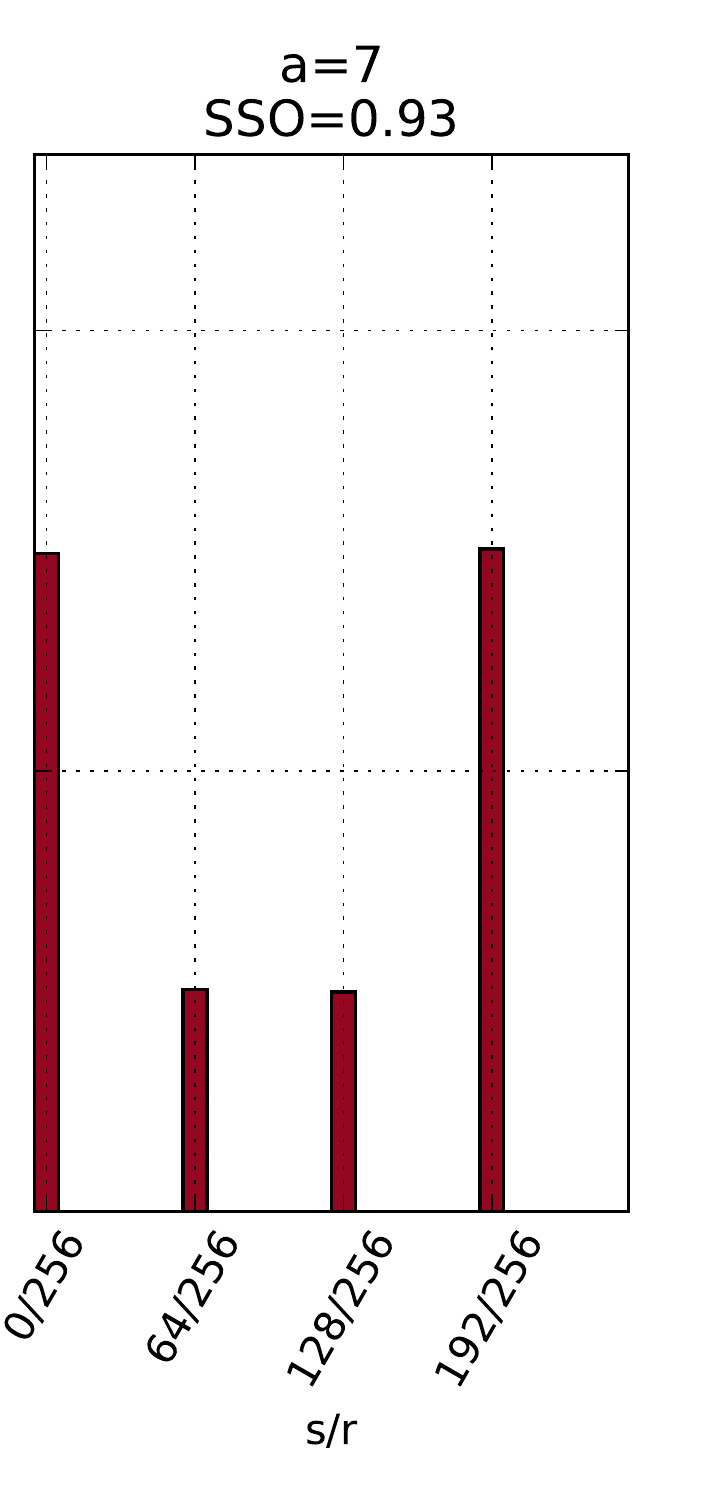}
		\hspace*{.0cm}\includegraphics[scale=.39,clip,trim=.5cm 0cm 12.55cm 
		0cm]{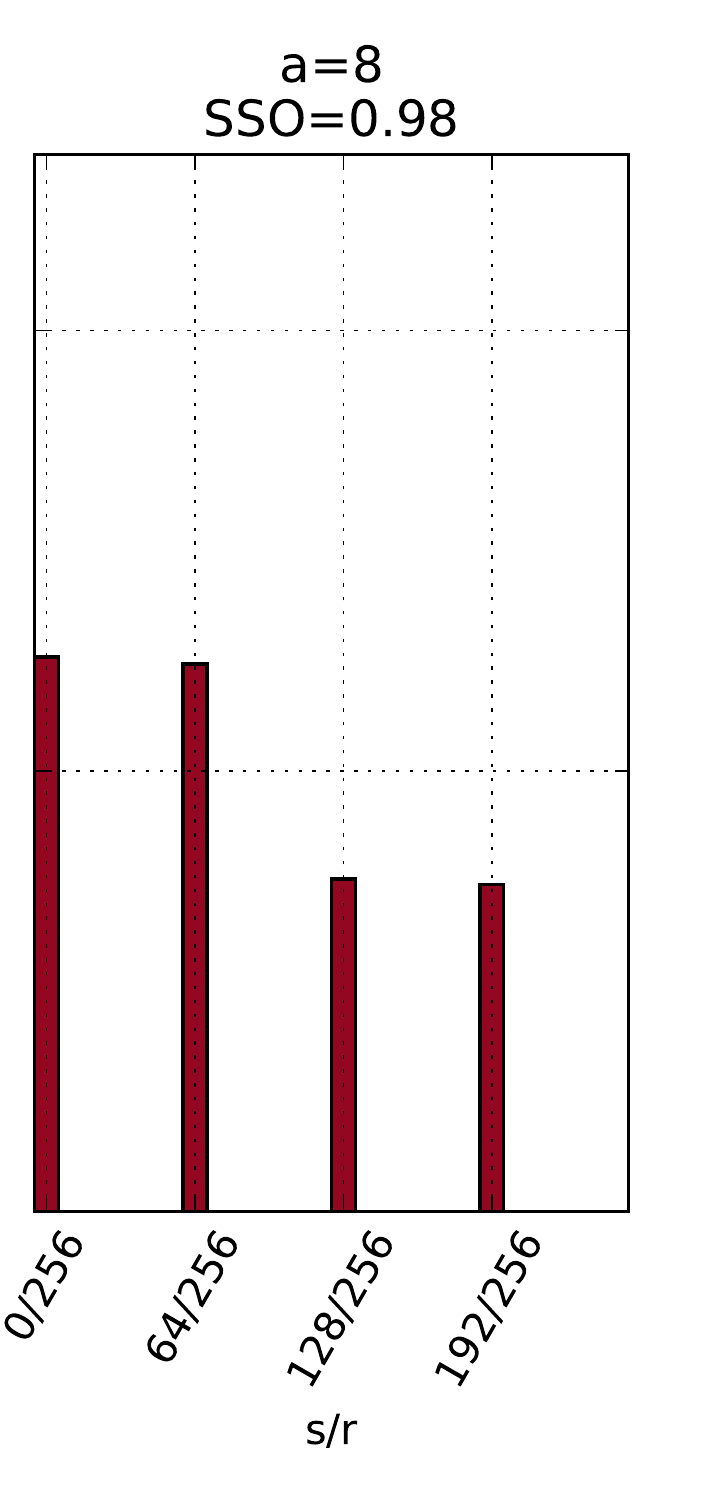}
		\hspace{-0cm}\includegraphics[scale=.39,clip,trim=.5cm 0cm 12.55cm 
		0cm]{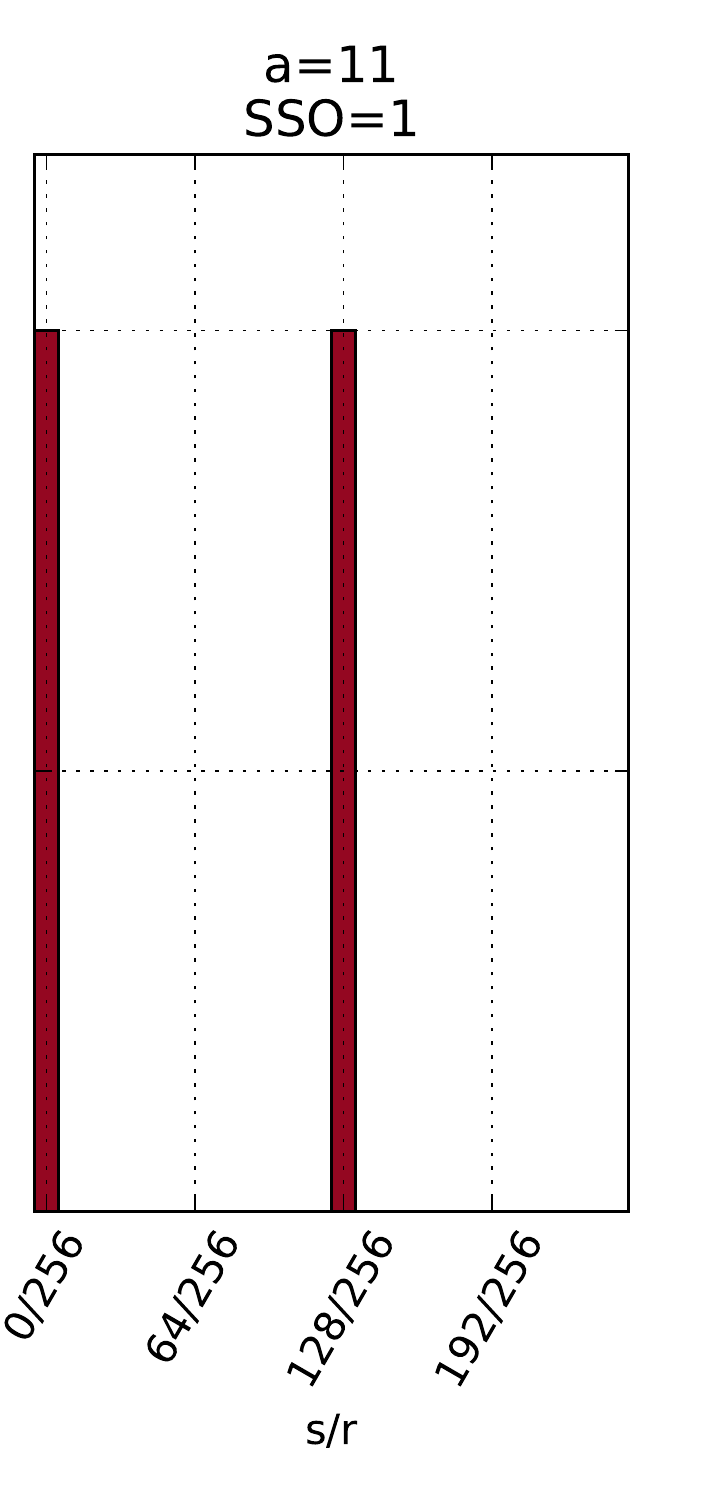}
		\hspace{-0cm}\includegraphics[scale=.39,clip,trim=.5cm 0cm 12.55cm 
		0cm]{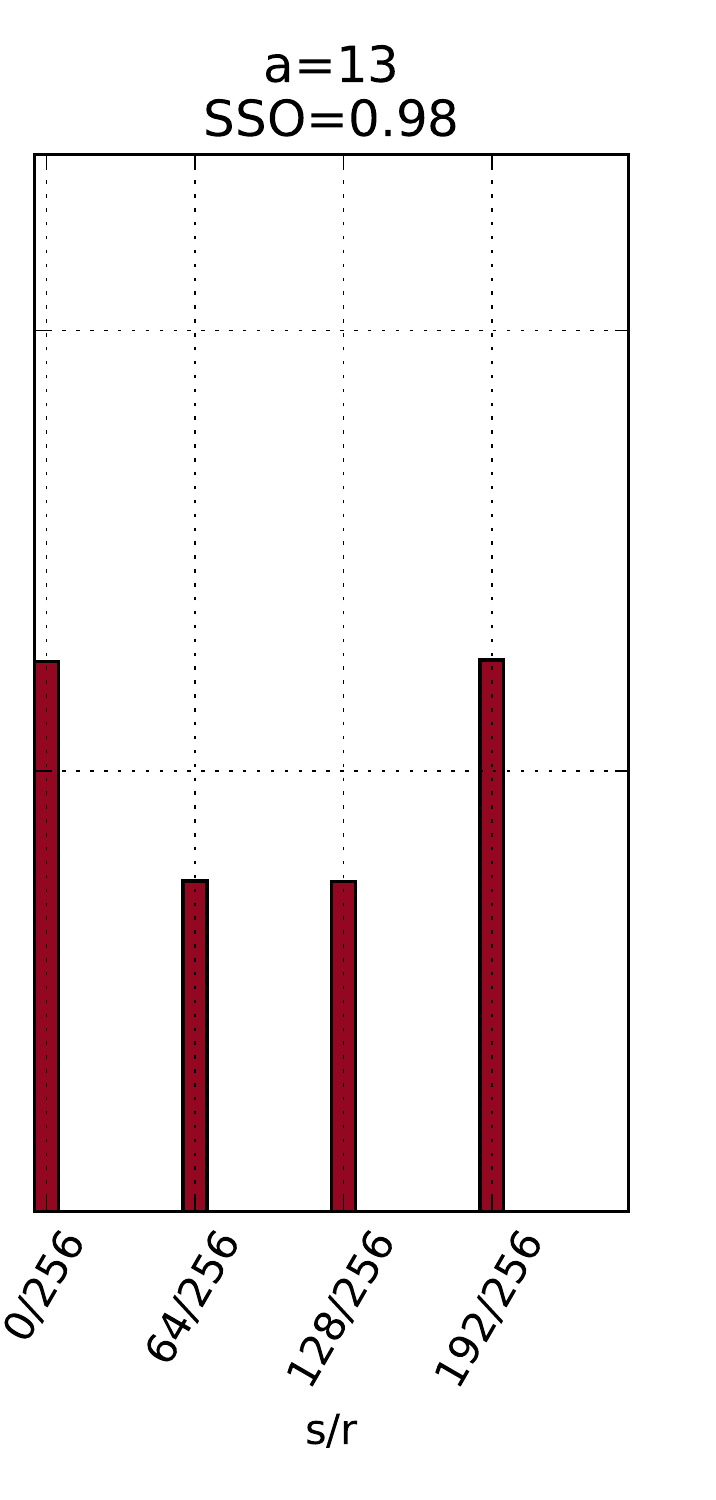}
	\end{center}
	\captionof{figure}{{Estimated output probability.} Point estimates of the 
	output probability distributions of the subroutine, for the non-trivial 
	elements in the multiplicative group of integers mod 15. Each 
	distribution is estimated from $10^6$ samples.}
	\label{fig:distribution}
\end{figure}

We have chosen a hardware 
implementation in 2-complementary reversible pass-transistor
logic, specifically using transmission gates~\cite{Caroe2012}. These are
built using currently available semiconductor technology and operate in the digital regime using TTL levels. The random number generators of the source and measurement devices use high-gain analog amplifiers to sample random bits from noise at a reverse-biased PN junction.

The output probability distributions are estimated from $10^6$ samples for each
element $a\in\{2,4,7,8,11,13\}$ (see \cref{fig:distribution}). We see that
the distributions for $a=7,8,13$ are not uniform over their support, as
predicted by quantum theory, but they are closer to the ideal distributions than any
other implementation that we know of. To analyze their performance, Monz et al.
used the \emph{square statistical overlap} (SSO) as a fidelity measure. Our
implementation reaches an SSO of $\{$0.9999(1), 0.9999(1), 0.933(3), 0.984(2),
0.9999(1), 0.984(2)$\}$ for $a\in\{2,4,7,8,11,13\}$ respectively (statistical
errors as one standard deviation). Notably, our implementation gives the same
probability (0.5) of returning a good candidate for $r$ as the ideal quantum
subroutine.

Like any other physical implementation of quantum gates, our 
simulation suffers from systematic errors. This will result in an error 
propagation that suppresses the amount of useful information that we can 
retrieve when scaling the algorithm to larger numbers --- even though there is 
no practical restriction for doing so. Further work is needed, beyond the scope of this paper, to reduce these systematic errors, perhaps by altering the framework or using error correcting techniques, so that it becomes 
useful for larger instances.

\section{Conclusion} \label{sec:Conclusion}

In this paper, we have looked at the QSL simulation framework, which is
efficiently simulatable on a classical Turing machine and can at the same time
reproduce many quantum phenomena. These phenomena include results from quantum
query complexity in which the query complexity is exactly the same in QSL as in
quantum theory. In the circuit model there is only a constant overhead in
classical resources, which in turn gives a polynomial time simulation in a
classical probabilistic Turing machine.

Using this approach there is no quantum advantage with respect to the QSL
implementation for the \textsc{Deutsch-Jozsa}, \textsc{Bernstein-Vazirani}, or
\textsc{Simon's} problem. For \textsc{Grover} search QSL has a slight
advantage over the classical case, reaching $\O(2^n/n)$ queries compared to $\O(2^n)$ in the classical
case, while the quantum algorithm retains an advantage using $\O(2^{n/2})$ queries. Outside the oracle paradigm, the choices made in the QSL construction gives less of an advantage, but does enable small examples, which we have
illustrated by simulating the setup of two recent experimental realizations of
\textsc{Shor's} and \textsc{Grover's} algorithms.

Regarding \textsc{Deutsch-Jozsa}, \textsc{Bernstein-Vazirani}, and
\textsc{Simon's} problem we can also conclude that no genuinely quantum
resources are required, i.e., that the resources used can be efficiently
simulated on a classical Turing machine. The reason is that the resources needed
are also available in QSL. The inclusion of phase information in QSL gives rise
to phenomena that have previously been suggested resources for the quantum
advantage, including superposition, interference, and some aspects of
entanglement among others. We believe that it is more appropriate to assign the
speed-up in these cases to the ability to store, process and retrieve
information in another information-carrying degree-of-freedom, in this case the 
phase degree-of-freedom.

Functions in quantum computation are realized as dynamics from physical systems
to physical systems, and not as mere function maps from bits to bits. Tracing
the full dynamics of such a physical system may require keeping track of an
exponential amount of information. The available efforts to efficiently trace
the dynamics used in quantum computation usually adopt a top-down approach,
restricting the available dynamics until the necessary information can be
efficiently traced. For instance, the stabilizer subtheory is the restriction to
Pauli measurements and Clifford gates, and this can be traced efficiently. The
approach presented here instead uses a bottom-up approach, that extends the
standard function maps to also encompass the phase information within each
individual qubit. Even though simulating each qubit with a single classical bit
is not enough, we have found that two classical bits per qubit gets us very far.

This clearly shows that comparing bounds from classical query complexity with
those from quantum query complexity does not provide conclusive evidence for
a quantum advantage. A quantum oracle is so vastly different from a classical
oracle that it is questionable whether they are comparable at all. The
comparison between quantum oracle and QSL oracle is much more well-founded,
since both enable a choice between retrieving function value or additional
information related to the function, stored in the phase degree of freedom. All
it takes to enable the same behavior for a QSL oracle as a quantum oracle, in
the above examples, is a single bit of extra (phase) information, i.e., a
constant overhead. This leaves a lot of headroom for improvement, as allowing
for a polynomial overhead will enable a more accurate simulation.

In conclusion, the enabling root cause (resource) for the quantum speed-up in
the mentioned examples is not superposition, interference, entanglement,
contextuality, the continuity of quantum state space, or quantum coherence. It
is the ability to store, process and retrieve information in an additional
information-carrying degree-of-freedom in the physical system being used as
information carrier.

\section*{Acknowledgements}
We acknowledge Markus Grassl, Peter Jonsson, and Matthias Kleinmann for valuable discussions. 

\section*{References}\vspace{-.6cm}
\printbibliography[heading=subbibliography, title={\null}]

\begin{appendix}
\section{Constant and Balanced Functions for Three Bits of Input}\label{sec:B}
Circuit representation of all constant and balanced functions for 3-bit input. 
The values below each gate array are the function values concatenated in the order $f(7)f(6)f(5)f(4)f(3)f(2)f(1)f(0)$.
\subsection*{Toffolis: 0, CNOTS: 0}
\begin{center}
\tikz{\node(a){\includegraphics[scale=0.7]{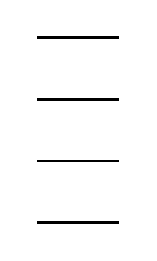}};
	\node[below] at (a.south){00000000};} 
\tikz{\node(a){\includegraphics[scale=0.7]{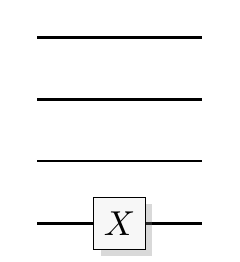}};
	\node[below] at (a.south){11111111};} 
\end{center}
\subsection*{Toffolis: 0, CNOTS: 1}
\begin{center}
\tikz{\node(a){\includegraphics[scale=0.7]{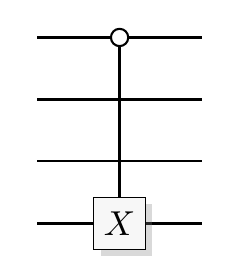}};
	\node[below] at (a.south){00001111};} 
\tikz{\node(a){\includegraphics[scale=0.7]{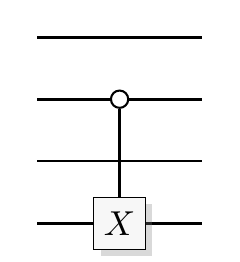}};
	\node[below] at (a.south){00110011};} 
\tikz{\node(a){\includegraphics[scale=0.7]{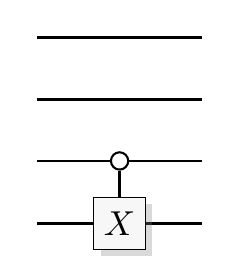}};
	\node[below] at (a.south){01010101};} \\
\tikz{\node(a){\includegraphics[scale=0.7]{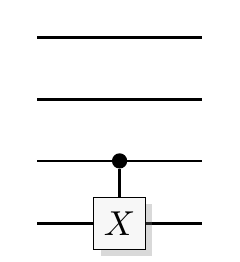}};
	\node[below] at (a.south){10101010};} 
\tikz{\node(a){\includegraphics[scale=0.7]{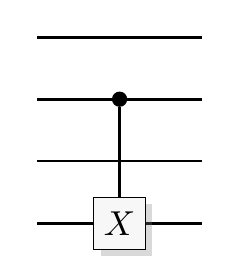}};
	\node[below] at (a.south){11001100};} 
\tikz{\node(a){\includegraphics[scale=0.7]{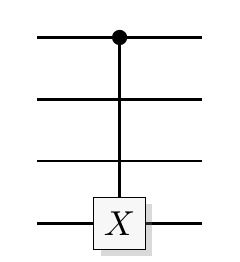}};
	\node[below] at (a.south){11110000};} 
\end{center}
\subsection*{Toffolis: 0, CNOTS: 2}
\begin{center}
\tikz{\node(a){\includegraphics[scale=0.7]{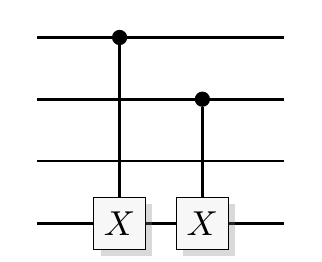}};
	\node[below] at (a.south){00111100};} 
\tikz{\node(a){\includegraphics[scale=0.7]{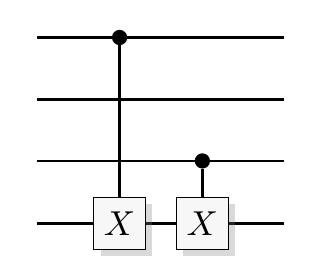}};
	\node[below] at (a.south){01011010};} 
\tikz{\node(a){\includegraphics[scale=0.7]{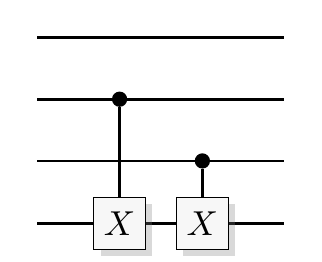}};
	\node[below] at (a.south){01100110};} 
\tikz{\node(a){\includegraphics[scale=0.7]{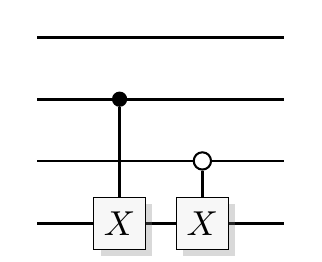}};
	\node[below] at (a.south){10011001};} 
\tikz{\node(a){\includegraphics[scale=0.7]{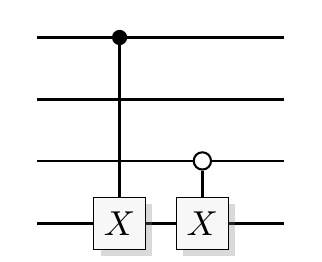}};
	\node[below] at (a.south){10100101};} 
\tikz{\node(a){\includegraphics[scale=0.7]{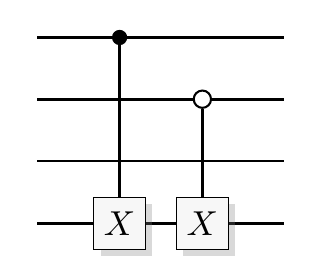}};
	\node[below] at (a.south){11000011};} 
\end{center}
\subsection*{Toffolis: 0, CNOTS: 3}
\begin{center}
\tikz{\node(a){\includegraphics[scale=0.7]{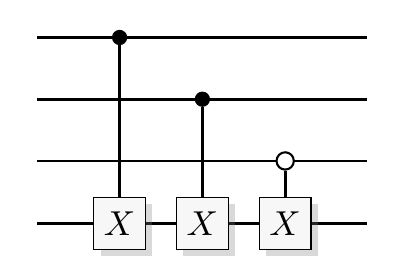}};
	\node[below] at (a.south){01101001};} 
\tikz{\node(a){\includegraphics[scale=0.7]{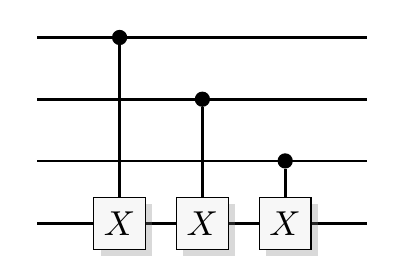}};
	\node[below] at (a.south){10010110};} 
\end{center}
\subsection*{Toffolis: 1, CNOTS: 1}
\begin{center}
\tikz{\node(a){\includegraphics[scale=0.7]{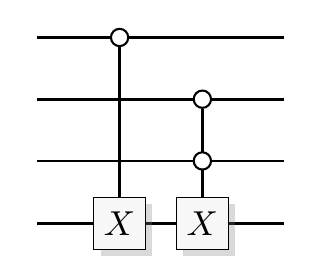}};
	\node[below] at (a.south){00011110};} 
\tikz{\node(a){\includegraphics[scale=0.7]{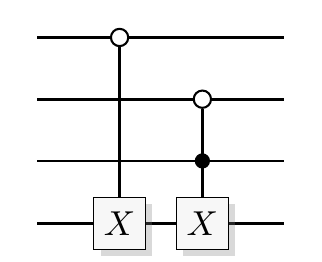}};
	\node[below] at (a.south){00101101};} 
\tikz{\node(a){\includegraphics[scale=0.7]{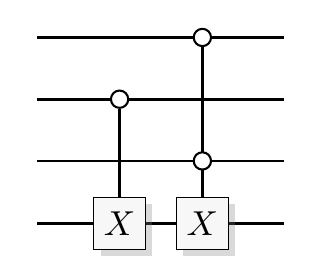}};
	\node[below] at (a.south){00110110};} 
\tikz{\node(a){\includegraphics[scale=0.7]{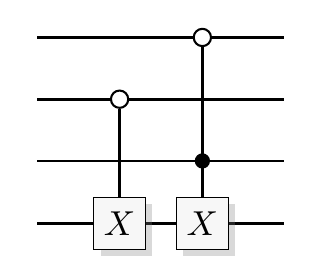}};
	\node[below] at (a.south){00111001};} 
\tikz{\node(a){\includegraphics[scale=0.7]{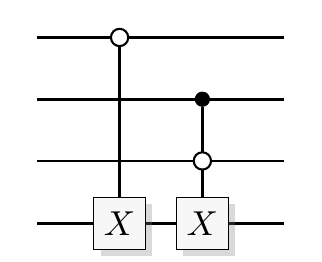}};
	\node[below] at (a.south){01001011};} 
\tikz{\node(a){\includegraphics[scale=0.7]{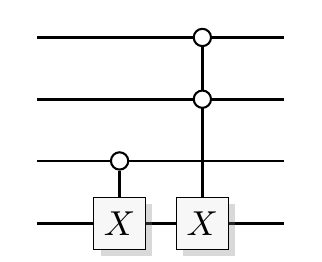}};
	\node[below] at (a.south){01010110};} 
\tikz{\node(a){\includegraphics[scale=0.7]{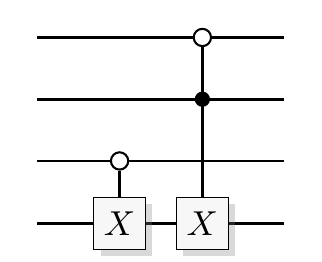}};
	\node[below] at (a.south){01011001};} 
\tikz{\node(a){\includegraphics[scale=0.7]{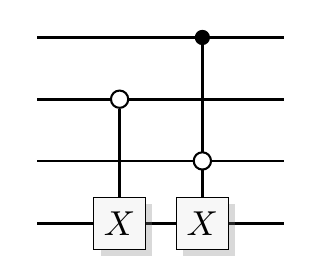}};
	\node[below] at (a.south){01100011};} 
\tikz{\node(a){\includegraphics[scale=0.7]{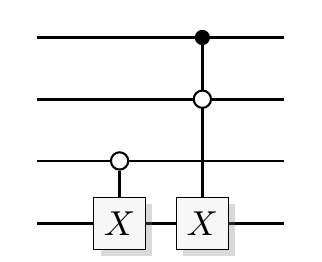}};
	\node[below] at (a.south){01100101};} 
\tikz{\node(a){\includegraphics[scale=0.7]{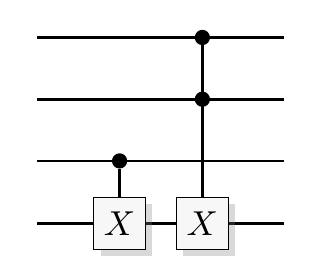}};
	\node[below] at (a.south){01101010};} 
\tikz{\node(a){\includegraphics[scale=0.7]{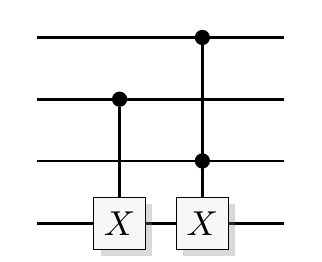}};
	\node[below] at (a.south){01101100};} 
\tikz{\node(a){\includegraphics[scale=0.7]{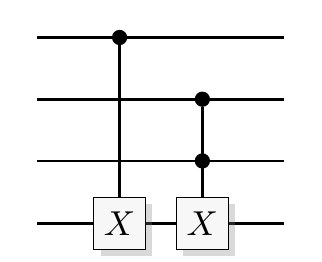}};
	\node[below] at (a.south){01111000};} 
\tikz{\node(a){\includegraphics[scale=0.7]{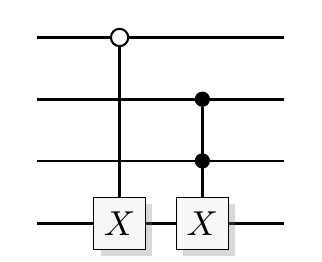}};
	\node[below] at (a.south){10000111};} 
\tikz{\node(a){\includegraphics[scale=0.7]{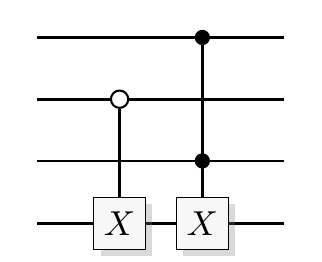}};
	\node[below] at (a.south){10010011};} 
\tikz{\node(a){\includegraphics[scale=0.7]{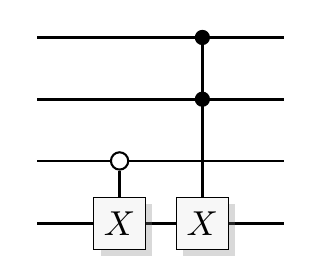}};
	\node[below] at (a.south){10010101};} 
\tikz{\node(a){\includegraphics[scale=0.7]{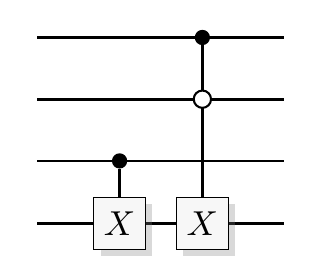}};
	\node[below] at (a.south){10011010};} 
\tikz{\node(a){\includegraphics[scale=0.7]{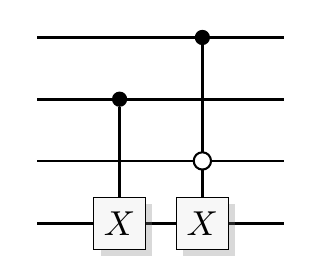}};
	\node[below] at (a.south){10011100};} 
\tikz{\node(a){\includegraphics[scale=0.7]{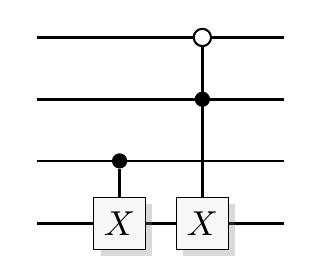}};
	\node[below] at (a.south){10100110};} 
\tikz{\node(a){\includegraphics[scale=0.7]{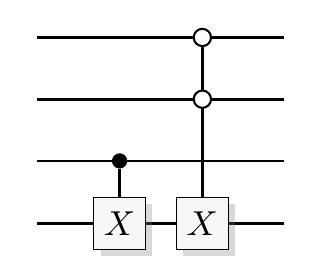}};
	\node[below] at (a.south){10101001};} 
\tikz{\node(a){\includegraphics[scale=0.7]{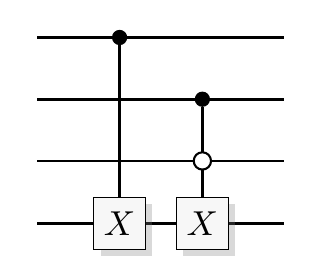}};
	\node[below] at (a.south){10110100};} 
\tikz{\node(a){\includegraphics[scale=0.7]{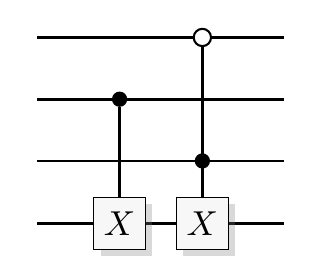}};
	\node[below] at (a.south){11000110};} 
\tikz{\node(a){\includegraphics[scale=0.7]{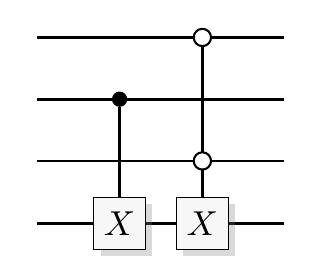}};
	\node[below] at (a.south){11001001};} 
\tikz{\node(a){\includegraphics[scale=0.7]{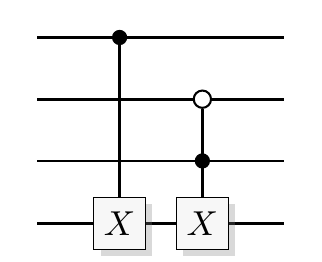}};
	\node[below] at (a.south){11010010};} 
\tikz{\node(a){\includegraphics[scale=0.7]{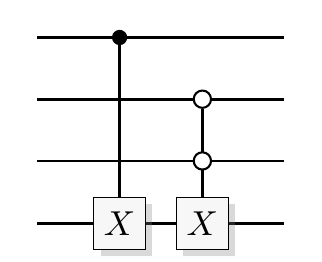}};
	\node[below] at (a.south){11100001};} 
\end{center}
\subsection*{Toffolis: 1, CNOTS: 3}
\begin{center}
\tikz{\node(a){\includegraphics[scale=0.7]{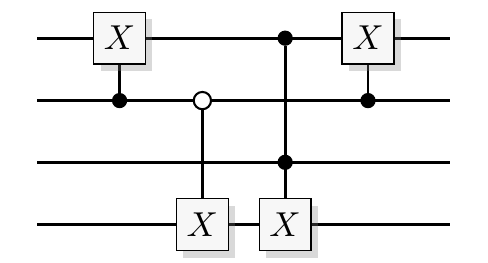}};
	\node[below] at (a.south){00011011};} 
\tikz{\node(a){\includegraphics[scale=0.7]{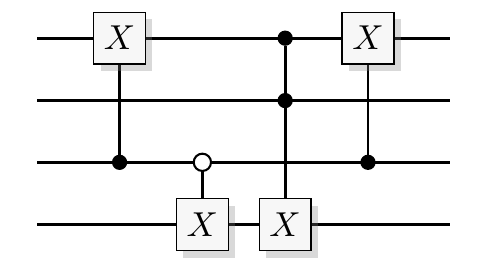}};
	\node[below] at (a.south){00011101};} 
\tikz{\node(a){\includegraphics[scale=0.7]{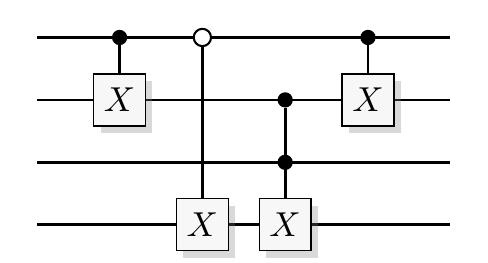}};
	\node[below] at (a.south){00100111};} 
\tikz{\node(a){\includegraphics[scale=0.7]{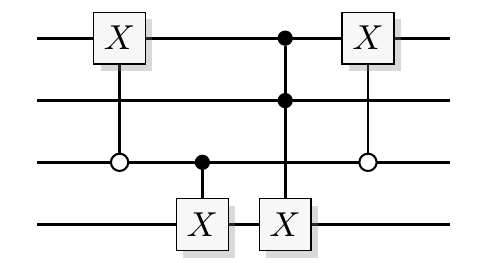}};
	\node[below] at (a.south){00101110};} 
\tikz{\node(a){\includegraphics[scale=0.7]{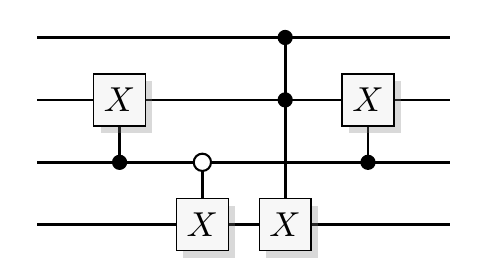}};
	\node[below] at (a.south){00110101};} 
\tikz{\node(a){\includegraphics[scale=0.7]{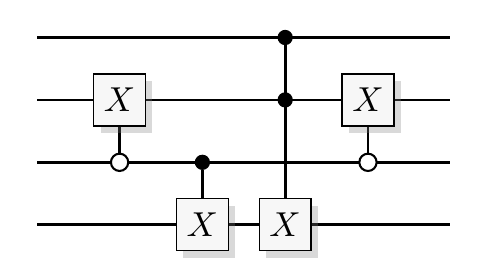}};
	\node[below] at (a.south){00111010};} 
\tikz{\node(a){\includegraphics[scale=0.7]{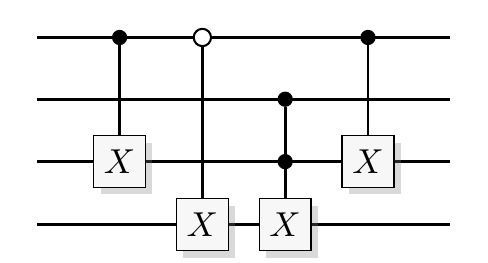}};
	\node[below] at (a.south){01000111};} 
\tikz{\node(a){\includegraphics[scale=0.7]{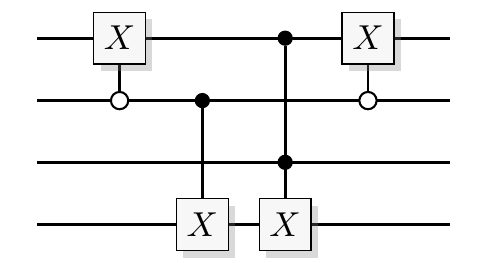}};
	\node[below] at (a.south){01001110};} 
\tikz{\node(a){\includegraphics[scale=0.7]{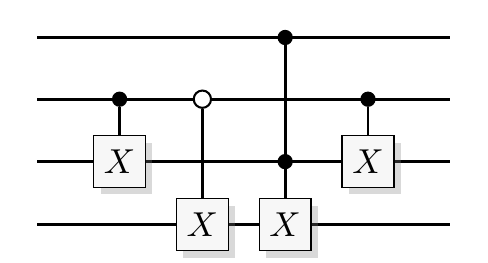}};
	\node[below] at (a.south){01010011};} 
\tikz{\node(a){\includegraphics[scale=0.7]{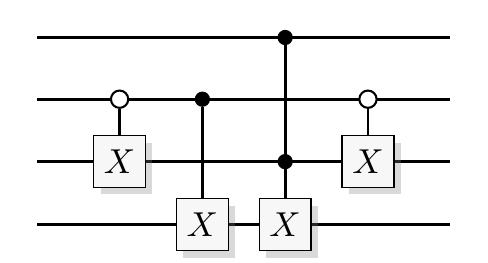}};
	\node[below] at (a.south){01011100};} 
\tikz{\node(a){\includegraphics[scale=0.7]{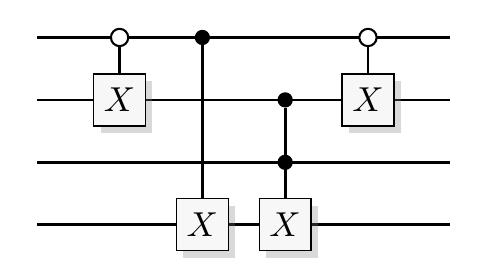}};
	\node[below] at (a.south){01110010};} 
\tikz{\node(a){\includegraphics[scale=0.7]{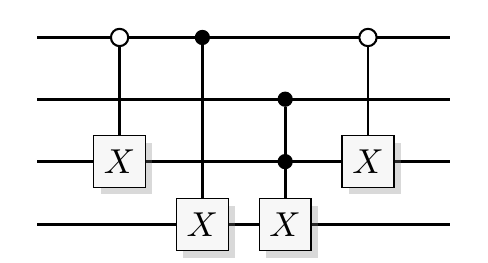}};
	\node[below] at (a.south){01110100};} 
\tikz{\node(a){\includegraphics[scale=0.7]{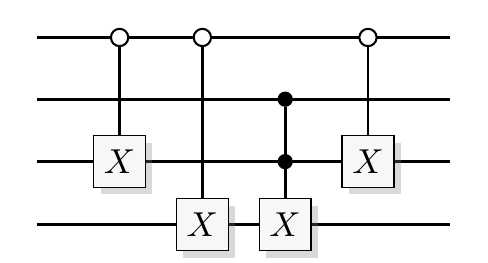}};
	\node[below] at (a.south){10001011};} 
\tikz{\node(a){\includegraphics[scale=0.7]{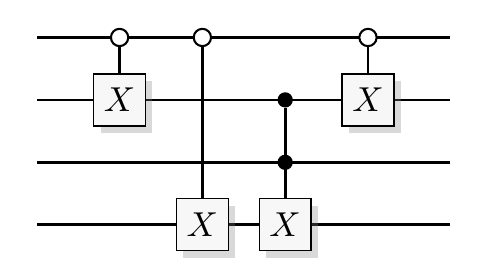}};
	\node[below] at (a.south){10001101};} 
\tikz{\node(a){\includegraphics[scale=0.7]{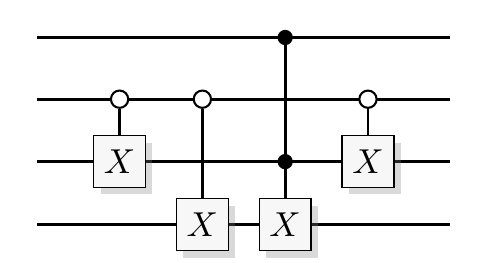}};
	\node[below] at (a.south){10100011};} 
\tikz{\node(a){\includegraphics[scale=0.7]{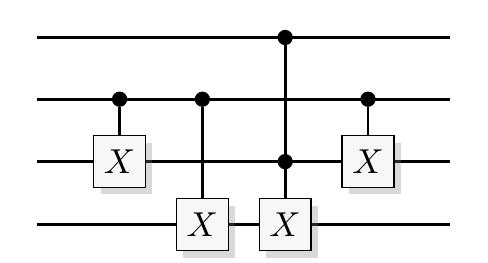}};
	\node[below] at (a.south){10101100};} 
\tikz{\node(a){\includegraphics[scale=0.7]{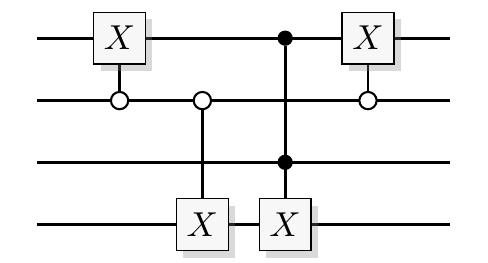}};
	\node[below] at (a.south){10110001};} 
\tikz{\node(a){\includegraphics[scale=0.7]{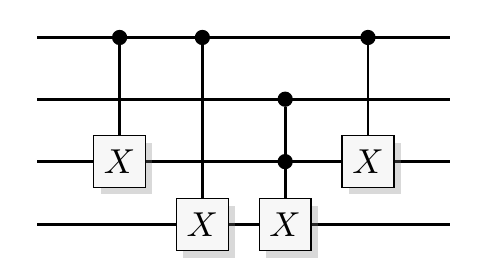}};
	\node[below] at (a.south){10111000};} 
\tikz{\node(a){\includegraphics[scale=0.7]{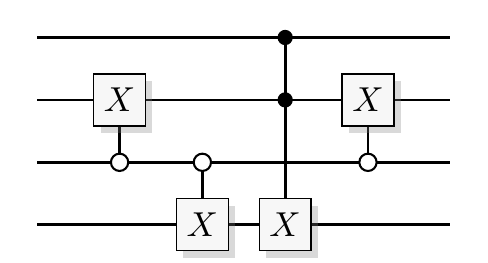}};
	\node[below] at (a.south){11000101};} 
\tikz{\node(a){\includegraphics[scale=0.7]{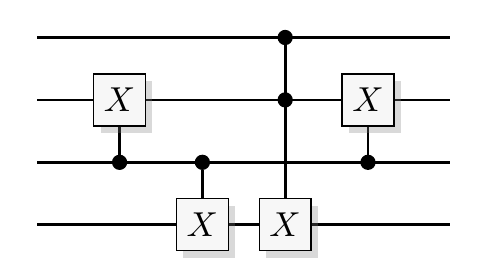}};
	\node[below] at (a.south){11001010};} 
\tikz{\node(a){\includegraphics[scale=0.7]{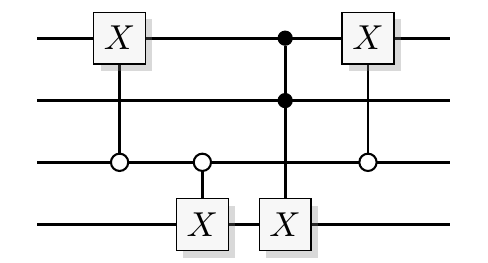}};
	\node[below] at (a.south){11010001};} 
\tikz{\node(a){\includegraphics[scale=0.7]{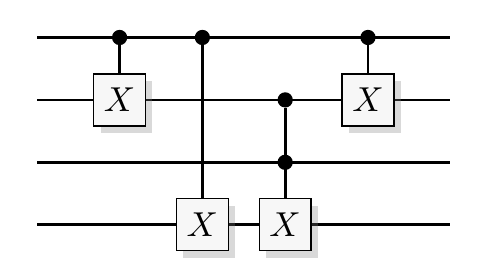}};
	\node[below] at (a.south){11011000};} 
\tikz{\node(a){\includegraphics[scale=0.7]{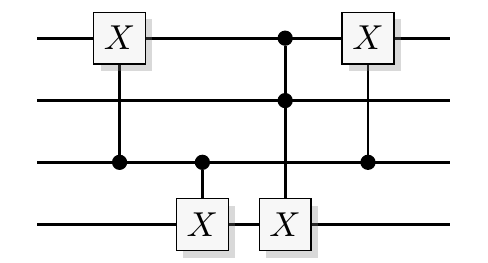}};
	\node[below] at (a.south){11100010};} 
\tikz{\node(a){\includegraphics[scale=0.7]{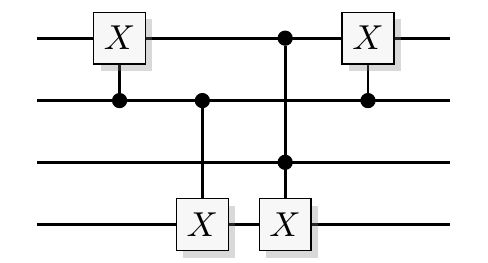}};
	\node[below] at (a.south){11100100};} 
\end{center}
\subsection*{Toffolis: 1, CNOTS: 5}
\begin{center}
\tikz{\node(a){\includegraphics[scale=0.7]{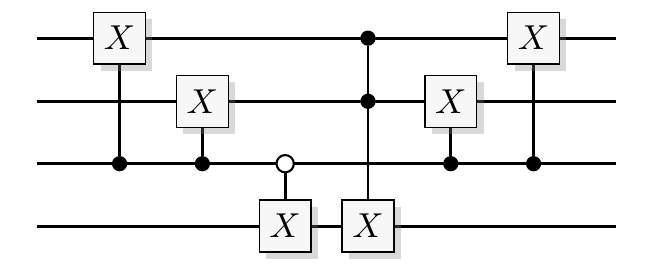}};
	\node[below] at (a.south){00010111};} 
\tikz{\node(a){\includegraphics[scale=0.7]{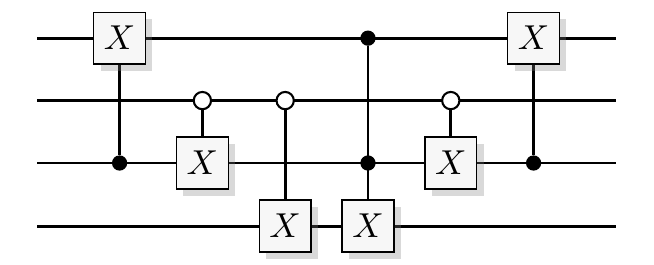}};
	\node[below] at (a.south){00101011};} 
\tikz{\node(a){\includegraphics[scale=0.7]{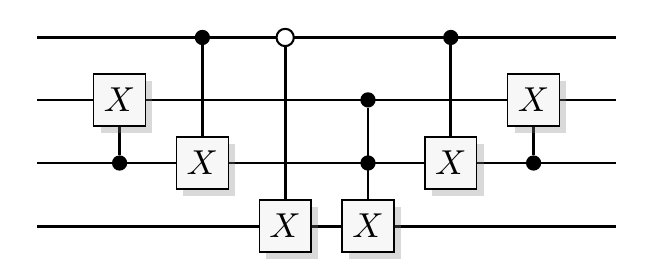}};
	\node[below] at (a.south){01001101};} 
\tikz{\node(a){\includegraphics[scale=0.7]{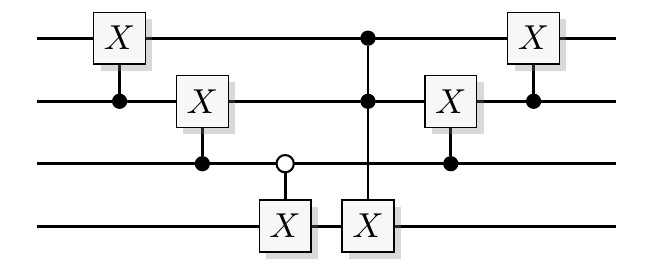}};
	\node[below] at (a.south){01110001};} 
\tikz{\node(a){\includegraphics[scale=0.7]{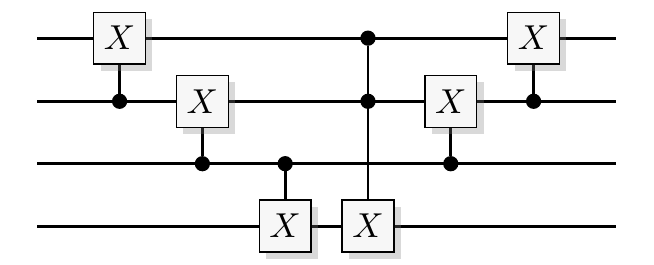}};
	\node[below] at (a.south){10001110};} 
\tikz{\node(a){\includegraphics[scale=0.7]{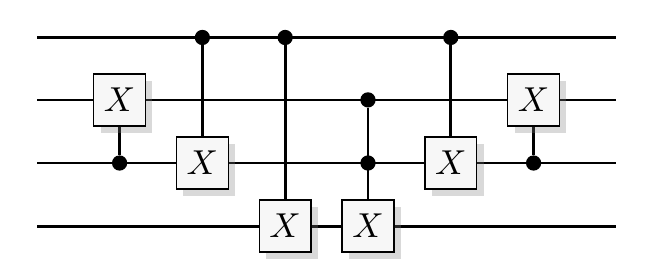}};
	\node[below] at (a.south){10110010};} 
\tikz{\node(a){\includegraphics[scale=0.7]{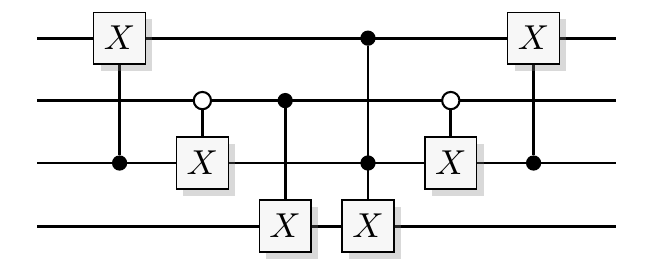}};
	\node[below] at (a.south){11010100};} 
\tikz{\node(a){\includegraphics[scale=0.7]{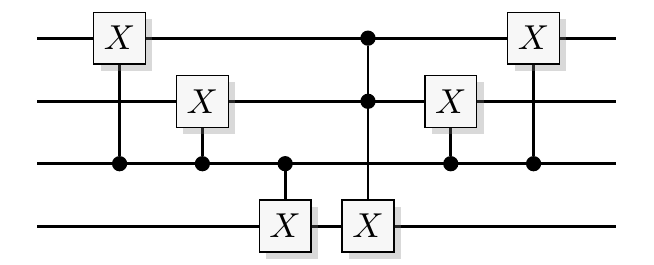}};
	\node[below] at (a.south){11101000};} 
\end{center}

\section{Error Probability for Different Constructions of the Majority
Function}\label{sec:c}
In this section we calculate the error probability for the QSL simulation for
when the four different constructions of the majority function (see
\cref{fig:two_majority}) is used as balanced in the \textsc{Deutsch-Jozsa} algorithm.
We
start with the state after the Walsh-Hadamard transform,
$(r_2,0)\;(r_1,0)\;(r_0,0)\;(r_t,1)$. For clarity we omit the
target
QSL-bit, keeping in mind that its phase bit is set.

For the construction in \cref{fig:two_majority}A as the function, the simulation follows.
\begin{equation}
\begin{aligned}
&(r_2,0)\;(r_1,0)\;(r_0,0)\\
\to\ &(r_2,r_1r_0)\;(r_1,r_2r_0)\;(r_0,r_2r_1)\\
\to\ &(r_2,0)\;(r_1,r_0)\;(r_0,r_1)\\
\to\ &(r_2,\overline{r_1}r_0)\;(r_1,r_0 \oplus r_2r_1)\;(r_0,r_1\oplus r_2\overline{r_1})\\
\to\ &(r_2,r_1\oplus r_0)\;(r_1,r_2\oplus r_0)\;(r_0,r_2\oplus r_1)\\
\end{aligned}
\end{equation}
After the Walsh-Hadamard transform and measurement we see that the simulation will wrongfully answer constant if $r_1\oplus r_0$, $r_2\oplus r_0$, and $r_2\oplus r_1$ all equate to zero. This will happen only if $r_2=r_1=r_0=0$ or $r_2=r_1=r_0=1$, and since $r_2,r_1$ and $r_0$ are equally weighted random bits the error probability becomes $1/4$.

For the construction in \cref{fig:two_majority}B as the function, the simulation follows.
\begin{equation}
\begin{aligned}
&(r_2,0)\;(r_1,0)\;(r_0,0)\\
\to\ &(r_2,0)\;(r_1\oplus r_0,0)\;(r_0,0)\\
\to\ &(r_2\oplus r_0,0)\;(r_1\oplus r_0,0)\;(r_0,0)\\
\to\ &(r_2\oplus r_0,0)\;(r_1\oplus r_0,0)\;(r_0\oplus (r_2\oplus r_0)(r_1\oplus r_0),0)\\
\to\ &(r_2\oplus r_0,0)\;(r_1\oplus r_0,0)\;(r_0\oplus (r_2\oplus r_0)(r_1\oplus r_0),1)\\
\to\ &(r_2\oplus r_0,r_1\oplus r_0)\;(r_1\oplus r_0,r_2\oplus r_0)\;(r_0,1)\\
\to\ &(r_2,r_1\oplus r_0)\;(r_1\oplus r_0,r_2\oplus r_0)\;(r_0,\overline{r_1}\oplus r_0)\\
\to\ &(r_2,r_1\oplus r_0)\;(r_1,r_2\oplus r_0)\;(r_0,r_2\oplus\overline{r_1})\\
\end{aligned}
\end{equation}
After the Walsh-Hadamard transform and measurement we see that the simulation will wrongfully answer constant if $r_1\oplus r_0$, $r_2\oplus r_0$, and $r_2\oplus \overline{r_1}$ all equate to zero, but this cannot happen. If we try to assign values and still obey the constraint we will reach a contradiction. Therefore, the error probability is $0$.

For the construction in \cref{fig:two_majority}C as the function, the simulation follows.
\begin{equation}
\begin{aligned}
&(r_2,0)\;(r_1,0)\;(r_0,0)\\
\to\ &(r_2,r_1)\;(r_1,r_2)\;(r_0,0)\\
\to\ &(r_2,r_1)\;(r_1,r_2\oplus r_0)\;(r_0,r_1)\\
\to\ &(r_2,r_1\oplus r_0)\;(r_1,r_2\oplus r_0)\;(r_0,r_2\oplus r_1)\\
\end{aligned}
\end{equation}
After the Walsh-Hadamard transform and measurement we see that the simulation will wrongfully answer constant if $r_1\oplus r_0$, $r_2\oplus r_0$, and $r_2\oplus r_1$ all equate to zero. Again, this will only happen if $r_2=r_1=r_0=0$ or $r_2=r_1=r_0=1$, and since $r_2,r_1$ and $r_0$ are equally weighted random bits the error probability becomes $1/4$.

For the construction in \cref{fig:two_majority}B as the function, the simulation follows.
\begin{equation}
\begin{aligned}
&(r_2,0)\;(r_1,0)\;(r_0,0)\\
\to\ &(r_2\oplus r_0,0)\;(r_1,0)\;(r_0,0)\\
\to\ &(r_2\oplus r_0,0)\;(r_1,0)\;(r_0,1)\\
\to\ &(r_2\oplus r_0,r_0)\;(r_1,0)\;(r_0,\overline {r_2}\oplus r_0)\\
\to\ &(r_2,r_1\oplus r_0)\;(r_1,r_2\oplus r_0)\;(r_0,\overline{r_2}\oplus r_0)\\
\end{aligned}
\end{equation}
We see that the simulation will wrongfully answer constant if $r_1\oplus r_0$, $r_2\oplus r_0$, and $\overline{r_2}\oplus r_0$ all equate to zero. This is a contradiction since $r_2\oplus r_0\neq \overline{r_2}\oplus r_0$. Therefore, the error probability is $0$.

\end{appendix}
\end{multicols}
\end{document}